  \newcommand{\twocases}[5]{#1 =
               \left\{
                \begin{alignedat}{2}
                 &#2 \quad & &\text{if } #3\\
                 &#4 \quad & &\text{if } #5
                \end{alignedat}
               \right.}
  \newcommand{\C}{\mathbb{C}}
  \newcommand{\N}{\mathbb{N}}
  \newcommand{\R}{\mathbb{R}}
  \newcommand{\Z}{\mathbb{Z}}
  \renewcommand{\i}{\mathbf{i}}
  \renewcommand{\j}{\mathbf{j}}
  \renewcommand{\k}{\mathbf{k}}
  \renewcommand{\a}{\mathbf{a}}
  \renewcommand{\b}{\mathbf{b}}
  \renewcommand{\d}{\mathbf{d}}
  \newcommand{\e}{\mathbf{e}}
  \newcommand{\m}{\mathbf{m}}
  \newcommand{\n}{\mathbf{n}}
  \newcommand{\p}{\mathbf{p}}
  \newcommand{\q}{\mathbf{q}}
  \newcommand{\br}{\mathbf{r}}
  \newcommand{\s}{\mathbf{s}}
  \newcommand{\U}{\mathbf{U}}
  \renewcommand{\u}{\mathbf{u}}
  \renewcommand{\v}{\mathbf{v}}
  \newcommand{\w}{\mathbf{w}}
  \newcommand{\x}{\mathbf{x}}
  \newcommand{\y}{\mathbf{y}}
  \newcommand{\z}{\mathbf{z}}
  \newcommand{\0}{\mathbf{0}}
  \newcommand{\1}{\mathbf{1}}
  \newcommand{\Gam}{\mathbf{\Gamma}}
  \newcommand{\bpi}{\Pi_\Gam}
  \newcommand{\bM}{\mathbf{M}}
  \newcommand{\bc}{\mathbf{c}}
  \newcommand{\bC}{\mathbf{C}}
  \newcommand{\cC}{\mathcal{C}}
  \newcommand{\cF}{\mathcal{F}}
  \newcommand{\cP}{\mathcal{P}}
  \newcommand{\lan}{\langle}
  \newcommand{\ran}{\rangle}
  \newcommand{\an}[1]{\lan#1\ran}
  \newcommand{\hs}{\hspace*{\parindent}}
  \newcommand{\cl}{\mathop{\mathrm{cl}\;}\nolimits}
  \newcommand{\dom}{\mathop{\mathrm{dom}}\nolimits}
  \newcommand{\tr}{\mathop{\mathrm{tr}}\nolimits}
  \newcommand{\trans}{^\top}
  \newcommand{\perio}{\mathrm{per}}
  \newcommand{\pres}{\mathrm{pres}}
  \newcommand{\hpres}{\mathrm{hpres}}
  \newcommand{\conv}{\mathrm{conv\;}}
  \newcommand{\inter}{\mathop{\mathrm{int\;}}\nolimits}
  \newcommand{\Cl}{\mathrm{cl\;}}
  \newcommand{\Der}{\mathop{\mathrm{diff}}\nolimits}
  \newcommand{\topo}{\mathrm{top}}
  \newcommand{\vol}{\mathrm{vol}}
  \newcommand{\ri}{\mathop{\mathrm{ri\;}}\nolimits}
  \newcommand{\set}[1]{\{#1\}}
  \newtheorem{theo}{\bfseries \hs Theorem}[section]
  \newtheorem{defn}[theo]{\bfseries \hs Definition}
  \newtheorem{prop}[theo]{\bfseries \hs Proposition}
  \newtheorem{corol}[theo]{\bfseries \hs Corollary}
  \newtheorem{exam}[theo]{\bfseries \hs Example}
  \numberwithin{equation}{section} % Automatically number equations within sections
\begin{document}
  \title{The pressure, densities and first order phase transitions\\associated with multidimensional SOFT}
  \author{
  Shmuel Friedland\thanks{Part of this paper was done while this
  author was New Directions Visiting Professor, AY 2003/4, Institute of Mathematics
  and its Applications, University of Minnesota, Minnesota, MN 55455-0436.}\\
  \texttt{friedlan@uic.edu}
                   \and
  Uri N. Peled
%  \thanks{This author thanks the Caesarea Edmond Benjamin de Rothschild Foundation
%  Institute for Interdisciplinary Applications of Computer Science
%  at the University of Haifa, Israel, for partial support.}
  \\
  \texttt{uripeled@uic.edu}
  }
  \date{Department of Mathematics, Statistics, and Computer Science,\\
        University of Illinois at Chicago\\
        Chicago, Illinois 60607-7045, USA\\
        May 17, 2010}

 \maketitle

 \begin{abstract}
 We study theoretical and computational properties of the pressure function for
 subshifts of finite type on the integer lattice $\Z^d$, multidimensional SOFT, which are called
 Potts models in mathematical physics.
 We show that the pressure is Lipschitz and convex.  We use the properties of convex functions
 in several variables  to show rigorously that the phase transitions of the first order correspond 
 exactly to the points where the pressure is not differentiable.
 We give computable upper and lower bounds for the pressure, which can be arbitrary close to the values
 of the pressure given a sufficient computational power.
 We apply our numerical methods to confirm Baxter's
 heuristic computations for two dimensional monomer-dimer model, and to compute the pressure
 and the density entropy as functions of two variables for the two dimensional monomer-dimer model.
 \end{abstract}

 \noindent {\bf 2000 Mathematics Subject Classification:} 05A16, 28D20, 37M25, 82B20, 82B26.

 \noindent{\bf Keywords:} Pressure, density entropy, multidimensional subshifts of finite type, transfer matrix,
 first order phase transition, monomer-dimer model.
 \section{Introduction}

 The most celebrated models in statistical mechanics are the Ising
 models, introduced by Ising in \cite{Isi}, and their
 generalizations to Potts models \cite{Pot}.  Usually, the
 one-dimensional Ising or Potts models admit a closed-form
 analytical solution and do not exhibit the \emph{phase transition}
 phenomenon, as in the case of the original work of Ising for
 ferromagnetism. The importance of Ising models was demonstrated by
 Onsager's closed-form solution for the two-dimensional
 ferromagnetism model in the zero-field case \cite{Ons}, which does
 exhibit phase transition at exactly one temperature.
 Unfortunately, there are only a handful of known closed-form
 solutions for two-dimensional Potts models, including the dimer
 problem due to Fisher, Kasteleyn and Temperley \cite{Fis},
 \cite{Kas}, \cite{TF}; residual entropy of square ice by Lieb
 \cite{Lie}; hard hexagons by Baxter \cite{Bax2}. See also
 \cite{Bax3}.

 Thus, most of the interesting Potts models, in particular all
 problems in dimension $3$ and up, are treated by ad hoc asymptotic
 expansions or by some kind of numerical solutions, in many
 circumstances with the help of Monte Carlo simulations, which
 usually have a heuristic basis.  The aim of this paper is to
 introduce a new mathematical foundation to this subject, which also
 gives rise to reliable numerical methods, using converging upper
 and lower bounds, for computing the \emph{pressure} and its
 derivatives for known quantities in statistical mechanics. In
 principle, these quantities can be computed to any accuracy given
 sufficient computing power.  The first-order phase transition is
 manifested by a jump in a corresponding directional derivative of
 the pressure, which can be detected within the given precision of the
 computation.  Our approach to the phase transition  is significantly simpler
 than the approaches using the Gibbs equilibrium measures corresponding to the
 pressure, e.g. \cite{AKW,Com,Isr,Rue}.
 In models with one variable, the situation
 is relatively well understood by physicists.  The basic argument of phase transition in Ising
 model is due to Peierls \cite{Pei}.  For more modern account of the physicist's approach see \cite
 [pp' 59]{Gri}.

 We now introduce the main ideas of this paper as nontechnically as
 possible.  Assume that we have a standard lattice $\Z^d$,
 consisting of points in $d$-dimensional space $\R^d$ with integer
 coordinates, which we call \emph{sites}.  Each site $\i =
 (i_1,\ldots,i_d)\trans$ is occupied by exactly one particle, or
 color, out of the set $\an{n} := \{1,\ldots,n\}$ of $n$ distinct
 colors (if we do not insist that every site be occupied, we agree
 to use color $n$ for an unoccupied site). In general, one has a
 local type of restriction on the allowed configurations of the
 colors, which is called a \emph{subshift of finite type}, or
 \emph{SOFT}, known as the hard-core model in physics terminology.
 The exact definition of a SOFT is given in the next section. For an
 example of SOFT, consider the residual entropy of square ice
 studied in \cite{Lie}. This entropy is the exponential growth rate
 of the number of colorings of increasing sequences of squares in
 $\Z^2$ with $n = 3$ colors, subject to the local restriction that
 no two adjacent sites receive the same color.  More generally, we
 consider a nonempty \emph{near neighbor SOFT} (NNSOFT), specified
 by a $d$-tuple $\Gam = (\Gamma_1,\ldots,\Gamma_d)$, where each
 $\Gamma_k \subseteq \an{n} \times \an{n}$ is a digraph whose set of
 vertices is the set $\an{n}$ of colors. Two adjacent sites $\i$ and
 $\i + \e_k$, where $\e_k = (\delta_{1k},\ldots,\delta_{dk})\trans$,
 are allowed to receive the colors $p$ and $q$ respectively only if
 $(p,q)\in \Gamma_k$. We denote the set of all allowed colorings in
 this NNSOFT by $C_\Gam(\Z^d)$.

 We assume for simplicity of the exposition that the Hamiltonian
 of a particle of color $i$ is $u_i \in \R$.
 If this is not the case, as for the Ising model or the monomer-dimer
 model, there is a way to reduce such a model to our model by enlarging
 the number of colors.  We show how to carry out this reduction for the monomer-dimer
 model.

 For $\m = (m_1,\ldots,m_d) \in
 \N^d$, let $\an{\m}$ denote the $d$-dimensional box $\an{m_1}
 \times \cdots \times \an{m_d}$. Let $\phi : \an{\m} \to \an{n}$ be a
 coloring $\an{\m}$ with $n$ colors, i.e., an ensemble of $\vol(\m)
 := m_1 \cdots m_d$ particles of $n$ kinds occupying the sites in
 $\an{\m}$. Let $c_i(\phi)$ be the number of sites in $\an{\m}$
 colored with color $i$. Let $\bc(\phi) = (c_1(\phi),\ldots,c_n(\phi))\trans$
 and $\u  =(u_1,\ldots,u_n)\trans \in \R^n$.  Then the Hamiltonian of
 the system $\phi$ is equal to $\bc(\phi)\trans \u$. The \emph{grand
 partition function} corresponding to the set $C_\Gam(\an{\m})$ of
 all colorings $\phi : \an{\m} \to \an{n}$ allowed by $\Gam$ is
 given by
 \begin{equation}\label{gpf}
 Z_\Gam(\m,\u)  :=\sum_{\phi \in C_\Gam(\an{\m})} e^{\bc(\phi)\trans \u}.
 \end{equation}
 It is well-known that $\log Z_\Gam(\m,\u)$ is a convex function.
 Furthermore, the multisequence $\log Z_\Gam(\m,\u)$, $\m \in \N^d$ is
 subadditive in each coordinate of $\m$.  Hence the following
 limit exists
 \begin{equation}\label{pressdef}
 P_\Gam(\u) := \lim_{\m \to \infty} \frac{\log
 Z_\Gam(\m,\u)}{\vol(\m)},
 \end{equation}
 where $\m \to \infty$ means $m_j \to \infty$ for all $j \in
 \an{d}$. This limit is called the \emph{pressure} function. The
 value $h_{\Gam} := P_\Gam(\0)$ is the (free) entropy of the
 corresponding SOFT, and our previous paper \cite{FP} was devoted to
 the theory of its computation. The function $P_\Gam(\cdot)  :\R^n
 \to \R$ is a Lipschitz convex function. Hence it is continuous and
 subdifferentiable everywhere, and differentiable almost everywhere.
 Assume that $P_\Gam(\cdot)$ is differentiable at $\u$ with gradient
 vector $\p(\u)=(p_1(\u),\ldots,p_n(\u))\trans$. Then $\p(\u)$ is a
 probability vector, and $p_i(\u)$ is the relative frequency, or
 proportion, of color $i$ corresponding to the pressure
 $P_\Gam(\u)$. We show that the points $\u$ where $P_\Gam(\u)$ is not
 differentiable correspond to phase transitions of the \emph{first order}, i.e., these are
 points $\u$ where the proportions of the colors are not unique.
% Section~\ref{densityentropy} gives more details.

 Let $\Pi_n \subseteq \R^n$ denote the simplex of probability
 vectors. Assume that $\p \in \Pi_n$ is a limiting color proportion
 vector for some multisequence of configurations in
 $C_\Gam(\an{\m})$, $\m \to \infty$.  Then one can define the
 \emph{density entropy} $h_\Gam(\p)$ as the maximal exponential growth
 rate of the number of configurations, the maximum being taken over
 all multisequences whose color proportion vector tends to
 $\p$. (See for example \cite{Ha1} for the special case of the
 monomer-dimer configurations.)
 We denote by $\Pi_\Gam\subseteq\Pi_n$ the compact set of all
 limiting color proportion vectors.

 Let $P^*_\Gam(\cdot) : \R^n \to
 [-h_\Gam,\infty]$ be the \emph{conjugate} of $P_\Gam(\cdot)$, which
 is called the Legendre-Fenchel transform in the case of
 differentiable convex functions \cite{Arn,Roc}. Recall that $P^*_\Gam(\cdot)$ is a
 convex function. We show that for a limiting color proportion vector
 $\p$, that is also a subgradient of the pressure function
 somewhere, $h_\Gam(\p) = -P^*_\Gam(\p)$.  Thus $h_\Gam$ is a
 concave function on each convex set of such vectors $\p$ in $\Pi_\Gam$.

 We next show that in many SOFT arising in physical models, the set
 $\Pi_\Gam$ is convex and the function $h_\Gam:\Pi_\Gam\to \R_+$ is concave.
 A simple example is as follows.
 Assume that our SOFT given by $\Gam$ has a friendly color, say $n$.
 This is, in each digraph $\Gamma_k$ the vertices $n$ and $i$ connected in both directions,
 i.e. $(n,i),(i,n)\in\Gamma_k$, for $i=1,\ldots,n$ and $k=1,\ldots,d$.
 Then $\Pi_\Gam$ is convex and $h_\Gam|\Pi_\Gam$ is concave.
 The hard core model has a friendly color.  The monomer-dimer model has essentially a
 friendly color, which corresponds to the dimer, hence
 $\Pi_\Gam=\Pi_{d+1}$ and $h_\Gam|\Pi_{d+1}$ is concave.
 These results can be viewed as generalizations of the result of Hammersley
 \cite{Ha1}.

 For numerical computations of the pressure one needs to have lower bounds for the pressure, which converge
 to the pressure in the limit.  (The convergent upper bounds are given by $ \frac{\log
 Z_\Gam(\m,\u)}{\vol(\m)}$, since
 the multisequence $\log Z_\Gam(\m,\u)$, $\m \in \N^d$ is
 subadditive in each coordinate of $\m$.)  We extend the results in \cite{Fr1,FP}
 to give lower convergent bounds if
 at least $d-1$ digraphs out of $\Gamma_1,\ldots,\Gamma_d$ are symmetric.
 (A digraph $\Gamma\subseteq \an{n}\times \an{n}$ is called symmetric, (reversible), if the diedge $(i,j)$ is in $\Gamma$
 whenever $(j,i)$ is in $\Gamma$.)  This condition holds for most of the known physical models.
 In this paper we show how to apply the computational methods developed in \cite{FP} to the pressure.
 We demonstrate the applications of our methods to the two dimensional monomer-dimer model on $\Z^2$.
 First we confirm the heuristic computations of Baxter \cite{Bax1}.  Second, we find numerically a number of values
 of the pressure function $P_2(v_1,v_2)$ and the value of the density entropy $\bar h_2(p_1,p_2)$ for dimers with densities $p_1,p_2$
 in the directions $x_1,x_2$ respectively.  In Figures 1 and 2 we give the plots of these functions.
 These computations go beyond the known computations of \cite{Bax1,HM},
 where one considers the total density of dimers, (which reduce to the computations of functions of one variable).

 We hope to show that the methods of this paper can be applied to other interesting models.
 We already know that our approach works for the numerical computation of the pressure function for
 the 2D and 3D Ising models in external magnetic field.  (It is similar to the monomer-dimer model
 we study here.)  We plan to study if our numerical computations are precise enough to discover the
 second order phase transitions, which occurs in multidimensional Ising models.

 We now survey briefly the contents of the paper. In
 Section~\ref{softnnsoftandpressure} we describe in details SOFT,
 NNSOFT and the pressure function. We also show that in the
 one-dimensional case, the pressure is the logarithm of the spectral
 radius of a corresponding nonnegative matrix.  In
 Section~\ref{sec:SymmetricNNSOFT} we show that under certain
 symmetry (reversibility) assumptions on $d-1$ digraphs among
 $\Gamma_1,\ldots,\Gamma_d$, we have computable converging upper and
 lower bounds for the pressure.
 In Section~\ref{densityentropy} we relate certain properties of a convex function,
 as differentiability and its conjugate $P^*_\Gam(\p)$, (the Legendre-Fenchel transform),
 to the physical quantities associated with a given SOFT, i.e. the corresponding Potts model.
 In particular we show that the points where the
 pressure $P_\Gam(\cdot)$ is differentiable correspond to unique
 color frequency vectors.  On the other hand the points were the pressure is not differentiable correspond to
 the phase transition of first order, since to this value of $\u$ correspond at least two different color frequencies.
 We also relate the density entropy $h_\Gam(\p)$ to
 the conjugate function $P^*_\Gam(\p)$. In
 Section~\ref{sec:onedimensionalsoft} we apply the results of Section~\ref{densityentropy}
 to a one-dimensional SOFT.  The importance of one-dimensional SOFT is due to the fact that our approximations
 of the pressure are obtained by using the exact results on one-dimensional SOFT.
 Section~\ref{sec:weightedmonomerdimertiligs} we apply some of our results in Section~\ref{densityentropy}
 to the monomer-dimer model in $\Z^d$.  We also relate our results to the works of Hammersley and Baxter \cite{Ha1,Bax1}.
 This is done by using the fact that the monomer-dimer model in $\Z^d$ can be realized as SOFT with $2d+1$ colors \cite{Fr2,FP}.
 As we pointed out in \cite{FP} this SOFT does not have symmetric properties, and hence can not be used for computation.
 In Section~\ref{sec:symencod} we use the symmetric encoding of the monomer-dimer model developed in \cite{FP},
 to obtain computer upper and lower bounds for the pressure function.  In Section~\ref{sec:graphs} we apply our techniques
 to the computations of two dimensional pressure and density entropy for the monomer-dimer model in $\Z^2$.

 \section{SOFT, NNSOFT and Pressure}\label{softnnsoftandpressure}
% \fbox{\parbox{5in}{Note: Here come an introduction and motivation.}}

 We use the notation $\an{r} := \{1,\ldots,r\}$ for $r \in \N :=
 \{1,2,3,\ldots\}$, and for $\m = (m_1,\ldots,m_d) \in \N^d$,
 $\an{\m} := \an{m_1} \times \cdots \times \an{m_d}$ denotes a box
 with volume $\vol(\m) := m_1\cdots m_d$. Then
 $\an{n}^{\an{\m}}$ is the set of all colorings $\phi: \an{\m} \to
 \an{n}$ of $\an{\m}$ with colors from $\an{n}$. We denote by
 $c(\phi)_i:= \# \phi^{-1}(i)$ the number of sites in $\an{\m}$
 colored with the color $i \in \an {n}$, and let $\bc(\phi) :=
 (c(\phi)_1,\ldots,c(\phi)_n)\trans$. Similarly, with $\Z :=
 \{0,\pm 1,\pm 2,\ldots\}$, $\an{n}^{\Z^d}$ is the set of all
 colorings $\phi: \Z^d \to \an{n}$ of $\Z^d$ with colors from
 $\an{n}$. Given a $d$-digraph $\Gam =
 (\Gamma_1,\ldots,\Gamma_d)$ on $\an{n} \times \an{n}$, let
 $C_\Gam(\Z^d) \subseteq \an{n}^{\Z^d}$ be the set of all
 \emph{$\Gam$-colorings}, namely colorings $\phi =
 (\phi_{\m})_{\m \in \Z^d} \in \an{n}^{\Z^d}$ such that for each
 $\i \in \Z^d$ and $k \in \an{d}$, $(\phi_\i,\phi_{\i+\e_k}) \in
 \Gamma_k$, where $\e_k$ is the unit vector with $k$th component
 equal to $1$. In ergodic theory, the set $C_\Gam(\Z^d)$ is called a
 \emph{nearest neighbor subshift of finite type (NNSOFT)}.

 A general SOFT can be described as follows. Let $\bM \in \N^d$
 and a nonempty subset $\cP \subseteq \an{n}^{\an{\bM}}$ be given.
 Every element $a \in \cP$ is viewed as an allowed coloring
 (configuration) of the box $\an{\bM}$ with $n$ colors. For $\i \in
 \Z^d$, we define the shifted coloring $\tau_{\i}(a)$ of $a \in
 \cP$ as the coloring of the shifted box $\an{\bM} + \i$ that
 gives to the site $\x + \i$ the same color that $a$ gives to $\x
 \in \an{\bM}$. We denote by $\tau_{\i}(\cP)$ the set
 $\{\tau_{\i}(a) : a \in \cP\}$, and regard it as the set of
 allowed colorings of $\an{\bM} + \i$. A coloring $\phi \in
 \an{n}^{\Z^d}$ is called a \emph{$\cP$-state} if for each $\i \in
 \Z^d$ the restriction of $\phi$ to $\an{\bM} + \i$ is in
 $\tau_{\i}(\cP)$. We denote by $\an{n}^{\Z^d}(\cP)$ the set of
 all $\cP$-states. In ergodic theory, the set $\an{n}^{\Z^d}(\cP)$
 is called a \emph{subshift of finite type (SOFT}) \cite{Sc}.

 Each NNSOFT $C_\Gam(\Z^d)$ is a special kind of SOFT obtained by
 letting $\bM = (2,\ldots,2)$ and $\cP$ the set of all colorings
 $\phi \in \an{n}^{\an{\bM}}$ such that $\i, \i + \e_k \in \an{\bM}$
 imply $(\phi_{\i},\phi_{\i + \e_k}) \in \Gamma_k$. Conversely
 \cite{Fr1}, each SOFT $\an{n}^{\Z^d}(\cP)$ can be encoded as an
 NNSOFT $C_\Gam(\Z^d)$, where $\Gam =
 (\Gamma_1,\ldots,\Gamma_d)$ is defined as follows. Take $N =
 \#\cP$ and use a bijection between $\cP$ and $\an{N}$. The digraph
 $\Gamma_k \subseteq \an{N} \times \an{N}$ is defined so that for
 $a,b \in \cP$ we have $(a,b) \in \Gamma_k$ if and only if there
 is a configuration $\phi \in \an{n}^{\an{\bM + \e_k}}$ such that
 the restriction of $\phi$ to $\an{\bM}$ is $a$ and the
 restriction of $\phi$ to $\an{\bM} + \e_k$ is $\tau_{\e_k}(b)$.
 Because of this equivalence, we will be dealing here with NNSOFT
 only.

 In the sequel we will take $\limsup$ and $\liminf$ of real
 multisequences $(a_{\m})_{\m \in \N^d}$ as $\m \to \infty$. In order
 to be clear, we define these here and observe that they are limits
 of subsequences \cite{FP}. We also define the limit of real multisequence in
 terms of $\limsup$ and $\liminf$, which is equivalent to other
 definitions in the literature.

 \begin{defn}\label{def:multilimsup}
 Let $(a_{\m})_{\m \in \N^d}$ be a multisequence of real numbers.
 Then
 \begin{enumerate}
 \item[(a)]
 $\limsup_{\m \to \infty} a_\m$ is defined as the supremum
 (possibly $\pm \infty$) of all numbers of the form $\limsup_{q
 \to \infty} a_{\m_q}$, where $(\m_q)_{q \in \N}$ is a sequence in
 $\N^d$ satisfying $\lim_{q \to \infty} \m_q = \infty$, i.e.,
 $\lim_{q \to \infty} (\m_q)_k = \infty$ for each $k \in \an{d}$.
 We define $\liminf_{\m \to \infty} a_\m$ similarly.
 \item[(b)]
 $\lim_{\m \to \infty} a_\m = \alpha$ means
 $\limsup_{\m \to \infty} a_\m = \liminf_{\m \to \infty} a_\m =
 \alpha$.
 \end{enumerate}
 \end{defn}
 As in \cite{FP}, given an NNSOFT $C_\Gam(\Z^d)$ and $\m \in
 \N^d$, we denote by $C_\Gam(\an{\m})$ the set of all colorings $\phi \in
 \an{n}^{\an{\m}}$ such that $\i, \i + \e_k \in \an{\m}$ imply
 $(\phi_{\i},\phi_{\i + \e_k}) \in \Gamma_k$. Similarly, we denote
 by $C_{\Gam,\topo}(\an{\m}) \subseteq C_\Gam(\an{\m})$ the projection of
 $C_\Gam(\Z^d)$ on $\an{\m}$, i.e., the set of colorings in
 $\an{n}^{\an{\m}}$ that can be extended to colorings in
 $C_\Gam(\Z^d)$, and by $C_{\Gam,\perio}(\an{\m}) \subseteq C_{\Gam,\topo}(\an{\m})$
 the set of periodic $\Gam$-colorings with period $\m$, i.e.,
 the set of colorings in $\an{n}^{\an{\m}}$ that can be extended
 to colorings in $C_\Gam(\Z^d)$ with period $\m$. For a weight
 vector $\u=(u_1,\ldots,u_n)\trans \in \R^n$ on the colors, we define
 \begin{align}
  Z_\Gam(\m,\u) &:= \sum_{\phi\in C_\Gam(\an{\m})}e^{\bc(\phi)\trans \u},\label{wmufor}\\
  Z_{\Gam,\topo}(\m,\u) &:= \sum_{\phi\in C_{\Gam,\topo}(\an{\m})}e^{\bc(\phi)\trans \u},\nonumber\\
  Z_{\Gam,\perio}(\m,\u) &:= \sum_{\phi\in C_{\Gam,\perio}(\an{\m})}e^{\bc(\phi)\trans \u}\nonumber.
 \end{align}
 As usual, a summation over an empty set is understood as $0$.
 Obviously
 \begin{align}
  \# C_\Gam(\an{\m}) &= Z_\Gam(\m,\0),\nonumber\\
  \# C_{\Gam,\topo}(\an{\m}) &= Z_{\Gam,\topo}(\m,\0),\nonumber\\
  \# C_{\Gam,\perio}(\an{\m}) &= Z_\perio(\m,\0).\nonumber
 \end{align}
 A function $f(\u) \geq 0$ on $\R^n$ is called \emph{log-convex}
 when $\log f(\u)$ is convex. (The zero function is by definition
 log-convex.) Recall that the log-convex functions are closed under
 linear combinations with nonnegative coefficients \cite{Kin}. Since
 for $\bc\in \R^n$ the function $e^{\bc\trans \u}$ is log-convex,
 the weighted sums $Z_\Gam(\m,\u)$, $Z_{\Gam,\topo}(\m,\u)$,
 $Z_{\Gam,\perio}(\m,\u)$ are log-convex functions of $\u$ for each
 $\m\in \N^d$. As in \cite{FP} it follows that for a fixed $\u$,
 $\log Z_\Gam(\m,\u)$ and $\log Z_{\Gam,\topo}(\m,\u)$ are subadditive in
 each coordinate of $\m$, and so the limits (\ref{pres}) and
 (\ref{prest}) below exist. The quantities
 \begin{align}
  P_\Gam(\u) &:= \lim_{\m \to \infty} \frac{\log Z_\Gam(\m,\u)}{\vol(\m)}, \label{pres} \\
  P_{\Gam,\topo}(\u) &:= \lim_{\m \to \infty}
  \frac{\log Z_{\Gam,\topo}(\m,\u)}{\vol(\m)},
  \label{prest} \\
  P_{\Gam,\perio}(\u) &:= \limsup_{\m \to \infty} \frac{\log Z_{\Gam,\perio}(\m,\u)}
  {\vol(\m)}
  \label{perent}
 \end{align}
 are called the \emph{pressure}, the \emph{topological pressure}
 and the \emph{periodic pressure} of $C_\Gam(\Z^d)$,
 respectively. The special cases
 $h_\Gam := P_\Gam(\0)$, $h_{\Gam,\topo} := P_{\Gam,\topo}(\0)$,
 $h_{\Gam,\perio} := P_{\Gam,\perio}(\0)$ are the entropy, the
 topological entropy and the periodic entropy, respectively,
 discussed in \cite{FP}. Clearly
 \[-\infty \leq P_{\Gam,\perio}(\u) \leq  P_{\Gam,\topo}(\u) \leq P_\Gam(\u).\]
 By the log-convexity of $Z_\Gam(\m,\u)$, $Z_{\Gam,\topo}(\m,\u)$,
 $Z_{\Gam,\perio}(\m,\u)$, it follows that $P_\Gam(\u)$, $P_{\Gam,\topo}(\u)$,
 $P_{\Gam,\perio}(\u)$  are convex functions on $\R^n$. (We
 agree that the constant function $-\infty$ is convex.) As in
 \cite{Fr1}, one has the equality
 \[P_{\Gam,\topo}(\u) = P_\Gam(\u).\]
 Since $\log Z_\Gam(\m,\u)$ is subadditive in each coordinate of $\m$, it
 follows that
 \begin{equation}\label{ub1}
 P_\Gam(\u) \leq \frac{\log Z_\Gam(\m,\u)} {\vol(\m)}.
 \end{equation}

 In the one-dimensional case $d=1$, we can express $P_{\Gamma_1}(\u)$ as the
 logarithm of the spectral radius (largest modulus of an
 eigenvalue) of a certain $n \times n$ matrix as follows.
 \begin{prop}\label{onedimpressure}
  Let $D_{\Gamma_1} = (d_{ij})_{i,j \in \an{n}}$ be the
  $(0,1)$-adjacency matrix of $\Gamma_1$, and let $D_{\Gamma_1}(\u) =
  (d_{ij}(\u))_{i,j \in \an{n}}$ be defined by
  \begin{equation}\label{AGamma1u}
  d_{ij}(\u) := d_{ij} \cdot  e^{\frac{1}{2}(\e_i\trans \u + \e_j\trans \u)}.
  \end{equation}
  Let $\rho_{\Gamma_1}(D(\u))$ be the spectral radius of
  $D_{\Gamma_1}(\u)$.
  Then
  \[P_{\Gamma_1}(\u) = \log \rho_{\Gamma_1}(D(\u)).\]
 \end{prop}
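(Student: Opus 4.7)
My plan is to express $Z_{\Gamma_1}(m,\u)$ as a quadratic form in $D_{\Gamma_1}(\u)^{m-1}$ and then apply Gelfand's spectral radius formula to extract the asymptotic growth rate.

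Any $\phi\in C_{\Gamma_1}(\an{m})$ is a sequence $(i_1,\ldots,i_m)\in\an{n}^m$ with $(i_k,i_{k+1})\in\Gamma_1$ for each $k\in\an{m-1}$, and its weight is $e^{\bc(\phi)\trans\u}=e^{u_{i_1}+\cdots+u_{i_m}}$. Using the factorization $d_{ij}(\u)=d_{ij}\,e^{(u_i+u_j)/2}$ from (\ref{AGamma1u}), a telescoping of the exponents $(u_{i_k}+u_{i_{k+1}})/2$ over $k=1,\ldots,m-1$ yields
\[
e^{u_{i_1}/2}\!\left(\prod_{k=1}^{m-1} d_{i_k i_{k+1}}(\u)\right)\!e^{u_{i_m}/2}
\;=\; \Big(\prod_{k=1}^{m-1} d_{i_k i_{k+1}}\Big)\,e^{u_{i_1}+\cdots+u_{i_m}},
\]
and the $0/1$ product on the right is the indicator that $(i_1,\ldots,i_m)\in C_{\Gamma_1}(\an{m})$. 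Setting $\v(\u):=(e^{u_1/2},\ldots,e^{u_n/2})\trans$ and summing over all $(i_1,\ldots,i_m)\in\an{n}^m$ therefore gives the closed form
\[
Z_{\Gamma_1}(m,\u) \;=\; \v(\u)\trans D_{\Gamma_1}(\u)^{m-1}\v(\u).
\]

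Next, I would extract the exponential growth rate. Since $D_{\Gamma_1}(\u)$ has nonnegative entries and the components of $\v(\u)$ satisfy $\delta\le e^{u_i/2}\le\Delta$ for some $0<\delta\le\Delta$, applying these bounds entrywise produces
\[
\delta^2\,\|D_{\Gamma_1}(\u)^{m-1}\|_1 \;\le\; Z_{\Gamma_1}(m,\u) \;\le\; \Delta^2\,\|D_{\Gamma_1}(\u)^{m-1}\|_1,
\]
where $\|A\|_1:=\sum_{i,j}|A_{ij}|$ is the entrywise norm. Taking logarithms, dividing by $m$, and invoking Gelfand's formula $\lim_{m\to\infty}\|A^m\|^{1/m}=\rho(A)$, the bracketing constants become negligible and I conclude
\[
P_{\Gamma_1}(\u) \;=\; \lim_{m\to\infty}\frac{1}{m}\log Z_{\Gamma_1}(m,\u) \;=\; \log\rho_{\Gamma_1}(D(\u)).
\]

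The only delicate point, and what I expect to be the main obstacle, is the degenerate case $\rho_{\Gamma_1}(D(\u))=0$: then $D_{\Gamma_1}(\u)$ is nilpotent, $Z_{\Gamma_1}(m,\u)$ vanishes for all $m>n$, and both sides of the identity read $-\infty$. This is consistent with the convention, already adopted when defining the pressure, that the constant $-\infty$ is an admissible (convex) value. Notably, the argument uses only the nonnegativity of $D_{\Gamma_1}(\u)$ and the positivity of $\v(\u)$, so no irreducibility assumption on $\Gamma_1$ and no appeal to Perron--Frobenius theory is required.
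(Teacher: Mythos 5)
Your proof is correct and follows essentially the same route as the paper: both rest on writing the partition function as the quadratic form $Z_{\Gamma_1}(m,\u)=\1(\u)\trans D_{\Gamma_1}(\u)^{m-1}\1(\u)$ in the positive vector of half-weights $e^{u_i/2}$. The only difference is cosmetic: the paper extracts the growth rate by citing the characterization $\rho(M)=\lim_{k\to\infty}(\w\trans M^k\w)^{1/k}$ for nonnegative $M$ and positive $\w$, whereas you re-derive that fact by sandwiching the quadratic form between constant multiples of the entrywise norm of $D_{\Gamma_1}(\u)^{m-1}$ and invoking Gelfand's formula, handling the nilpotent case separately.
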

 \begin{proof}
  Recall the following characterization of the spectral radius
  of a nonnegative matrix $M$: for any vector $\w$ with positive
  components, we have $\rho(M) = \lim_{k \to \infty} (\w\trans M^k
  \w)^{\frac{1}{k}}$ (cf., Proposition 10.1 of \cite{Fr2}).
  Consider the positive vector
  \[\1(\u) = (e^{\frac{1}{2}\e_i\trans \u})_{i \in \an{n}}.\]
  Since $C_{\Gamma_1}(\an{m_1})$ is the set of walks of length $m_1 - 1$ on
  $\Gamma_1$, we have $Z_{\Gamma_1}(\an{m_1},\u) = \1(\u)\trans D_{\Gamma_1}(\u)^{m_1 -1}
  \1(\u)$. Therefore
  \begin{multline*}
   \log \rho_{\Gamma_1}(D(\u)) = \lim_{m_1 \to \infty} \frac{\log \1(\u)\trans D_{\Gamma_1}(\u)^{m_1} \1(\u)}{m_1}
   \\
   = \lim_{m_1 \to \infty} \frac{\log \1(\u)\trans D_{\Gamma_1}(\u)^{m_1 - 1} \1(\u)}{m_1}
   = \lim_{m_1 \to \infty} \frac{\log Z_{\Gamma_1}(m_1,\u)}{m_1}
   = P_{\Gamma_1}(\u)
  \end{multline*}
  \end{proof}

 \section{Main Inequalities for Symmetric NNSOFT}\label{sec:SymmetricNNSOFT}

 In this section we derive bounds for the pressure analogous
 to those for the entropy in \cite[Section 3]{FP} under the assumption
 that some of the digraphs $\Gamma_1,\ldots,\Gamma_d$ are symmetric.

 For $d \geq 2$, consider $\m = (m_1,\ldots,m_d) \in \N^d$ and
 $\m^{-} := (m_2,\ldots,m_d)$. We denote by $T(\m)$ the discrete
 torus with sides of length $m_1,\ldots,m_d$, i.e., direct product
 of cycles of lengths $m_1,\ldots,m_d$. Let
 $C_{\Gam,\perio,\{1\}}(\m)$ be the set of $\Gam$-colorings of the
 box $\an{\m}$ that correspond to $\Gam$-colorings of $T(m_1) \times
 \an{\m^{-}}$, i.e., that can be extended periodically in the
 direction of $\e_1$ with period $m_1$ into $\Gam$-colorings of $\Z
 \times \an{\m^{-}}$. We can view these colorings as
 $\widehat{\Gam}$-colorings of the box $\an{\m^{-}}$, where
 $\widehat{\Gam} = (\widehat{\Gamma}_2,\ldots,\widehat{\Gamma}_d)$,
 for each $k$ the vertex set of $\widehat{\Gamma}_k$ is the set
 $\Gamma_{1,\perio}^{m_1}$ of closed walks $a =
 (a_1,\ldots,a_{m_1},a_1)$ of length $m_1$ on $\Gamma_1$, and $(a,b)
 \in \widehat{\Gamma}_k$ if and only if $(a_i,b_i) \in \Gamma_k$ for
 $i =1,\ldots,m_1$. For this reason, the limit (\ref{defbarh}) below
 exists and is equal to the pressure $P_{\widehat{\Gam}}(\u)$ of the
 NNSOFT $C_{\widehat{\Gam}}(\Z^{d-1})$:
 \begin{align}
 Z_{\Gam,\perio,\{1\}}(\m,\u) &:= \sum_{\phi \in C_{\Gam,\perio,\{1\}}(\m)}
 e^{\bc(\phi)\trans \u},
 \nonumber\\
 \overline{P}_{\Gam}(m_1,\u) &:= \lim_{\m^- \to \infty} \frac{\log
 Z_{\Gam,\perio,\{1\}}(\m,\u)}{\vol(\m^{-})}, \quad m_1 \in \N.
 \label{defbarh}
 \end{align}
 Then $\overline{P}_{\Gam}(m_1,\u)$ is a convex function of
 $\u\in\R^n$. In the degenerate case $m_1 = 0$, we define
 $Z_{\Gam,\perio,\{1\}}((0,\m^{-}),\u)$
 %$w_{\perio,\{1\}}((0,\m^{-}),\u)$
 to be $\#C_\Gam^{-}(\m^{-})$
 (regardless of $\u$), where $C_\Gam^{-}(\m^{-})$ is the set of
 $(\Gamma_2,\ldots,\Gamma_d)$-colorings of the box $\an{\m^{-}}$.
 Then (\ref{defbarh}) is also valid for $m_1=0$, where
 $\overline{P}_{\Gam}(0,\u) := P_{(\Gamma_2,\ldots ,\Gamma_d)}(\0)$
 is the entropy of $C_{(\Gamma_2,\ldots,\Gamma_d)}(\Z^{d-1})$.

 \begin{theo}\label{perulb}  Consider the NNSOFT
  $C_\Gam(\Z^d)$ for $d \geq 2$, and
  let $P_\Gam(\u)$ and $\overline{P}_{\Gam}(m_1,\u)$ be defined by (\ref{pres}) and
  (\ref{defbarh}), respectively.
  Assume that $\Gamma_1$ is symmetric.  Then for all $p,r \in \N$ and $q \in \Z_+$,
  \begin{equation}\label{perulb1}
   \frac{\overline{P}_{\Gam}(2r,\u)}{2r} \geq  P_\Gam(\u)  \geq
   \frac{\overline{P}_{\Gam}(p + 2q,\u)- \overline{P}_{\Gam}(2q,\u)}{p}.
  \end{equation}
 \end{theo}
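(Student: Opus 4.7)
The plan is to reduce both inequalities in~(\ref{perulb1}) to spectral facts about a single symmetric nonnegative ``direction-$1$ transfer matrix.'' For fixed cross-section $\m^{-}$, index rows and columns by the set $\cS_{\m^{-}}$ of all $(\Gamma_2,\ldots,\Gamma_d)$-colorings of $\an{\m^{-}}$, i.e.\ the valid ``slices.'' Define $T(s,s')=1$ when $(s(\i^{-}),s'(\i^{-}))\in\Gamma_1$ for every $\i^{-}\in\an{\m^{-}}$ and zero otherwise; symmetry of $\Gamma_1$ makes $T$ a symmetric $0/1$ matrix. Let $D=\mathrm{diag}(e^{\bc(s)\trans\u})_{s\in\cS_{\m^{-}}}$ and set $A=A(\m^{-},\u):=D^{1/2}TD^{1/2}$, a symmetric nonnegative matrix with real spectrum $\lambda_1,\ldots,\lambda_N$ and $\rho(A)=\max_i|\lambda_i|$.

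A direct slice-by-slice expansion yields
\[
Z_\Gam((m_1,\m^{-}),\u)=\v\trans A^{m_1-1}\v,\qquad
Z_{\Gam,\perio,\{1\}}((m_1,\m^{-}),\u)=\tr(A^{m_1}),
\]
where $\v=D^{1/2}\1$. Proposition~\ref{onedimpressure} applied to the direction-$1$ NNSOFT with alphabet $\cS_{\m^{-}}$ gives $\lim_{m_1\to\infty}m_1^{-1}\log Z_\Gam((m_1,\m^{-}),\u)=\log\rho(A(\m^{-},\u))$, and combining this with the coordinate-wise subadditivity of $\log Z_\Gam(\m,\u)$, exactly as in~\cite{FP}, produces the bridge
\[
P_\Gam(\u)=\lim_{\m^{-}\to\infty}\frac{\log\rho(A(\m^{-},\u))}{\vol(\m^{-})}.
\]

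For the upper bound, the evenness of $2r$ gives $\tr(A^{2r})=\sum_i\lambda_i^{2r}\geq\rho(A)^{2r}$; taking $\log$, dividing by $2r\,\vol(\m^{-})$, and letting $\m^{-}\to\infty$ yields $\overline{P}_\Gam(2r,\u)/(2r)\geq P_\Gam(\u)$. For the lower bound, the evenness of $2q$ makes $\lambda_i^{2q}\geq 0$, and so
\[
\tr(A^{p+2q})=\sum_i\lambda_i^{p}\lambda_i^{2q}\leq\sum_i|\lambda_i|^{p}\lambda_i^{2q}\leq\rho(A)^{p}\tr(A^{2q}).
\]
Applying the same $\log$ and $\m^{-}\to\infty$ procedure produces $\overline{P}_\Gam(p+2q,\u)\leq pP_\Gam(\u)+\overline{P}_\Gam(2q,\u)$, which rearranges to the lower bound in~(\ref{perulb1}); the degenerate case $q=0$ is consistent since $\tr(A^{0})=N=\#\cS_{\m^{-}}=\#C_\Gam^{-}(\m^{-})$ reproduces the stipulated value of $\overline{P}_\Gam(0,\u)$.

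The principal technical point is the identification of $P_\Gam(\u)$ with $\lim_{\m^{-}\to\infty}\log\rho(A)/\vol(\m^{-})$: Proposition~\ref{onedimpressure} handles the direction-$1$ limit at each fixed $\m^{-}$, but exchanging it with the $\m^{-}\to\infty$ limit relies on subadditivity of $\log Z_\Gam(\m,\u)$ in each coordinate. Once this bridge is established, both bounds reduce to one-line spectral inequalities, and the hypothesis that $\Gamma_1$ is symmetric is used in exactly one essential place, namely to make $A$ symmetric so that its spectrum is real and the elementary bounds $\tr(A^{2r})\geq\rho(A)^{2r}$ and $\tr(A^{p+2q})\leq\rho(A)^{p}\tr(A^{2q})$ are available.
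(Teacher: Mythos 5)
Your proposal is correct and follows essentially the same route as the paper: the same direction-$1$ transfer matrix (your $A(\m^{-},\u)=D^{1/2}TD^{1/2}$ is exactly the paper's $D_\Gam(\m^{-},\u)$ from (\ref{Amu})), the same identification $P_\Gam(\u)=\lim_{\m^{-}\to\infty}\log\rho(A(\m^{-},\u))/\vol(\m^{-})$ via Proposition~\ref{onedimpressure} and subadditivity, the same trace identity $\tr A^{m_1}=Z_{\Gam,\perio,\{1\}}((m_1,\m^{-}),\u)$, and the same two spectral inequalities $\tr A^{2r}\ge\rho(A)^{2r}$ and $\tr A^{p+2q}\le\rho(A)^p\tr A^{2q}$ exploiting the real spectrum coming from the symmetry of $\Gamma_1$. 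Your handling of the $q=0$ case and the exponent $m_1-1$ in $\v\trans A^{m_1-1}\v$ are consistent with (indeed slightly more careful than) the paper's write-up.
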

 \begin{Proof}  Fix $\m^{-} = (m_2,\ldots,m_d) \in \N^{d-1}$ and let
 $\Omega_1(\m^{-})$ be the following transfer digraph on the
 vertex set $C_\Gam^{-}(\m^{-})$, analogous to the transfer digraph
 $\Omega_d(\m')$ described in \cite[Section 1]{FP}. Vertices $\i$ and
 $\j$ satisfy $(\i,\j) \in \Omega_1(\m^{-})$ if and only if
 $[\i,\j] \in C_\Gam(2,\m^{-})$, where $[\i,\j]$  is the configuration
 consisting of $\i,\j$ occupying the levels $x_1=1,2$ of
 $\an{(2,\m^{-})}$, respectively. Let $N = \# C_\Gam^{-}(\m^{-})$ and let
 $D_\Gam(\m^{-})=(d_{\i\j})_{\i,\j \in C_\Gam^{-}(\m^{-})}$ be the $N \times
 N$ $(0,1)$-adjacency matrix of $\Omega_1(\m^{-})$.  Let
 $D_\Gam(\m^{-},\u)=(d_{\i\j}(\u))_{\i,\j \in C_\Gam^{-}(\m^{-})}$ be
 defined by
 \begin{equation}\label{Amu}
 d_{\i\j}(\u) = d_{\i\j} \cdot e^{\frac{\bc(\i)\trans\u +\bc(\j)\trans\u}{2}},
 \quad \i,\j \in C_\Gam^{-}(\m^{-}),
 \end{equation}
 and let the positive vector $\1(\u)$ be defined by
 \[\1(\u)=(e^{\frac{\bc(\i)\trans\u}{2}})_{\i \in C_\Gam^{-}(\m^{-})}.\]
 Then
 \[\1(\u)\trans D_\Gam(\m^{-},\u)^{m_1} \1(\u) = Z_\Gam((m_1,\m^{-}),\u),\]
 and as in the proof of Proposition~\ref{onedimpressure}
 \[\log \rho(D_\Gam(\m^{-},\u)) = \lim_{m_1 \to \infty} \frac{\log \1(\u)\trans
 D_\Gam(\m^{-},\u)^{m_1} \1(\u)}{m_1}.\]
 (In particular, $\rho(D_\Gam(\m^{-},\u))$ is a log-convex
 function of $\u$ \cite{Kin}.)
 It follows that
 \begin{equation}\label{logrhoeqlim}
 \frac{\log\rho(D_\Gam(\m^{-},\u))} {\vol(\m^{-})} =
 \lim_{m_1 \to \infty} \frac{\log Z_\Gam((m_1,\m^{-}),\u)}{m_1\,\vol(\m^{-})}.
 \end{equation}
 Now send $m_2,\ldots,m_d$ to $\infty$, and observe that by
 (\ref{pres}) and (\ref{ub1}), the right-hand side of
 (\ref{logrhoeqlim}) converges to $P_\Gam(\u)$ and
 bounds it from above for each $\m^{-}$. Thus we obtain an analog of \cite{Fr1}
 \begin{align}
  \frac{\log\rho(D_\Gam(\m^{-},\u))}{\vol(\m^{-})} &\geq
  P_\Gam(\u), \qquad \m^{-} \in \N^{d-1}\label{enteq}\\
  \lim_{\m^- \to \infty}\frac{\log\rho(D_\Gam(\m^{-},\u))}
  {\vol(\m^{-})} &= P_\Gam(\u).\label{enteq1}
 \end{align}
 Next, we observe that
 \begin{equation}\label{percount}
 \tr D_\Gam(\m^{-},\u)^{q} = Z_{\Gam,\perio,\{1\}}((q,\m^{-}),\u), \qquad q \in \Z_+,
 \end{equation}
 where $D_\Gam(\m^{-},\u)^0$ is the $N \times N$ identity matrix. Recall
 that
 \[ \tr D_\Gam(\m^{-},\u)^{q}=\sum_{i=1}^N \lambda_i^{q}, \qquad q \in \Z_+,\]
 where $\lambda_1,\ldots,\lambda_N$ be the eigenvalues of
 $D_\Gam(\m^{-},\u)$. Since $D_\Gam(\m^{-},\u)$ is a nonnegative
 matrix, its spectral radius $\rho(D_\Gam(\m^{-},\u)) := \max_{i \in
 \an{N}} |\lambda_i|$ is one of the $\lambda_i$ by the
 Perron-Frobenius theorem. Since by assumption $\Gamma_1$ is
 symmetric, $\Omega_1(\m^{-})$ and hence $D_\Gam(\m^{-},\u)$ are
 symmetric. Therefore $\lambda_1, \ldots,\lambda_N$ are real, and
 hence $\tr D_\Gam(\m^{-},\u)^{2r} \geq
 \rho(D_\Gam(\m^{-},\u))^{2r}$ for each $r \in \N$. Taking
 logarithms and using (\ref{percount}), we obtain
 \begin{equation}\label{basineq}
 \frac{\log Z_{\Gam,\perio,\{1\}}((2r,\m^{-}),\u)}{2r \vol(\m^{-})} \geq
 \frac{\log\rho(D_\Gam(\m^{-},\u))}{\vol(\m^{-})}, \qquad r \in \N.
 \end{equation}
 Sending $m_2,\ldots,m_d$ to $\infty$ in (\ref{basineq}) and using
 (\ref{defbarh}) and (\ref{enteq1}), we
 deduce the upper bound for $P_\Gam(\u)$ in (\ref{perulb1}).

 To prove the lower bound in (\ref{perulb1}), we note that
 \begin{multline*}
 \tr D_\Gam(\m^{-},\u)^{p+2q} = \sum_i \lambda_i^{p+2q} \leq \sum_i |\lambda_i|^{p+2q}
 = \sum_i |\lambda_i|^p \lambda_i^{2q}\\
 \leq \sum_i \rho(D_\Gam(\m^{-},\u))^p \lambda_i^{2q} = \rho(D_\Gam(\m^{-},\u))^p
 \tr D_\Gam(\m^{-},\u)^{2q}
 \end{multline*}
 and thus by (\ref{percount})
 \begin{equation}
  \rho(D_\Gam(\m^{-}),\u)^p  \geq  \frac{\tr D_\Gam(\m^{-},\u)^{p+2q}}{\tr
  D_\Gam(\m^{-},\u)^{2q}}=\frac{Z_{\Gam,\perio,\{1\}}((p+2q,\m^{-}),\u)}
  {Z_{\Gam,\perio,\{1\}}((2q,\m^{-}),\u)}.\label{rhogeqtraceratio}
 \end{equation}
 Therefore
 \[ \frac{\log\rho(D_\Gam(\m^{-},\u))}{\vol(\m^{-})}  \geq
  \frac{\log Z_{\Gam,\perio,\{1\}}((p+2q,\m^{-}),\u) -
  \log Z_{\Gam,\perio,\{1\}}((2q,\m^{-}),\u)}{p\,\vol(\m^{-})}.
 \]
 Sending $\m^-$ to $\infty$ and using (\ref{enteq1}) and
 (\ref{defbarh}) (recall that the latter holds for $m_1 \in \Z_+$), we
 deduce the lower bound in (\ref{perulb1}).  \end{Proof}

 When $d=2$, $\overline{P}_{\Gam}(m_1,\u)$ is the pressure of the
 NNSOFT $C_{\widehat{\Gamma}_2}(\Z)$ (recall that
 $\overline{P}_{\Gam}(0,\u)$ is the entropy $h_{\Gamma_2}$). Since
 this is a $1$-dimensional NNSOFT,
 Proposition~\ref{onedimpressure} implies
 that $\overline{P}_{\Gam}(m_1,\u) =
 \log\rho(D_{\widehat{\Gamma}_2}(\u))$, where $D_{\widehat{\Gamma}_2}(\u)$
 is defined as in (\ref{AGamma1u}). We denote
 $\rho(D_{\widehat{\Gamma}_2}(\u))$ by $\theta_2(m_1,\u)$, and
 obtain the following corollary to Theorem~\ref{perulb}.

 \begin{corol}\label{ulbd=2}
 Let $d=2$ and assume that $\Gamma_1$ is symmetric.
 Then for all $p,r \in \N$ and $q \in \Z_+$,
 \begin{equation}\label{ubspecr2}
 \frac{\log\theta_2(2r,\u)}{2r} \geq  P_\Gam(\u) \geq
 \frac{\log\theta_2(p+2q,\u)-\log\theta_2(2q,\u)}{p},
 \end{equation}
 where $\theta_2$ is defined above.
 \qed
 \end{corol}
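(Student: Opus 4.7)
The plan is to view the corollary as essentially a direct specialization of Theorem~\ref{perulb} to the case $d=2$, using Proposition~\ref{onedimpressure} to rewrite the auxiliary quantity $\overline{P}_{\Gam}(m_1,\u)$ as the logarithm of a spectral radius. Since $\Gamma_1$ is assumed symmetric, Theorem~\ref{perulb} immediately applies and delivers
\[
\frac{\overline{P}_{\Gam}(2r,\u)}{2r} \;\geq\; P_\Gam(\u) \;\geq\; \frac{\overline{P}_{\Gam}(p+2q,\u)-\overline{P}_{\Gam}(2q,\u)}{p}
\]
for all $p,r\in\N$ and $q\in\Z_+$. So it only remains to identify $\overline{P}_{\Gam}(m_1,\u)$ with $\log\theta_2(m_1,\u)$.

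For $m_1\in\N$, by construction $\overline{P}_{\Gam}(m_1,\u)$ is the pressure $P_{\widehat{\Gam}}(\u)$ of the NNSOFT $C_{\widehat{\Gam}}(\Z^{d-1})$ built from closed walks of length $m_1$ on $\Gamma_1$; in the special case $d=2$ this is the pressure of the one-dimensional NNSOFT $C_{\widehat{\Gamma}_2}(\Z)$. Proposition~\ref{onedimpressure}, applied to the single digraph $\widehat{\Gamma}_2$, gives $P_{\widehat{\Gamma}_2}(\u) = \log\rho(D_{\widehat{\Gamma}_2}(\u)) = \log\theta_2(m_1,\u)$. Substituting this into the two inequalities from Theorem~\ref{perulb} yields the claim in (\ref{ubspecr2}) for $r\in\N$ and $q\in\N$.

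The one point requiring a little care is the degenerate case $q=0$ in the lower bound, since the corollary explicitly allows $q\in\Z_+$. Here $\overline{P}_{\Gam}(0,\u)$ is defined in the text as the entropy $h_{\Gamma_2}=P_{(\Gamma_2)}(\0)$, and one must check that this matches $\log\theta_2(0,\u)$ under the natural interpretation of $\widehat{\Gamma}_2$ at $m_1=0$. I expect this to be the main (minor) obstacle: one has to verify from the $m_1=0$ convention used just before Theorem~\ref{perulb} that the $q=0$ case of the lower bound reduces correctly, and then invoke Proposition~\ref{onedimpressure} again. Once this bookkeeping is done, the two inequalities in (\ref{ubspecr2}) follow with no further computation.
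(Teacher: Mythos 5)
Your proposal is correct and matches the paper's own derivation: the corollary is obtained by specializing Theorem~\ref{perulb} to $d=2$ and identifying $\overline{P}_{\Gam}(m_1,\u)$ with $\log\theta_2(m_1,\u)$ via Proposition~\ref{onedimpressure} applied to the one-dimensional NNSOFT $C_{\widehat{\Gamma}_2}(\Z)$. Your caution about the $q=0$ endpoint is well placed but resolves exactly as you suspect, by the explicit convention $\overline{P}_{\Gam}(0,\u) = h_{\Gamma_2}$ stated just before Theorem~\ref{perulb} (one simply reads $\log\theta_2(0,\u)$ as that quantity rather than invoking Proposition~\ref{onedimpressure} a second time).
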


 In (\ref{ubspecr2}) take $q=0$ and $p=2r$, and send $r$ to
 $\infty$.  Clearly the upper and lower bounds then converge to
 $P_\Gam(\u)$. Hence $P_\Gam(\u)$ is computable, as shown in \cite{Fr1}
 for the entropy $P_{\Gam}(\0)$.
 Combining the arguments of the proof of Theorem \ref{perulb} with
 the arguments of the proof of Theorem 3.4 in \cite{FP}, we obtain
 \begin{equation}\label{entdef}
 P_\Gam(\u) \leq \frac{\log\rho(D_\Gam(\m^{-},\u))}{\vol(\m^-)},
 \qquad m_2,\ldots ,m_{d} \text{ even, }\;
 \Gamma_2,\ldots,\Gamma_{d} \text{ symmetric},
 \end{equation}
 where  $D_\Gam(\m^{-},\u)$ is defined in (\ref{Amu}).

  \section{The Conjugate of Pressure and the Density
  Entropy}\label{densityentropy}

  The purpose of this section to exhibit a striking connection
  between the conjugate function $P_\Gam^*(\p)$ of the
  pressure $P_\Gam(\u)$ and the density entropy $h_{\Gamma}(\p)$
  for certain probability vectors $\p$.

  First we need to recall some properties of convex functions,
  which can be found in \cite{Roc}.  We adopt the notations of that
  book.

  In this paper we consider only convex functions $f:\R^m \to \R
  \cup \set{+\infty}$ that are not identically equal to $+\infty$.
  Such convex functions are called \emph{proper} in \cite{Roc}. Let
  $f$ be a proper convex function.  Then $\dom f:=\{\x \in \R^m :\;
  f(\x) < \infty\}$, which is called the \emph{effective domain} of
  $f$, is a nonempty convex set in $\R^m$. Let $L \subseteq \R^m$ be
  the minimal affine subspace that contains $\dom f$, and let $l$ be
  its dimension. The affine transformation $A$ that maps $L$ onto
  $\R^l$ maps $\dom f$ onto a convex set $C \subseteq \R^l$. We
  denote the interior of $C$ by $\inter C$. Then $\ri (\dom f) :=
  A^{-1}(\inter C)$ is called the \emph{relative interior} of $\dom f$
  (note that if $l=0$, then $\dom f$ and $\ri (\dom f)$ consist of
  the same single point). A proper convex function $f$ is
  Lipschitzian relative to any closed bounded subset of $\ri (\dom
  f)$ \cite[Thm 10.4]{Roc}.  In particular, $f$ is continuous on
  $\ri (\dom f)$ \cite[Thm 10.1]{Roc}.

  A proper convex function $f$ is called \emph{closed} if $f$ is
  lower semi-continuous \cite[Section~7, p.\ 52]{Roc}. In particular, if
  $f:\R^m \to \R$ is convex, then $\dom f = \R^m = \ri(\dom f$), $f$ is a
  continuous function on $\R^m$, hence closed, and $f$ is
  Lipschitzian relative to any closed bounded subset of $\R^m$.

%  DO WE NEED THIS PARAGRAPH? Consider the following way of
%  generating a proper convex function $f : \R^m \to \R \cup
%  \set{\infty}$.  Let $C \subseteq \R^m$ be a nonempty convex set and
%  assume that $g : C \to \R$ is a convex function. Define $f$ to be
%  equal to $g$ on $C$ and $f(\x) = \infty$ for all $\x \in
%  \R^m \setminus C$.  Then $f$ is a proper convex function on $\R^m$
%  with $\dom f = C$.

  A vector $\y \in \R^m$ is called a \emph{subgradient of $f$ at $\x
  \in \R^m$} if $f(\z) \geq f(\x) + \y\trans (\z-\x)$ for all $\z
  \in \R^m$.  The set of all subgradients $\y$ at $\x$ is called
  \emph{the subdifferential of $f$ at $\x$} and is denoted by
  $\partial f(\x)$. As usual, for any set $S \subseteq \R^m$,
  $\partial f(S)$ denotes $\cup_{x \in S} \partial f(\x)$. Obviously
  $\partial f(\x)$ is a closed convex set.  If $\partial f(\x) \neq
  \emptyset$, then $f$ is said to be \emph{subdifferentiable at
  $\x$}.  A proper convex function $f$ is not subdifferentiable at
  any $\x \notin \dom f$, but is subdifferentiable at each $\x \in
  \ri (\dom f)$ \cite[Thm 23.4]{Roc}. Recall that $f$ is
  \emph{differentiable at $\x$} if there exists a vector $\nabla
  f(\x) = \y \in \R^m$ (necessarily unique) such that $f(\x + \w) =
  f(\x) + \y\trans \w + o(\|\w\|)$, $\w \to \0$. The vector $\nabla
  f(\x)$ is called \emph{the gradient of $f$ at $\x$}. We denote by
  $\Der f$ the set of all points where $f$ is
  differentiable, so $\partial f(\Der f)$ denotes the set of all gradient
  vectors of $f$.
  A proper convex function $f$ is differentiable at
  a point $\x \in \dom f$ if and only if $\partial f(\x)$ consists
  of a single point, which is then $\nabla f(\x)$ \cite[Thm
  25.1]{Roc}.

  Assume that $f$ is a proper convex function and $\inter(\dom f)
  \neq \emptyset$. Then $\Der f$ is a dense subset of $\inter (\dom
  f)$, $f$ is differentiable a.e.\ (almost everywhere) in $\inter
  (\dom f)$, and $\nabla f$ is continuous on $\Der f$ \cite[Thm
  25.5]{Roc}.
  Moreover, for each $\x\in \inter (\dom f) \setminus \Der f$, the
  convex set $\partial f(\x)$, which consists of more than one
  point, can be reconstructed as follows from the values of the
  gradient function $\nabla f$ on $\Der f$. Let $S(\x)$ consist of
  all the limits of sequences $\nabla f(\x_i)$ such that $\x_i \in
  \Der f$ and $\x_i \to \x$. Then $S(\x)$ is a closed and bounded
  subset of $\R^m$ and $\partial f (\x)=\conv S(\x)$ \cite[Thm 25.6,
  7.4]{Roc}.
  %NOTE: Here is why $S(\x)$ is bounded. Take a closed
  %ball $B \subseteq \inter (\dom f)$ around $\x$. Then $f$ is
  %Lipschitzian relative to $B$, say with Lipschitz constant $L$.
  %Take $i$ so large that $\x_i \in \inter B$. Let $\u$ be an
  %arbitrary unit vector and take $\varepsilon > 0$ so small that $\x_i
  %+ \varepsilon \u \in B$. Then $\frac{|f(\x_i + \varepsilon \u) -
  %f(\x_i)|}{\varepsilon} \leq L$. letting $\varepsilon \to 0$, we see that
  %$|\nabla f(\x_i)\trans \cdot \u| \leq L$. Since $\u$ is arbitrary,
  %it follows that $\|\nabla f(\x_i)\| \leq L$. Thus $\| \lim_{i \to
  %\infty} \nabla f(\x_i)\| \leq L$. Here is why $S(\x)$ is closed.
  %Let $\y_i \in S(\x)$ and $\lim_{i \to \infty} \y_i = \y$. For each
  %$i$ there is a sequence $\x_{i,j} \in \Der f$ with $\lim_{j \to
  %\infty} \x_{i,j} = \x$ and $\lim_{j \to \infty} \nabla f(\x_{i,j})
  %= \y_i$. Choose $j(i)$ so large that $\lim_{i \to \infty}
  %\x_{i,j(i)} = \x$ and $\|\nabla f(\x_{i,j(i)}) - \y_i\| <
  %\frac{1}{i}$. Then $\lim_{i \to \infty} \nabla f(\x_{i,j(i)}) =
  %\y$, demonstrating $\y \in S(\x)$. Indeed, given $\varepsilon > 0$,
  %take $i$ so large that $\| \y_i - \y \| < \frac{\varepsilon}{2}$ as
  %well as $\frac{1}{i} < \frac{\varepsilon}{2}$, and thus $\| \nabla
  %f(x_{i,j(i)}) - \y \| < \varepsilon$. Since $S(\x)$ is closed and
  %bounded, so is $\conv S(\x)$ \cite[Thm 17.2]{Roc}. Therefore we
  %can omit the closure operator in \cite[Thm 25.6]{Roc}. Also $K(\x)
  %= \set{\0}$ in that theorem since $\x \in \inter (\dom f)$.

  We now recall properties of the conjugate of a convex function $f$
   \cite[Section~12]{Roc}, denoted by $f^*$:
  \[f^*(\y) := \sup_{\x \in \R^n} \x\trans \y - f(\x) \quad \text{for each }
  \y \in \R^m.\]
  Since we assumed that $f$ is proper, it follows that $f^*$ is a
  proper closed convex function; moreover, if
  $f$ is closed then $f^{**}=f$ \cite[Thm. 12.2]{Roc}.
  A straightforward argument shows that
  \begin{equation}\label{fstari}
  f^*(\y)=\x\trans \y - f(\x) \quad \text{ for each subgradient } \y \in \partial f(\x).
  \end{equation}
%  Thus if $f:\R^m\to\R$ is a convex differentiable function
%  then $f^*:\R^m\to\R$ is the Legendre transform of $f$.

  Recall that if $f$ is closed then $\partial f^*$ is the inverse
  of $\partial f$ in the sense of multivalued mappings, i.e.,
  $\x \in \partial f^*(\y)$ if and only if $\y \in \partial f(\x)$ \cite[Cor
  23.5.1]{Roc}.  In what follows we need the following result:

  \begin{lemma}\label{ransubf}
  Let $f$ be a proper closed convex function on $\R^m$. Then
  $\partial f(\R^m)$ is exactly the set of points in $\R^m$ where
  $f^*$ is subdifferentiable. In particular, $\ri (\dom f^*)
  \subseteq \partial f(\R^m) \subseteq \dom f^*$, and the closure of
  $\partial f(\R^m)$ is equal to the closure of $\dom f^*$.
  \end{lemma}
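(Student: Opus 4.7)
The plan is to chain together three pieces of the convex-analysis toolbox already cited in the paper: Rockafellar's Cor.\ 23.5.1 (inverse subdifferential), Thm.\ 23.4 (subdifferentiability on the relative interior of the effective domain), and the standard identity $\cl(\ri C)=\cl C$ for nonempty convex $C$.

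First I would establish the main equivalence: a point $\y\in\R^m$ lies in $\partial f(\R^m)$ iff $f^*$ is subdifferentiable at $\y$. This is just Cor.\ 23.5.1 applied in both directions. Indeed, $\y\in\partial f(\R^m)$ means there exists $\x$ with $\y\in\partial f(\x)$, and since $f$ is closed the corollary says this is equivalent to $\x\in\partial f^*(\y)$, i.e.\ $\partial f^*(\y)\neq\emptyset$.

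Next I would prove the two inclusions $\ri(\dom f^*)\subseteq\partial f(\R^m)\subseteq\dom f^*$. Since $f$ is proper and closed, $f^*$ is a proper closed convex function as well, so Thm.\ 23.4 applied to $f^*$ gives that $f^*$ is subdifferentiable at every point of $\ri(\dom f^*)$; combining with the equivalence above yields the first inclusion. For the second inclusion, take $\y\in\partial f(\x)$ and use the subgradient inequality $f(\z)\geq f(\x)+\y\trans(\z-\x)$ for all $\z\in\R^m$, which rearranges to $\z\trans\y-f(\z)\leq \x\trans\y-f(\x)$ for all $\z$. Taking the supremum over $\z$ gives $f^*(\y)=\x\trans\y-f(\x)<\infty$, so $\y\in\dom f^*$ (this incidentally reproves (\ref{fstari})).

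Finally, for the closure statement, I would note that $\dom f^*$ is a nonempty convex set (because $f^*$ is proper), so its relative interior is nonempty and satisfies $\cl(\ri(\dom f^*))=\cl(\dom f^*)$. Taking closures in the sandwich $\ri(\dom f^*)\subseteq\partial f(\R^m)\subseteq\dom f^*$ then forces $\cl(\partial f(\R^m))=\cl(\dom f^*)$. There is no real obstacle here; the proof is essentially a clean repackaging of Rockafellar's results, and the only point that requires any care is checking that $f^*$ really is a proper closed convex function (which the paper has already recorded right before the lemma) so that Thm.\ 23.4 and Cor.\ 23.5.1 apply to it.
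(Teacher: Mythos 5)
Your proposal is correct and follows essentially the same route as the paper: the equivalence via Rockafellar's Cor.\ 23.5.1 for closed $f$, subdifferentiability of the proper function $f^*$ on $\ri(\dom f^*)$, and $\cl(\ri C)=\cl C$ for the closure claim. The only cosmetic difference is that you verify $\partial f(\R^m)\subseteq\dom f^*$ directly from the subgradient inequality (reproving (\ref{fstari})), whereas the paper invokes the fact that $f^*$ is not subdifferentiable outside $\dom f^*$; both are equally valid.
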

  \begin{Proof} Assume that $f^*$ is subdifferentiable at $\y$. Then there
  exists some $\x \in \R^m$ such that $\x \in \partial f^*(\y)$.
  Since $f$ is closed, it now follows that $\y \in \partial f(\x)$,
  so $\y \in \partial f(\R^m)$. The converse is shown in the same
  way. The first statement of the ``in particular'' now follows
  since $f^*$ is proper and hence, as noted above, is
  subdifferentiable in $\ri (\dom f^*)$ but not outside $\dom f^*$.
  The second statement of the ``in particular'' follows from the
  first one and from the fact that for a convex set $S$ such as
  $\dom f^*$, $\ri S$ and $S$ have the same closure \cite[Thm
  6.3]{Roc}. \end{Proof}

  We return to a general NNSOFT $C_\Gam(\Z^d)$. We assume throughout
  that $C_\Gam(\Z^d) \neq \emptyset$, for otherwise the pressure
  function $P_\Gam$ is identically $-\infty$.
  \begin{prop}\label{conthgu}
  The pressure $P_\Gam$ is a convex nonexpansive Lipschitz function
  on $\R^n$ with constant at most $1$ with respect to the norm
  $\|(v_1,\ldots,v_n)\|_{\infty} := \max_{i \in \an{n}} |v_i|$ :
  \begin{equation}\label{lipcon}
   |P_\Gam(\u+\v) - P_\Gam(\u)| \leq \|\v\|_{\infty}, \quad \u,\v \in \R^n.
  \end{equation}
  In particular, $P_\Gam$ is finite throughout $\R^n$ (this also follows from
  (\ref{ub1})). Therefore it is a proper closed convex function.
  \end{prop}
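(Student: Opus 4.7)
The plan is to pass the Lipschitz estimate from the partition function up to the pressure by a direct comparison, then read off the remaining claims.

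First I would exploit the key constraint that every coloring $\phi \in C_\Gam(\an{\m})$ satisfies $\bc(\phi)\trans \1 = \vol(\m)$ with all coordinates nonnegative, so by Hölder $|\bc(\phi)\trans \v| \leq \|\bc(\phi)\|_1 \|\v\|_\infty = \vol(\m)\,\|\v\|_\infty$. Multiplying each summand in (\ref{wmufor}) by $e^{\bc(\phi)\trans \v}$ and factoring out the uniform bound gives
\[
  e^{-\vol(\m)\|\v\|_\infty}\, Z_\Gam(\m,\u) \;\leq\; Z_\Gam(\m,\u+\v) \;\leq\; e^{\vol(\m)\|\v\|_\infty}\, Z_\Gam(\m,\u).
\]
Taking logarithms, dividing by $\vol(\m)$, and sending $\m\to\infty$ via (\ref{pres}) yields the inequality (\ref{lipcon}), provided both limits are finite.

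Next I would verify finiteness. For the upper bound, $\# C_\Gam(\an{\m}) \leq n^{\vol(\m)}$ and each summand in (\ref{wmufor}) is at most $e^{\vol(\m)\|\u\|_\infty}$, so $P_\Gam(\u) \leq \log n + \|\u\|_\infty < \infty$. For the lower bound I would use the standing assumption $C_\Gam(\Z^d)\neq\emptyset$: restricting a global $\Gam$-coloring to $\an{\m}$ gives $C_\Gam(\an{\m})\neq\emptyset$ for every $\m$, hence $Z_\Gam(\m,\u) \geq e^{-\vol(\m)\|\u\|_\infty}$ and therefore $P_\Gam(\u) \geq -\|\u\|_\infty > -\infty$.

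Finally I would collect the remaining conclusions. Convexity of $P_\Gam$ on $\R^n$ was already established in the previous section from the log-convexity of $Z_\Gam(\m,\cdot)$ and the fact that pointwise limits of convex functions are convex. Combined with finiteness, $\dom P_\Gam=\R^n$, so $P_\Gam$ is proper. The Lipschitz estimate (\ref{lipcon}) implies continuity, in particular lower semi-continuity, so $P_\Gam$ is closed in the sense of \cite[Section 7]{Roc}.

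I do not expect a real obstacle: the argument hinges on the single observation that $\|\bc(\phi)\|_1 = \vol(\m)$ regardless of $\phi$, which turns a term-by-term perturbation bound into a uniform bound that survives the normalization by $\vol(\m)$. The only mild subtlety is making sure the lower bound on $Z_\Gam(\m,\u)$ is not vacuous, which is exactly why the hypothesis $C_\Gam(\Z^d)\neq\emptyset$ is invoked.
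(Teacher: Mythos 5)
Your proof is correct and takes essentially the same route as the paper: the same term-by-term bound $|\bc(\phi)\trans\v|\le\vol(\m)\|\v\|_\infty$ leading to the two-sided estimate on $Z_\Gam(\m,\u+\v)$, followed by logarithms, division by $\vol(\m)$, and the limit. Your explicit finiteness check is a sensible elaboration, but it is not a genuinely different argument; the paper dispatches it by citing the upper bound (\ref{ub1}) and the standing nonemptiness assumption.
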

  \begin{Proof}  The convexity of $P_\Gam$ was pointed out in Section~1.
  Let $\phi \in C_\Gam(\an{\m})$.  Then
  \[|\bc(\phi)\trans \v| \leq \vol(\m) \|\v\|_{\infty}.\]
  Therefore a term-by-term comparison gives
  \[e^{-\vol(\m)\|\v\|_{\infty}} Z_\Gam(\m,\u) \leq Z_\Gam(\m,\u+\v) \leq
  e^{\vol(\m)\|\v\|_{\infty}} Z_\Gam(\m,\u).\]
  Take logarithms and divide by $\vol(\m)$ to obtain
  \begin{equation}\label{liplogw}
  \left| \frac{\log Z_\Gam(\m,\u + \v)}{\vol(\m)} -
         \frac{\log Z_\Gam(\m,\u)}{\vol(\m)}\right|
         \leq \|\v\|_{\infty}.
  \end{equation}
  Letting $\m \to \infty$, we deduce (\ref{lipcon}).  \end{Proof}

  Since $P_\Gam:\R^n\to \R$ is a convex function,
  it is differentiable almost everywhere.
  Consider the set of probability distributions on the set of
  colors
  \[\Pi_n := \{\p = (p_1,\ldots,p_n) : p_1,\ldots,p_n
  \geq 0,\; p_1+\cdots+p_n = 1\}.\]
  For
  $m \in \N$, we denote
  \[\Pi_n(m):=\{\bc=(c_1,\ldots,c_n) \in \Z_+^n :  c_1 + \cdots + c_n = m\} = m\Pi_n \cap \Z^n.\]
  Let
  \[C_\Gam(\an{\m},\bc):=\{\phi \in C_\Gam(\an{\m}) : \bc(\phi)=\bc\}, \quad \text{for all }
  \bc \in \Pi_n(\vol(\m)).\]
  This is the set of $\Gam$-colorings of $\an{\m}$ with color frequency vector $\bc$.
  \begin{defn}\label{deflamb1}
   A probability distribution $\p \in \Pi_n$ is called a density point of $C_\Gam(\Z^d)$
   when there exist sequences of boxes $\an{\m_q} \subseteq \N^d$ and
   color frequency vectors $\bc_q \in \Pi_n(\vol(\m_q))$ such that
   \begin{equation}\label{deflamb2}
    \m_q \to \infty, \;\; C_\Gam(\an{\m_q},\bc_q) \neq \emptyset \;\ \forall q \in \N, \;\;
    \text{ and } \lim_{q \to \infty} \frac{\bc_q}{\vol(\m_q)} = \p.
   \end{equation}
   We denote by $\Pi_{\Gam}$ the set of all density points of $C_\Gam(\Z^d)$.
   For $\p \in \Pi_{\Gam}$ we let
   \begin{equation}\label{hGamLam}
   h_{\Gam}(\p) := \sup_{\m_q,\bc_q} \limsup_{q \to \infty}
   \frac{\log\#C_\Gam(\an{\m_q},\bc_q)}{\vol(\m_q)}\ge 0,
   \end{equation}
   where the supremum is taken over all sequences satisfying
   (\ref{deflamb2}). One can think of $h_{\Gam}(\p)$ as the entropy
   for color density $\p$, called here the density entropy.
  \end{defn}
  It is straightforward to show (using a variant of the Cantor
  diagonal argument) that $\Pi_{\Gam}$ is a closed set.
  Furthermore, $h_{\Gam}$ is upper semi-continuous on $\Pi_{\Gam}$,
  because it is defined as a supremum.

  \begin{theo}\label{maxchar}
   Let $P_\Gam^*$ be
   the conjugate convex function
   of the pressure function $P_\Gam$.
   Then
   { % begin scope of \renewcommand
   \renewcommand{\theenumi}{\alph{enumi}}
   \renewcommand{\labelenumi}{(\theenumi)}
   \begin{enumerate}
     \item \label{two} $h_\Gam(\p) \leq -P^*_\Gam(\p)$ for all $\p \in
     \Pi_\Gam$.
     \item \label{four}
      \begin{equation}\label{maxchar1}
      P_\Gam(\u) = \max_{\p \in \Pi_{\Gam}}
      (\p\trans\u + h_{\Gam}(\p)) \quad \text{ for all } \u\ \in
      \R^n.
      \end{equation}
      For $\u \in \R^n$, we denote
     \begin{equation}\label{defpu}
      \bpi(\u) := \arg \max_{\p \in \Pi_\Gam} (\p\trans \u + h_{\Gam}(\p))
      = \{\p \in \Pi_\Gam : P_\Gam(\u) = \p\trans \u + h_{\Gam}(\p)\},
     \end{equation}
%      that is to say $\p(\u)$ is any vector satisfying
%      \begin{equation}\label{defpu}
%      \p(\u) \in \Pi_\Gam \;\;\text{ and }\;\; P_\Gam(\u) = {\p(\u)}\trans \u + h_{\Gam}(\p(\u)).
%      \end{equation}
     \item \label{six} For each $\p \in \bpi(\u)$, $h_\Gam(\p)= -P^*_\Gam(\p)$.
     \item \label{five} $\bpi(\u)\subseteq \partial P_{\Gam}(\u)$. In
     particular, if $\u \in \Der P_\Gam$, then $\bpi(\u) = \{\nabla
     P_\Gam(\u)\}$. Therefore $\partial P_\Gam(\Der P_\Gam) \subseteq
     \Pi_\Gam$.
%     \item \label{three} For each $\p \in
%     \Pi_\Gam \cap \partial P_\Gam(\R^n)$ there exists $\u \in \R^n$
%     such that $\p \in \bpi(\u)$, and hence by (\ref{six}) $h_\Gam(\p) =
%     -P^*_\Gam(\p)$.
     \item \label{threebis}
     Let $\u \in \R^n \setminus \Der P_\Gam$, and let $S(\u)$
     consist of all the limits of sequences $\nabla P_\Gam(\u_i)$
     such that $\u_i \in \Der P_\Gam$ and $\u_i \to \u$. Then $S(\u)
     \subseteq \bpi(\u)$.
     \item \label{one}
%  Added on August 18, 2006
     $\conv \bpi(\u)=\conv S(\u)=\partial P_\Gam(\u)$.  Hence
     $\partial P_\Gam(\R^n)\subseteq \conv \Pi_\Gam \subseteq \Pi_n$.
     \item \label{onebis} $\conv \Pi_\Gam = \dom P^*_\Gam$.
   \end{enumerate}
   } % end scope of renewcommand
  \end{theo}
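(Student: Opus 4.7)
The plan is to establish the seven parts largely in order, with part (b) as the variational cornerstone. From (b) the parts (a), (c), (d) fall out immediately, and (e)--(g) follow by standard convex-analytic arguments together with the cited results of Rockafellar and Lemma~\ref{ransubf}.

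The heart of the argument is (b). I would group the partition function by color frequency,
\[Z_\Gam(\m,\u)=\sum_{\bc\in\Pi_n(\vol(\m))}\#C_\Gam(\an{\m},\bc)\,e^{\bc\trans\u},\]
and use that $\#\Pi_n(\vol(\m))$ grows only polynomially in $\vol(\m)$, so $\log Z_\Gam(\m,\u)/\vol(\m)$ is asymptotically equal to $\max_\bc[\log\#C_\Gam(\an{\m},\bc)/\vol(\m)+(\bc/\vol(\m))\trans\u]$. Passing to a subsequence along which the argmax color density $\bc_\m/\vol(\m)$ converges in the compact simplex $\Pi_n$ to some $\p$, that $\p$ lies in $\Pi_\Gam$ by construction, and one obtains the upper bound $P_\Gam(\u)\le\p\trans\u+h_\Gam(\p)$. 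For the lower bound, each $\p\in\Pi_\Gam$ with witness sequence $(\m_q,\bc_q)$ gives term-by-term $Z_\Gam(\m_q,\u)\ge\#C_\Gam(\an{\m_q},\bc_q)e^{\bc_q\trans\u}$ and hence $P_\Gam(\u)\ge\p\trans\u+h_\Gam(\p)$; taking the supremum over $\p\in\Pi_\Gam$ and combining proves (b). The max is attained by upper semi-continuity of $h_\Gam$ on the compact set $\Pi_\Gam$.

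The remaining parts then follow quickly. The lower bound in (b), valid for every $\u$, rearranges to $h_\Gam(\p)\le -P^*_\Gam(\p)$, proving (a). For $\p\in\bpi(\u)$, $P_\Gam(\u)-\p\trans\u=h_\Gam(\p)\le -P^*_\Gam(\p)\le P_\Gam(\u)-\p\trans\u$ forces equality, giving (c). For (d), if $\p\in\bpi(\u)$ then for any $\v$, (b) gives $P_\Gam(\v)\ge\p\trans\v+h_\Gam(\p)=P_\Gam(\u)+\p\trans(\v-\u)$, so $\p\in\partial P_\Gam(\u)$; the differentiable case is then standard. For (e), if $\u_i\to\u$ with $\u_i\in\Der P_\Gam$ and $\nabla P_\Gam(\u_i)\to\q$, then (d) gives $P_\Gam(\u_i)=\nabla P_\Gam(\u_i)\trans\u_i+h_\Gam(\nabla P_\Gam(\u_i))$; continuity of $P_\Gam$ (Proposition~\ref{conthgu}), closedness of $\Pi_\Gam$, and upper semi-continuity of $h_\Gam$ yield $P_\Gam(\u)\le\q\trans\u+h_\Gam(\q)$, which combined with (b) gives $\q\in\bpi(\u)$. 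Part (f) is a short chain: $\conv S(\u)\subseteq\conv\bpi(\u)\subseteq\partial P_\Gam(\u)$ by (e) and (d) plus convexity of the subdifferential, while $\partial P_\Gam(\u)=\conv S(\u)$ by the cited Rockafellar Theorems 25.6 and 7.4; since each $\partial P_\Gam(\u)=\conv\bpi(\u)\subseteq\conv\Pi_\Gam\subseteq\Pi_n$, the inclusion $\partial P_\Gam(\R^n)\subseteq\conv\Pi_\Gam$ follows. Finally, (g) uses (a) to obtain $\conv\Pi_\Gam\subseteq\dom P^*_\Gam$ (since $P^*_\Gam\le -h_\Gam<\infty$ on $\Pi_\Gam$ and $\dom P^*_\Gam$ is convex), while (f) and Lemma~\ref{ransubf} give $\dom P^*_\Gam\subseteq\overline{\partial P_\Gam(\R^n)}\subseteq\overline{\conv\Pi_\Gam}=\conv\Pi_\Gam$, the last equality because $\Pi_\Gam$ is compact.

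The main obstacle is the upper-bound half of (b). One must argue that the argmax color-density vector $\bc_\m/\vol(\m)$ has a subsequential limit $\p$ actually lying in $\Pi_\Gam$, i.e.\ that $C_\Gam(\an{\m},\bc_\m)\ne\emptyset$ along the chosen subsequence (automatic at the argmax, but requiring the multi-subsequencing of Definition~\ref{def:multilimsup}), and that the $\limsup$ of $\log\#C_\Gam(\an{\m},\bc_\m)/\vol(\m)$ along that subsequence is majorized by $h_\Gam(\p)$. Once (b) is in place, the remaining six parts reduce to clean convex-duality bookkeeping.
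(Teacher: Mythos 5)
Your proposal follows essentially the same route as the paper: the lower bound of (b) by keeping a single term of the partition sum, the upper bound by grouping $Z_\Gam(\m,\u)$ over the polynomially many color-frequency classes and extracting a convergent argmax subsequence landing in $\Pi_\Gam$, then (a), (c), (d), (e) by the same rearrangements and semicontinuity arguments, and (f), (g) via Rockafellar's Theorems 25.6/7.4 and Lemma~\ref{ransubf} with the same closure argument. The ``obstacle'' you flag in the upper bound is handled exactly as you suggest (and as in the paper), since $h_\Gam(\p)$ is by definition a supremum over all admissible sequences, so no gap remains.
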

  \begin{Proof}  First we show that $P_\Gam(\u) \geq \p\trans \u +
  h_{\Gam}(\p)$ for all $\p \in \Pi_{\Gam}$.  Fix $\p \in
  \Pi_{\Gam}$ and let $\m_q,\bc_q$, $q \in \N$, be sequences
  satisfying (\ref{deflamb2}). We have $Z_\Gam(\m_q,\u) \geq
  \#C_\Gam(\m_q,\bc_q) e^{\bc_q\trans \u}$, since the right-hand side
  is just one term of the sum represented by left-hand side. Take
  logarithms, divide by $\vol(\m_q)$, take $\limsup_{q \to
  \infty}$ and use the definition of $P_\Gam(\u)$ and the limit in
  (\ref{deflamb2}) to deduce $P_\Gam(\u) \geq \p\trans \u +
  \limsup_{q \to \infty} \frac{\log\#C_\Gam(\m_q,\bc_q)}{\vol(\m_q)}$.
  Now take the supremum over all sequences $\m_q, \bc_q$ satisfying
  (\ref{deflamb2}) and use (\ref{hGamLam}) to obtain
  \begin{equation}\label{upinhgu1}
   P_\Gam(\u) \geq \p\trans \u + h_{\Gam}(\p) \quad \text{ for all } \p \in \Pi_\Gam
  \end{equation}
  and thus
  \begin{equation}\label{upinhgu}
  P_\Gam(\u) \geq \sup_{\p \in \Pi_{\Gam}}\p\trans \u + h_{\Gam}(\p).
  \end{equation}
  On the other hand, (\ref{upinhgu1}) can be written as $\p\trans \u
  - P_\Gam(\u) \leq -h_{\Gam}(\p)$ for all $\p \in \Pi_\Gam$, and
  then taking the supremum over $\u$ gives $P^*_{\Gam}(\p) \leq
  -h_{\Gam}(\p) < \infty $ for all $\p \in \Pi_\Gam$.  Thus we have
  established (\ref{two}) as well as
  \begin{equation}\label{pigammasubsetconjugate}
  \Pi_\Gam \subseteq \dom P^*_\Gam.
  \end{equation}
  We now show that for each $\u \in \R^n$ there exists $\p(\u) \in
  \Pi_{\Gam}$ satisfying $P_\Gam(\u) \leq \p(\u)\trans \u +
  h_{\Gam}(\p(\u))$, which together with (\ref{upinhgu}) will
  establish (\ref{maxchar1}).
  Observe first that
  \[\#\Pi_n(m)= \binom{m+n-1}{n-1} = O(m^{n-1}), \quad m \to \infty.\]
  Therefore for each $\m \in \N^d$,
  \[Z_\Gam(\m,\u) = O(\vol(\m)^{n-1}) \max_{\bc \in \Pi_n(\vol(\m))}
  \#C_\Gam(\an{\m},\bc) e^{\bc\trans \u}.\]
  Let
  \begin{equation}\label{defcmu}
  \bC(\m,\u) := \arg\max_{\bc \in \Pi_n(\vol(\m))}
  \#C_\Gam(\an{\m},\bc) e^{\bc\trans \u}.
  \end{equation}
  Then for $\bc(\m,\u) \in \bC(\m,\u)$ we have
  \begin{equation}\label{estimate}
  Z_\Gam(\m,\u) = O(\vol(\m)^{n-1}) \#C_\Gam(\an{\m},\bc(\m,\u)) e^{\bc(\m,\u)\trans \u}.
  \end{equation}
  Since $C_\Gam(\Z^d) \neq \emptyset$, for each $\m \in \N^d$ and
  $\u \in \R^n$, $\frac{\bc(\m,\u)}{\vol(\m)}$ is a well-defined point
  in $\Pi_n$.  Choose a sequence $\m_{q} \to \infty$ such that
  $\frac{\bc(\m_{q},\u)}{\vol(\m_{q})}$ converges to some $\p(\u)$.
  We have $\p(\u) \in \Pi_{\Gam}$ by Definition~\ref{deflamb1}. Apply
  (\ref{estimate}) to $\m_{q}$, and use the definition of
  $P_\Gam(\u)$ and $h_{\Gam}(\p(\u))$ to deduce
  \[P_\Gam(\u) \leq {\p(\u)}\trans \u + \limsup_{q \to \infty}
  \frac{\log\#C_\Gam(\an{\m_{q}},\bc(\m_{q},\u))}{\vol(\m_{q})} \leq
  {\p(\u)}\trans \u + h_{\Gam}(\p(\u)),\]
  which is the desired inequality establishing (\ref{four}).

  By the definition of $\bpi(\u)$, for each $\p \in \bpi(\u)$ we have
  $-h_{\Gam}(\p)=\p\trans \u - P_{\Gam}(\u) \leq  P^*_{\Gam}(\p)$.
  Combining this with (\ref{two}),
  we deduce (\ref{six}).

  Let $\p \in \bpi(\u)$ and $\v\in\R^n$.  Then the maximal
  characterization (\ref{maxchar1}) of $P_{\Gam}(\u + \v)$ and (\ref{defpu})
  give
  \[P_{\Gam}(\u + \v)\geq \p\trans
  (\u + \v) + h_{\Gam}(\p) = \p\trans \v + P_{\Gam}(\u).\]
  This proves (\ref{five}).

%  We now show (\ref{three}). Let $\p \in
%  \Pi_\Gam \cap \partial P_\Gam(\R^n)$. We know that $P_\Gam$ is a proper closed convex
%  function on $\R^n$ by Proposition (\ref{conthgu}). Therefore by
%  Lemma \ref{ransubf},
%  $P^*_\Gam$ is subdifferentiable at $\p$.  Pick any $\u \in
%  \partial P^*_\Gam(\p)$.  It follows that
%  %
%  \[P_\Gam(\u) = P^{**}_\Gam(\u) = \u\trans
%  \p - P^*_\Gam(\p) \geq \u\trans \q - P^*(\q) \geq \u\trans \q +
%  h_\Gam(\q) \;\;\;\; \text{for all } \q \in \Pi_\Gam,\]
%  %
%  where the second equality follows from (\ref{fstari}), the first
%  inequality from the definition of $P^{**}_\Gam$, and the second
%  inequality from (\ref{two}). Thus $\p$ achieves the maximum in
%  (\ref{maxchar1}) for this $\u$, i.e., $\p \in \bpi(\u)$.

  We now prove (\ref{threebis}). Assume that $\u \in \R^n \setminus
  \Der P_\Gam$ and $\p \in S(\u)$. Then there exists a sequence
  $\u_i \in \Der P_\Gam$ such that $\u_i \to \u$ and $\nabla
  P_\Gam(\u_i) \to \p$. We have $\{\nabla P_\Gam(\u_i)\} = \bpi(\u_i)
  \subseteq
  \Pi_\Gam$ by (\ref{five}), and since $\Pi_\Gam$ is closed, $\p \in
  \Pi_\Gam$.
%  Since $\p \in S(\u)$ and $\conv S(\u) = \partial
%  P_\Gam(\u)$, we have $\p \in \partial P_\Gam(\u)$. For the
%  ``furthermore'' part, we observe that
  By definition of $\bpi(\u_i)$ we have $P_\Gam(\u_i) = \nabla
  P_\Gam(\u_i)\trans \u_i + h_\Gam(\nabla P_\Gam(\u_i))$. When $i
  \to \infty$ we have firstly $P_\Gam(\u_i) \to P_\Gam(\u)$ by the
  continuity of $P_\Gam$, secondly $\nabla P_\Gam(\u_i)\trans \u_i
  \to \p\trans \u$, and thirdly $\limsup h_\Gam(\nabla P_\Gam(\u_i)
  \leq h_\Gam(\p)$ by the upper semi-continuity of $h_\Gam$.
  Therefore $P_\Gam(\u) \leq \p\trans \u + h_\Gam(\p)$. This, the
  fact that $\p \in \Pi_\Gam$, and (\ref{maxchar1}) show that
  $P_\Gam(\u) = \p\trans \u + h_\Gam(\p)$, which by definition means
  $\p \in \bpi(\u)$.

  We show the first first identity of (\ref{one}).
  Let
  $\u \in \R^n \setminus \Der P_\Gam$, and let $S(\u)$ be as in
  (\ref{threebis}). By (\ref{threebis}) we have $S(\u)\subseteq
  \Pi_\Gam(\u)$, and therefore $\partial P_\Gam(\u) = \conv S(\u)
  \subseteq \conv \Pi_\Gam(\u)$.  Since $\partial P_\Gam(\u)$ is
  convex, from the first claim of (\ref{five})
  we obtain $\partial P_\Gam(\u)
  \supseteq \conv \Pi_\Gam(\u)$.  Hence $\partial P_\Gam(\u)
  = \conv \Pi_\Gam(\u)$.  Clearly, $\conv \Pi_\Gam(\u) \subseteq
  \conv \Pi_\Gam$.  Hence
  $\partial P_\Gam(\R^n) \subseteq \conv \Pi_\Gam$. The second inclusion of
  the second claim of (\ref{one}) follows from $\Pi_\Gam \subseteq \Pi_n$, which holds
  by definition of $\Pi_\Gam$.

  We finally show (\ref{onebis}). By (\ref{pigammasubsetconjugate})
  and the convexity of $\dom P^*_\Gam$, we have $\conv \Pi_\Gam
  \subseteq \dom P^*_\Gam$. It is left to show that $\dom P^*_\Gam
  \subseteq \conv \Pi_\Gam$. By Lemma~\ref{ransubf} $\partial
  P_\Gam(\R^n)$ is the set of all points where $P^*_\Gam$ is
  subdifferentiable. In particular $\ri (\dom P^*_\Gam) \subseteq
  \partial P_\Gam(\R^n)$, so by (\ref{one}) we have $\ri (\dom
  P^*_\Gam) \subseteq \conv \Pi_\Gam$. Apply the closure operator to
  both sides of this inclusion. On the left we get $\cl (\dom
  P^*_\Gam)$ by the convexity of $\dom P^*_\Gam$ (by \cite[Thm
  6.3]{Roc}, every convex set $C$ satisfies $\cl(\ri C) = \cl C$).
  On the right we get $\conv \Pi_\Gam$ because $\Pi_\Gam$ is closed.
  So we obtain $\dom P^*_\Gam \subseteq \cl (\dom P^*_\Gam)
  \subseteq \conv \Pi_\Gam$, as required.
%  Recall that if $\p\in\partial P_\Gam(\u)$ then
%  $\u\in\partial P_\Gam^*(\p)$, and therefore $\p\in \dom
%  P^*_\Gam = \conv \Pi_\Gam \subseteq \Pi_n$.
  \end{Proof}

  As $h_{\Gam} = P_{\Gam}(\0)$, we obtain from (\ref{maxchar1})
  the following generalization of \cite[(4.12)]{FP}, which deals
  with the case of monomer-dimer entropy:
  \begin{corol}\label{maxcharhgu}
   \[h_{\Gam} = \max_{\p \in \Pi_{\Gam}} h_{\Gam}(\p).\]
  \qed
  \end{corol}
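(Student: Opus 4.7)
The plan is essentially a one-line application of part (\ref{four}) of Theorem~\ref{maxchar}. The maximal characterization (\ref{maxchar1}) gives, for every $\u\in\R^n$,
\[
P_\Gam(\u) \;=\; \max_{\p\in\Pi_\Gam}\bigl(\p\trans\u+h_\Gam(\p)\bigr),
\]
and this supremum is attained (not merely approached) because the proof of (\ref{four}) produced an explicit maximizer $\p(\u)\in\Pi_\Gam$ as a cluster point of the normalized color counts $\bc(\m_q,\u)/\vol(\m_q)$.

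First I would specialize the identity to $\u=\0$. Then the linear term $\p\trans\u$ in (\ref{maxchar1}) vanishes identically in $\p$, and the right-hand side collapses to $\max_{\p\in\Pi_\Gam} h_\Gam(\p)$. Second, I would invoke the definitional equality $h_\Gam=P_\Gam(\0)$ recorded just after (\ref{perent}) to rewrite the left-hand side as $h_\Gam$. Combining these two steps yields the claimed identity
\[
h_\Gam \;=\; \max_{\p\in\Pi_\Gam} h_\Gam(\p).
\]

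There is no real obstacle here; the only thing to keep in mind is that we are entitled to write $\max$ rather than $\sup$, but that is already guaranteed by Theorem~\ref{maxchar}(\ref{four}), so nothing further needs to be argued. The corollary is therefore purely a substitution, and its proof reduces to citing (\ref{maxchar1}) with $\u=\0$.
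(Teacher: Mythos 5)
Your proof is correct and matches the paper's own derivation: the corollary is obtained exactly by setting $\u=\0$ in the maximal characterization (\ref{maxchar1}) and using $h_\Gam = P_\Gam(\0)$, with attainment of the maximum already guaranteed by Theorem~\ref{maxchar}(\ref{four}). Nothing further is needed.
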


  For each  $\p \in \Pi_\Gam(\R^n) := \bigcup_{\u \in \R^n}
  \Pi_\Gam(\u)$ we have $h_\Gam(\p) = -P_\Gam^*(\p)$ by (\ref{six})
  of Theorem (\ref{maxchar}). Since $P^*_\Gam$ is a convex function,
  we obtain the following generalization of the result Hammersley
  \cite{Ha1}.

  \begin{corol}\label{concavh*}
  The function $h_\Gam(\cdot) : \Pi_\Gam \to \R_+$ is concave
  on every convex subset of $\Pi_\Gam(\R^n)$.
  \qed
  \end{corol}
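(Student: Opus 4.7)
The plan is a short reduction to facts already collected in the paragraph immediately preceding the corollary. By Theorem~\ref{maxchar}(\ref{six}), for each $\u \in \R^n$ and every $\p \in \bpi(\u)$ one has $h_\Gam(\p) = -P^*_\Gam(\p)$. Taking the union over $\u$ extends this identity to all of $\Pi_\Gam(\R^n) = \bigcup_{\u \in \R^n} \bpi(\u)$.

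Next, $P^*_\Gam$ is a convex function on $\R^n$, as the Legendre--Fenchel conjugate of any proper convex function is convex (cf.\ \cite[Sec.~12]{Roc}). Hence $-P^*_\Gam$ is concave on $\R^n$. Consequently, for any convex subset $C \subseteq \Pi_\Gam(\R^n)$, the restriction $h_\Gam\big|_C$ coincides with the restriction of the concave function $-P^*_\Gam$ to $C$, and is therefore concave on $C$. This is precisely the assertion of the corollary.

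There is essentially no obstacle in the argument; the corollary is a direct repackaging of Theorem~\ref{maxchar}(\ref{six}) together with the convexity of the conjugate. The one subtlety worth flagging in the write-up is the distinction between $\Pi_\Gam$ and $\Pi_\Gam(\R^n)$: concavity of $h_\Gam$ is asserted only on convex subsets of the latter, because the identification $h_\Gam(\p) = -P^*_\Gam(\p)$ is guaranteed only for proportion vectors that realize a subgradient of the pressure, i.e.\ those lying in some $\bpi(\u)$.
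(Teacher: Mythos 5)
Your argument is correct and is exactly the paper's own: identify $h_\Gam = -P^*_\Gam$ on $\Pi_\Gam(\R^n)=\bigcup_{\u}\bpi(\u)$ via Theorem~\ref{maxchar}(\ref{six}) and then use the convexity of the conjugate $P^*_\Gam$ to conclude concavity on any convex subset. Nothing further is needed.
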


  To obtain the exact generalization of the result of Hammersley
  that $\Pi_\Gam$ is convex and $h_\Gam(\cdot) : \Pi_\Gam \to \R_+$
  is a concave function on the entire $\Pi_\Gam$, we need additional
  assumptions on the digraph $\Gam$, which do hold for the $\Gam$
  that codes the monomer-dimer tilings of $\Z^d$. For $\m \in \N^d$,
  if $\alpha : \an{\m} \to \an{n}$ is a coloring of a box $\an{\m}$
  and $\j \in \Z^d$, then to color the shifted box $\an{\m}+\j$ by
  $\alpha$ means to give to $\x+\j$ the color $\alpha(\x)$ for each
  $\x \in \an{\m}$. Recall that $C_\Gam(\an{\m})$ denotes the set of
  all $\Gam$-allowed colorings $\alpha : \an{\m} \to \an{n}$, that
  is to say, such that if $\x, \x+\e_i \in \an{\m}$, then
  $(\alpha(\x),\alpha(\y)) \in \Gamma_i$.

  \begin{defn}\label{dmrlgam}
   For a given digraph $\Gam = (\Gamma_1,\ldots, \Gamma_d)$ on the
   vertex set $\an{n}$, a set $\cF = \cup_{\m \in \N^d}
   \widetilde{C}_\Gam(\an{\m})$, where $\widetilde{C}_\Gam(\an{\m})
   \subseteq C_\Gam(\an{\m})$ for each $\m \in \N^d$, is called
   friendly if the following condition holds: whenever a shifted box
   is cut in two and each part is colored by a coloring in $\cF$,
   then the combined coloring also belongs to $\cF$. More precisely,
   let $\m, \n \in \N^d$ and $\j \in \Z^d$ be such that $\an{\m}
   \cap (\an{\n}+\j) = \emptyset$, and such that $T := \m \cup
   (\an{\n}+\j)$ is a box $\an{\k}+\i$ for some $\k \in \N^d$ and
   $\i \in \Z^d$. Let $\alpha \in \widetilde{C}_\Gam(\an{\m})$,
   $\beta \in \widetilde{C}_\Gam(\an{\n})$, and let $\gamma : T \to
   \an{n}$ color $\an{\m}$ by $\alpha$ and $\an{\n}+\j$ by $\beta$.
   Then the coloring $\delta : \an{\k} \to \an{n}$ defined by
   $\delta(\x) = \gamma(\x+\i)$ belongs to
   $\widetilde{C}_\Gam(\an{\k})$.

   The digraph $\Gam$ is called friendly if there exist a friendly
   set $\cF = \cup_{\m \in \N^d} \widetilde{C}_\Gam(\an{\m})$ and a
   constant vector $\b \in \N^d$ such that if any box $\an{\m}$ is
   padded with an envelope of width $b_i$ in the direction of
   $\e_i$, then each $\Gam$-allowed coloring of $\an{\m}$ can be
   extended in the padded part to a coloring in $\cF$. More
   precisely, for each $\m \in \N^d$ and each $\alpha \in
   C_\Gam(\an{\m})$, there exists a coloring in
   $\widetilde{C}_\Gam(\an{\m+2\b})$ that colors $\an{\m}+\b$ by
   $\alpha$.
  \end{defn}

  \begin{exam}\label{examfr}  Let $\Gam=(\Gamma_1,\ldots,\Gamma_d)$
  be a coloring digraph with vertex set $\an{n}$.  Then $\Gam$ is a
  friendly digraph with $b=1$ if one of the following conditions
  holds:
   { % begin scope of \renewcommand
   \renewcommand{\theenumi}{\alph{enumi}}
   \renewcommand{\labelenumi}{(\theenumi)}
   \begin{enumerate}
    \item $\Gam$ has a friendly color $f\in\an{n}$,
    i.e., for each $i \in \an{d}$ we have $(f,j),(j,f) \in \Gamma_i$
    for all $j \in \an{n}$ (we can take $\widetilde{C}_\Gam(\m)$ to
    be those $\Gam$-allowed colorings of $\an{\m}$ whose boundary
    points are colored with $f$). This example is useful for the
    hard-core model with $n=2$ and $\Gamma_i = \set{(1,1),(1,2),(2,1)}$,
    $f=1$.
    \item $\Gam$ is the digraph associated with the
   monomer-dimer covering as defined in~(\ref{encoding}) (we can
   take $\widetilde{C}_\Gam(\an{\m})$ to be the set of tilings of
   $\an{\m}$ by monomers and dimers, i.e., the coverings in which no
   dimer protrudes out of $\an{\m}$, as in Hammersley).
  \end{enumerate}
   } % end scope of renewcommand
  \end{exam}

  The following theorem strengthens Theorem~\ref{maxchar} and
  generalizes the results of Hammersley in case $\Gam$ is a friendly
  digraph.

 \begin{theo}\label{frnthm}  Let $\Gam = (\Gamma_1,\ldots,\Gamma_d)$
  be a friendly coloring digraph.  Then
  { % begin scope of \renewcommand
   \renewcommand{\theenumi}{\alph{enumi}}
   \renewcommand{\labelenumi}{(\theenumi)}
  \begin{enumerate}
   \item\label{1frnd} $\Pi_\Gam$ is convex.  Hence $\Pi_\Gam = \dom
   P^*_\Gam$.
   \item\label{2frnd} $h_\Gam(\cdot) : \Pi_{\Gam} \to \R_+$ is concave.
   \item\label{3frnd} For each $\u \in \R^n$, $\Pi_\Gam(\u) = \partial
   P_\Gam(\u)$.
   \item \label{4frnd} For each $\u\in\R^n$,
   $h_\Gam(\cdot)$ is an affine function on $\partial P_\Gam(\u)$.
   \item \label{5frnd} $h_\Gam(\p)=-P^*_\Gam(\p)$ for each $\p\in\Pi_\Gam$.
  \end{enumerate}
   } % end scope of renewcommand
 \end{theo}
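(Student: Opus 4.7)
The plan is to prove (a) and (b) directly from the friendly axiom by a tiling construction, deduce (e) from Fenchel--Moreau duality applied to Theorem~\ref{maxchar}(\ref{four}), and then extract (c) and (d) from (e) together with the facts already in Theorem~\ref{maxchar}.

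For (a) and (b), fix $\p_1,\p_2\in\Pi_\Gam$ and a rational $\lambda=k_1/(k_1+k_2)$ with $k_1,k_2\in\N$. By Definition~\ref{deflamb1}, for $i=1,2$ there are sequences $\m_q^{(i)}\to\infty$ and $\bc_q^{(i)}\in\Pi_n(\vol(\m_q^{(i)}))$ with $\bc_q^{(i)}/\vol(\m_q^{(i)})\to\p_i$, and after passing to a subsequence one can assume $(\vol(\m_q^{(i)}))^{-1}\log\#C_\Gam(\an{\m_q^{(i)}},\bc_q^{(i)})\to h_\Gam(\p_i)$. Embed each $\m_q^{(i)}$ into a slightly larger box chosen so that the two types share a common cross-section in coordinates $2,\dots,d$, filling the cross-sectional collar by a friendly extension; this collar has volume of lower order than $\vol(\m_q^{(i)})$, so the limiting frequency and the log-count rate are preserved. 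Next, by the friendly extension axiom in Definition~\ref{dmrlgam}, each coloring further extends to a $\widetilde{C}_\Gam$-coloring of the padded box $\an{\m_q^{(i)}+2\b}$. Stacking $k_1$ type-1 and $k_2$ type-2 padded boxes along $\e_1$ produces a box $\an{\M_q}$; the gluing clause of Definition~\ref{dmrlgam} shows that the concatenation lies in $C_\Gam(\an{\M_q})$, with frequency vector converging to $\lambda\p_1+(1-\lambda)\p_2$. This establishes (a) for rational $\lambda$, and hence for all $\lambda\in[0,1]$ by closedness of $\Pi_\Gam$; the second assertion of (a) is then immediate from Theorem~\ref{maxchar}(\ref{onebis}). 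Since distinct inner colorings yield distinct concatenations, the number of combined colorings is at least $\#C_\Gam(\an{\m_q^{(1)}},\bc_q^{(1)})^{k_1}\cdot\#C_\Gam(\an{\m_q^{(2)}},\bc_q^{(2)})^{k_2}$; taking logarithms, dividing by $\vol(\M_q)$, and letting $q\to\infty$ yields the concavity inequality for rational $\lambda$, and then for all $\lambda\in[0,1]$ by upper semi-continuity of $h_\Gam$, proving (b).

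For (e), define $f:\R^n\to\R\cup\{+\infty\}$ by $f(\p)=-h_\Gam(\p)$ for $\p\in\Pi_\Gam$ and $f(\p)=+\infty$ otherwise. Parts (a) and (b), combined with the closedness of $\Pi_\Gam$ and the upper semi-continuity of $h_\Gam$, make $f$ a proper closed convex function. Theorem~\ref{maxchar}(\ref{four}) now reads $P_\Gam(\u)=\sup_\p(\p\trans\u-f(\p))=f^*(\u)$, so Fenchel--Moreau \cite[Thm~12.2]{Roc} gives $f=f^{**}=P_\Gam^*$, i.e.\ $h_\Gam(\p)=-P_\Gam^*(\p)$ on $\Pi_\Gam$. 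For (c), take $\u\in\R^n$ and $\p\in\partial P_\Gam(\u)$. Lemma~\ref{ransubf} puts $\p\in\dom P^*_\Gam$, which by (a) and Theorem~\ref{maxchar}(\ref{onebis}) equals $\Pi_\Gam$. Equation~(\ref{fstari}) gives $P^*_\Gam(\p)=\p\trans\u-P_\Gam(\u)$, and (e) turns this into $h_\Gam(\p)=P_\Gam(\u)-\p\trans\u$, so $\p\in\bpi(\u)$; together with Theorem~\ref{maxchar}(\ref{five}) this establishes $\partial P_\Gam(\u)=\bpi(\u)$, which is (c). The same identity $h_\Gam(\p)=P_\Gam(\u)-\p\trans\u$ is affine in $\p$, giving (d).

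The main obstacle is the tiling construction behind (a) and (b): the defining sequences for $\p_1$ and $\p_2$ are a priori uncoordinated, so their padded boxes need not tile together into a single $\Z^d$-box. The friendly axiom must be invoked twice---once to enlarge each $\m_q^{(i)}$ to a common cross-section and once to glue the stacked padded colorings along $\e_1$---with quantitative control that both the buffer of width $\b$ and the cross-sectional collar contribute only $o(\vol(\M_q))$, so that the limiting frequency vector and per-site log-count rate of the concatenation actually realise the target convex combinations $\lambda\p_1+(1-\lambda)\p_2$ and $\lambda h_\Gam(\p_1)+(1-\lambda)h_\Gam(\p_2)$.
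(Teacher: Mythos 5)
Your reduction of (c), (d), (e) to (a) and (b) is correct and in fact cleaner than the paper's route: setting $f=-h_\Gam$ on $\Pi_\Gam$ and $+\infty$ elsewhere, using closedness of $\Pi_\Gam$ and upper semi-continuity of $h_\Gam$ to make $f$ proper closed convex, reading (\ref{maxchar1}) as $P_\Gam=f^*$, and invoking $f^{**}=f$ gives (e) on all of $\Pi_\Gam$ at once, whereas the paper first gets (e) only on $\partial P_\Gam(\R^n)\supseteq\ri(\dom P^*_\Gam)$ and then handles boundary points by a one-dimensional limit argument along segments; your derivation of (c) from (\ref{fstari}) and of (d) from the resulting affine identity is also sound. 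The paper instead proves (c), (d) directly from $S(\u)$ and convex combinations, so here your route is genuinely different and valid, conditional on (a), (b).

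The gap is in your proof of (a) and (b). You make the two defining sequences commensurable by padding each $\an{\m_q^{(i)}}$ to a common cross-section in coordinates $2,\dots,d$ with a collar claimed to have volume $o(\vol(\m_q^{(i)}))$. But the sequences $\m_q^{(1)},\m_q^{(2)}$ come from Definition~\ref{deflamb1} and are uncoordinated: if, say, $\m_q^{(1)}=(q,\dots,q)$ and $\m_q^{(2)}=(q^2,\dots,q^2)$, any collar equalizing the cross-sections has volume that dominates $\vol(\m_q^{(1)})$, so the combined frequency vector is governed by the collar filling, not by $\p_1$, and the per-site log-count is likewise destroyed; moreover even with equal cross-sections you must also equalize the heights along the stacking direction, since the weight in the convex combination is volume-weighted, not $k_1:(k_1+k_2)$. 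The missing idea is the paper's replication trick: duplicate $\alpha$ on $\an{\m}$ by the factor $\n$ and $\beta$ on $\an{\n}$ by the factor $\m$, so both become friendly colorings of boxes of identical shape $\an{\m\cdot\n}$ and identical volume, with frequencies preserved exactly and counts satisfying $\#\widetilde{C}_\Gam(\an{\n\cdot\m},\vol(\n)\bc)\ge(\#\widetilde{C}_\Gam(\an{\m},\bc))^{\vol(\n)}$ as in (\ref{frcompin}); only then are $i$ and $j$ congruent blocks stacked along $\e_d$. A second, smaller omission: after padding by $\b$, different inner colorings extend to friendly colorings with different frequency vectors, while the definition of $h_\Gam$ requires counting colorings in a single frequency class per box; the paper fixes this by a pigeonhole over the $O(\vol(\m_q+2\b)^{n-1})$ possible classes, which your count $\#C_\Gam(\an{\m_q^{(1)}},\bc_q^{(1)})^{k_1}\#C_\Gam(\an{\m_q^{(2)}},\bc_q^{(2)})^{k_2}$ silently bypasses. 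Without these two steps the construction behind (a), (b) — and hence your whole theorem — does not go through.
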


 \begin{proof}
 We first first prove~(\ref{1frnd}).  Let $\alpha \in
 \widetilde{C}_\Gam(\an{\m})$, let $\bc(\alpha) = (c_1,\ldots,c_n)
 \in \Pi_n(\vol(\m))$ be the color frequency vector of $\alpha$, and
 let $\p := \frac{1}{\vol(\m)} \bc(\alpha)$.   We assert that $\p
 \in \Pi_\Gam$.  For $\k=(k_1,\ldots,k_d) \in \N^d$, we define $\k
 \cdot \m := (k_1m_1,\ldots,k_dm_d)$ and view $\an{\k\cdot\m}$ as a
 box composed of $\vol(\k)$ boxes isomorphic to $\an{\m}$, i.e., as
 $\an{\m}$ duplicated by a factor of $\k$.  We color each of these
 boxes by $\alpha$, obtaining a coloring $\alpha(\k \cdot \m)$ of
 $\an{\k \cdot \m}$. Clearly, $\p = \frac{1}{\vol(\k \cdot \m)}
 c(\alpha(\k \cdot \m))$. Since $\alpha \in
 \widetilde{C}_\Gam(\an{\m})$, it follows that $\alpha(\k \cdot \m)$
 belongs to $\widetilde{C}_\Gam(\k \cdot \m)$, so in particular is
 $\Gam$-allowed. Choosing a sequence $\k_q \to \infty$, we deduce
 that $\p = \lim_{q \to \infty} \frac{1}{\vol(\k_q \cdot \m)}
 c(\alpha(\k_q \cdot \m))$.  Hence $\p \in \Pi_\Gam$ according
 to~(\ref{deflamb2}), as asserted.

 Let $\beta \in \widetilde{C}_\Gam(\an{\n})$.  By the above argument
 we also have $\q := \frac{1}{\vol(\n)}\bc(\beta) \in \Pi_\Gam$. We
 assert that all $i,j \in \N$ satisfy $\frac{i}{i+j}
 \p+\frac{j}{i+j} \q \in \Pi_\Gam$. Let $\alpha(\n \cdot \m)$ and
 $\beta(\m \cdot \n)$ be defined as above. Notice that $\an{\n \cdot
 \m}$ is isomorphic to $\an{\m \cdot \n}$. By the above argument,
 $\alpha(\n \cdot \m) \in \widetilde{C}_\Gam(\n \cdot \m)$ and
 $\beta(\m \cdot \n) \in \widetilde{C}_\Gam(\m \cdot \n)$. We define
 $\k := (m_1n_1, \ldots, m_{d-1}n_{d-1},(i+j)m_dn_d)$ and view the
 box $\an{\k}$ as composed of $i+j$ boxes isomorphic to $\an{\m
 \cdot \n}$ aligned side-by-side along the direction of $\e_d$.
 Color the first $i$ of these boxes by $\alpha(\m \cdot \n)$ and the
 last $j$ by $\beta(\n \cdot \m)$, obtaining a coloring $\gamma$ of
 $\an{\k}$, which satisfies $\frac{1}{\vol(\k)} \bc(\gamma) =
 \frac{i}{i+j} \p + \frac{j}{i+j} \q$. Also $\gamma \in
 \widetilde{C}_\Gam(\an{\k})$, so in particular $\gamma$ is
 $\Gam$-allowed. By the above argument we obtain that $\frac{i}{i+j}
 \p + \frac{j}{i+j} \q \in \Pi_\Gam$, as asserted. Since $\Pi_\Gam$
 is closed we deduce that $a\p + (1-a)\q \in \Pi_\Gam$ for all $a \in
 [0,1]$.

 Let $\widetilde{\Pi}_\Gam$ be the convex hull of all points of the
 form $\frac{1}{\vol(\m)} \bc(\alpha)$ for some $\m$ and some
 $\alpha \in \widetilde{C}_\Gam(\an{\m})$. By the argument above we
 have $\widetilde{\Pi}_\Gam \subseteq \Pi_\Gam$. Let $\p \in
 \Pi_\Gam$. By Definition~\ref{deflamb1} there exist sequences $\m_q
 \to \infty$ and color frequency vectors $\bc_q \in
 \Pi_n(\vol(\m_q))$ satisfying~(\ref{deflamb2}). Let $\alpha_q \in
 C_\Gam(\an{\m_q},\bc_q)$. By Definition~\ref{dmrlgam}, $\alpha_q$
 can be extended to a coloring $\widetilde{\alpha}_q \in
 \widetilde{C}_\Gam(\an{\m_q + 2\b},\widetilde{\bc}_q)$ for some
 $\widetilde{\bc}_q$. Since $\b$ is constant and $\m_q \to \infty$,
 we have $\lim_{q \to \infty} \frac{1}{\vol(\m_q +
 2\b)}\widetilde{\bc}_q = \p$. Since $\frac{1}{\vol(\m_q +
 2\b)}\widetilde{\bc}_q \in \widetilde{\Pi}_\Gam$, we have $\p \in
 \cl \widetilde{\Pi}_{\Gam}$. Thus $\widetilde{\Pi}_\Gam \subseteq
 \Pi_\Gam \subseteq \cl \widetilde{\Pi}_{\Gam}$. Applying the
 closure operator, we deduce $\Pi_{\Gam} = \cl
 \widetilde{\Pi}_{\Gam}$, and since $\widetilde{\Pi}_{\Gam}$ is
 convex, so is $\Pi_\Gam$. The equality $\Pi_\Gam = \dom P^*_\Gam$
 follows from part~(\ref{onebis}) of Theorem~\ref{maxchar}.

 We now prove~(\ref{2frnd}).  Choose any $\varepsilon > 0$. Let $\p
 \in \Pi_\Gam$. By Definition~\ref{deflamb1} there exist sequences
 $\m_q \to \infty$ and color frequency vectors $\bc_q \in
 \Pi_n(\vol(\m_q))$ satisfying~(\ref{deflamb2}). By~(\ref{hGamLam})
 we may assume by selecting appropriate subsequences that the
 following limit exists and satisfies $\lim_{q \to \infty}
 \frac{\log \# C_\Gam(\an{\m_q},\bc_q)}{\vol(\m_q)} \geq h_\Gam(\p)
 - \varepsilon$. Each coloring $\alpha_q \in
 C_\Gam(\an{\m_q},\bc_q)$ can be extended to some coloring
 $\widetilde{\alpha}_q \in \widetilde{C}_\Gam(\an{\m_q  +2\b})$.
 Denote by $\bc(\widetilde{\alpha}_q)$ the color frequency vector of
 $\widetilde{\alpha}_q$. Since $b$ is constant and $\m_q \to
 \infty$, we have $ \lim_{q \to \infty} \frac{1}{\vol(\m_q+2\b)}
 \bc(\widetilde{\alpha}_q) = \p$.  Let $\cC_q \subseteq
 \Pi_n(\vol(\m_q + 2\b)$ be the set of all possible color frequency
 vectors of all extensions of the colorings of
 $C_\Gam(\an{\m_q},\bc_q)$ to $\widetilde{C}_\Gam(\an{\m_q +
 2\b})$. Clearly $\#\cC_q \leq \binom{\vol(\m_q+2\b) + n - 1}{n-1}
 = O(\vol(\m_q+2\b)^{n-1})$, $q \to \infty$. For each $\bc \in
 \Pi_n(\vol(\m))$, let $\widetilde{C}_\Gam(\an{\m},\bc)$ be the set
 of those colorings in $\widetilde{C}_\Gam(\an{\m})$ that have color
 frequency vector $\bc$. From the above it follows that
 \begin{equation*}
  \begin{split}
   \#C_\Gam(\an{\m_q},\bc_q) \leq
   \sum_{\bc \in \cC_q} \#\widetilde{C}_\Gam(\an{\m_q + 2\b},\bc) \leq
   \#\widetilde{C}_\Gam(\an{\m_q+2\b},\widetilde{\bc}_q) \#\cC_q
   \\
   = \#\widetilde{C}_\Gam(\an{\m_q+2\b},\widetilde{\bc}_q)O(\vol(\m_q+2\b)^{n-1})
  \end{split}
 \end{equation*}
 for some $\widetilde{\bc}_q \in \cC_q$. Taking logarithms, dividing
 by $\vol(\m_q + 2\b)$, and noting that

 \noindent
 $\lim_{q \to \infty}
 \frac{\vol(\m_q + 2\b)}{\vol(\m_q)} = 1$, we deduce that
 \[\lim_{q \to \infty} \frac{\widetilde{\bc}_q}{\vol(\m_q + 2\b)} = \p
 \quad\text{and}\quad
 \lim_{q \to \infty} \frac{\log \#\widetilde{C}_\Gam(\an{\m_q + 2\b},
 \widetilde{\bc}_q)}{\vol(\m_q + 2\b)}
 \geq h_\Gam(\p) - \varepsilon.\]

 Thus for $\p,\q\in \Pi_\Gam, \varepsilon>0$ we have sequences $\m_q:=(m_{1,q},\ldots,m_{d,q}),
 \n_q:=(n_{1,q},\ldots,n_{d,q})\in\N^d, q\in\N$, with
 $\m_q,\n_q\to\infty$ such that the following two conditions hold:
 \begin{eqnarray*}
 \widetilde{C}_\Gam(\an{\m_q},\bc_q),\widetilde{C}_\Gam(\an{\n_q},\d_q)\ne
 \emptyset,q\in\N, \lim_{\m_q\to\infty} \frac{1}{\vol(\m_q)}
 \bc_q=\p, \lim_{\n_q\to\infty} \frac{1}{\vol(\n_q)}
 \d_q=\q,\\
 \lim_{q\to\infty} \frac{\log \#
 \widetilde{C}_\Gam(\an{\m_q},\bc_q)}{\vol(\m_q)}\ge h_\Gam(\p)-\varepsilon,
 \;\lim_{q\to\infty} \frac{\log \#
 \widetilde{C}_\Gam(\an{\n_q},\d_q)}{\vol(\n_q)}\ge h_\Gam(\q)-\varepsilon.
 \end{eqnarray*}
 For $i,j\in\N$ we show that

 \begin{equation}\label{ephgcon}
 h_\Gam( \frac{i}{i+j}\p+\frac{j}{i+j}\q)\ge \frac{i}{i+j}h_\Gam(\p)
 +\frac{j}{i+j}h_\Gam(\q)-\varepsilon.
 \end{equation}

 Observe first that for any $\m,\n\in\N^d$ and $\bc\in
 \Pi_n(\vol(\m))$ one has the inequality:
 \begin{equation}\label{frcompin}
 \#\widetilde{C}_\Gam(\an{\n\cdot\m}, \vol(\n)\bc)\ge (\#\tilde
 C_\Gam(\an{\m},\bc))^{\vol(\n)}.
 \end{equation}
 Indeed, view as above, the box
 $\an{\n\cdot\m}$ as a disjoint union of $\vol(\n)$ boxes
 $\an{\m}$.  Color each box $\an{\m}$ in some color in the set
 $\widetilde{C}_\Gam(\an{\m},\bc)$.  Such a coloring is a member of
 $\widetilde{C}_\Gam(\an{\n\cdot\m},\vol(\n)\bc)$.
 Hence (\ref{frcompin}) holds.

 Let $\k_{1,q}:=(m_{1,q}n_{1,q},\ldots,m_{d-1,q}n_{d-1,q}),
 \k_q:=(\k_{1,q},
 (i+j)m_{d,q}n_{d,q})$.  View $\an{\k_q}$ composed of $(i+j)$
 boxes $\an{\m_q\cdot\n_q}$.  The above arguments show that
 \begin{eqnarray*}
 &&\#\widetilde{C}_\Gam(\an{\k_q},\vol(\k_{1,q})(i\bc_q+j\d_q)\ge\\
 &&\#\tilde
 C_\Gam(\an{(\k_{1,q},im_{d,q}n_{d,q}},\vol(\k_{1,q})i\bc_q)
  \#\tilde
 C_\Gam(\an{(\k_{1,q},jm_{d,q}n_{d,q}},\vol(\k_{1,q})j\d_q)\ge\\
 &&(\#\widetilde{C}_\Gam(\an{\m_q},\bc_q))^{i\vol(\n_q)}
 (\#\widetilde{C}_\Gam(\an{\n_q},\d_q))^{j\vol(\m_q)}.
 \end{eqnarray*}
 Since $C_\Gam(\an{\k_q},\vol(\k_{1,q})(i\bc_q+j\d_q)\supset
 \widetilde{C}_\Gam(\an{\k_q},\vol(\k_{1,q})(i\bc_q+j\d_q)$
 by considering the first coloring sequence in this inclusion, and
 using the maximal characterization of $h_\Gam(
 \frac{i}{i+j}\p+\frac{j}{i+j}\q)$,
 we deduce (\ref{ephgcon}).  Since $\varepsilon$ was an arbitrary
 positive number we deduce (\ref{ephgcon}) with $\varepsilon=0$.
 Since $h_\Gam$ is upper semi-continuous we deduce the inequality
 $h_\Gam(a\p+(1-a)\q)\ge ah_\Gam(\p) + (1-a)h_\Gam(\q)$ for any
 $a\in [0,1]$.

 We now prove the claims (\ref{3frnd}-\ref{4frnd}).
 Assume first that Let $\u \in \Der P_\Gam$.  Then $\bpi(\u) = \{\nabla
 P_\Gam(\u)\}=\partial P_\Gam(\u)$ and our assertions trivially
 hold.  Assume next that the assumptions of part (\ref{threebis})
 of  Theorem \ref{maxchar} hold. Recall that $S(\u)
 \subseteq \bpi(\u)$ and $\conv S(\u)=\partial P_\Gam(\u)\supseteq
 \bpi(\u)$.  Let $\p_i\in S(\u), i=1,\ldots,j$.  So
 $P_\Gam(\u)=\p_i\trans \u+ h_\Gam(\p_i),i=1,\ldots,j$.
 Since $\Pi_\Gam$ is convex, we obtain that for any
 $\a=(a_1,\ldots,a_j)\in\Pi_j$ $\p:=\sum_{i=1}^j a_i\p_i\in
 \Pi_\Gam$.  As $h_\Gam$ concave we deduce
 $$P_\Gam(\u)=\sum_{i=1}^j a_i\p_i\trans \u +h_\Gam(\p_i)\le
 \p\trans \u+h_\Gam(\p).$$
 The maximal characterization (\ref{maxchar1}) yields that
 $P_\Gam(\u)=\p\trans \u+h_\Gam(\p)$.  So $\p\in  \bpi(\u)$
 and $h_\Gam(\p)=\sum_{i=1}^j a_ih_\Gam(\p_i)$.
 This proves (\ref{3frnd}-\ref{4frnd}).

 We now prove (\ref{5frnd}).  Recall that $\p\in\partial
 P_\Gam(\R^n)$ if and only if $\p\in \partial
 P_\Gam(\u)$ for some $\u\in\R^n$.  Use part (\ref{3frnd}) of this Theorem and
 part (\ref{six}) of Theorem \ref{maxchar} to deduce the equality
 $h_\Gam(\p)=-P^*_\Gam(\p)$.  If $\Pi_\Gam$ consists of one point
 then $\Pi_\Gam=\partial
 P_\Gam(\R^n)$ and (\ref{5frnd}) trivially holds.
 Assume that $\Pi_\Gam$ consists of more than one point.
 Since $\partial P_\Gam(\R^n)
 \supseteq \ri (\dom P_\Gam^*)\ne \emptyset$, use the second part of
 (\ref{1frnd}) of this Theorem
 to deduce $h_\Gam(\p)=-P^*_\Gam(\p)$
 for each $\p\in \ri (\Pi_\Gam)$.  Suppose that $\q\in\Pi_\Gam\backslash
 \ri (\Pi_\Gam), \p\in  \ri (\Pi_\Gam)$.  Let
 $$f(a):=-h_\Gam(a\q+(1-a)\p),g(a):=P_\Gam^*(a\q+(1-a)\p),
 \textrm{ for } a\in [0,1].$$
 Since $a\q+(1-a)\p\in \ri(\Pi_\Gam)$ for $a\in [0,1)$ it follows
 that $f(a)=g(a)$ for $a\in [0,1)$.
% Part (\ref{two}) of Theorem \ref{maxchar} yields that $f(1)\ge g(1)$.
 Since $P_\Gam^*$ is a
 proper closed function, it is lower semi-continuous.  Hence
 $\Pi_\Gam(\q)=g(1)\le \liminf_{a\nearrow 1} g(a)$.  Since $g$ is a convex
 function on $[0,1]$ it follows that $\liminf_{a\nearrow 1} g(a)=
 \lim_{a\nearrow 1} g(a)\le g(1)$.  Hence $g(1)=\lim_{a\nearrow 1} g(a)$.
 Recall that $h_\Gam$ is a concave upper semi-continuous on $\Pi_\Gam$.
 Hence $-h_\Gam$ is a convex lower semi-continuous function on
 $\Pi_\Gam$.  Hence $-h_\Gam(\q)=f(1)=\lim _{a\nearrow 1} f(a)$.
 Therefore $f(1)=g(1)$, i.e. $h_\Gam(\q)=P_\Gam^*(\q)$.
 \end{proof}
  We now list several facts which are consequences of Theorem
  \ref{maxchar}. Given $\u\in\R^n$, then by (\ref{five})
  %[QUOTE REFERENCES]
  each $\p \in \bpi(\u)$, namely each $\p$ achieving the maximum in
  (\ref{maxchar1}), is a possible density of the $n$ colors in an
  allowable configuration from $C_\Gam(\Z^d)$ with the potential
  $\u$. That is, the relative frequency of color $i$ is equal to
  $p_i$. For each $\u$ where $P_{\Gam}$ is differentiable, there
  exists a unique density of the $n$ colors. Assume
  that $P_\Gam$ is not differentiable at $\u$. Then $\partial
  P_\Gam$ consists of more than one point. Let $S(\u)$ be defined as
  in (\ref{threebis}). Since $\partial P_\Gam(\u)=\conv S(\u)$,
  $S(\u)$ consists of more than one point. Hence by (\ref{threebis})
  $\bpi(\u)$ consists of more than one point, that is to say, there
  is more than one density for $\u$. In this case $\u$ is called a
  point of \emph{phase transition}, sometimes called a phase
  transition point of the \emph{first order}.

  \begin{prop}\label{heid}  Let $\e := (1,\ldots,1)\trans\in\R^n$.  Then
  for all $t \in \R$
  \[P_\Gam(\u) = t + P_\Gam(\u - t\e).\]
  \end{prop}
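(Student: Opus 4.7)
The plan is to exploit the fact that $\e\trans \bc(\phi) = \vol(\m)$ for every coloring $\phi \in C_\Gam(\an{\m})$, since the entries of $\bc(\phi)$ count the sites of each color and every one of the $\vol(\m)$ sites receives exactly one color. This simple linear identity is what makes the map $\u\mapsto \u - t\e$ act on the partition function as a clean global scalar factor.

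First I would substitute $\u - t\e$ into the definition~(\ref{wmufor}) of $Z_\Gam(\m,\u)$. Each summand $e^{\bc(\phi)\trans(\u - t\e)}$ equals $e^{\bc(\phi)\trans\u} \cdot e^{-t\vol(\m)}$ by the observation above, and pulling the common factor out of the sum gives
\[
Z_\Gam(\m,\u - t\e) \;=\; e^{-t\vol(\m)}\, Z_\Gam(\m,\u).
\]
Taking logarithms and dividing through by $\vol(\m)$ yields
\[
\frac{\log Z_\Gam(\m,\u - t\e)}{\vol(\m)} \;=\; \frac{\log Z_\Gam(\m,\u)}{\vol(\m)} - t.
\]

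Finally I would pass to the limit $\m\to\infty$ using the definition~(\ref{pres}) of the pressure. Both sides converge, and we obtain $P_\Gam(\u - t\e) = P_\Gam(\u) - t$, which is the claimed identity after rearrangement.

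There is no real obstacle here; the only thing to verify carefully is that every coloring of $\an{\m}$ contributes exactly $\vol(\m)$ to $\e\trans\bc(\phi)$, which is immediate because the sets $\phi^{-1}(1),\ldots,\phi^{-1}(n)$ partition $\an{\m}$.
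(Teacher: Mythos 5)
Your proof is correct and follows essentially the same route as the paper: both exploit the identity $\bc(\phi)\trans\e = \vol(\m)$ to factor the partition function as $Z_\Gam(\m,\u) = e^{t\vol(\m)}Z_\Gam(\m,\u-t\e)$, then take logarithms, divide by $\vol(\m)$, and pass to the limit. The only cosmetic difference is that you substitute $\u - t\e$ directly into the sum while the paper splits $\u = t\e + (\u - t\e)$; these are the same calculation.
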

  \begin{Proof}  Recall the definition of $Z_\Gam(\m,\u)$ given in (\ref{wmufor}).
  Clearly $\bc(\phi)\trans \e = \vol(\m)$.
  Hence
  \begin{multline}
  Z_\Gam(\m,\u) = \sum_{\phi \in C_\Gam(\an{\m})} e^{\bc(\phi)\trans \u} =
  \sum_{\phi \in C_\Gam(\an{\m})} e^{\bc(\phi)\trans t\e+ \bc(\phi)\trans (\u - t\e)}
  \\
  = e^{t\vol(\m)} Z_\Gam(\m,\u - t\e),
  \end{multline}
  which implies the proposition.  \end{Proof}

  Thus to study $P_\Gam$, we may restrict attention to those
  potentials $\u = (u_1,\ldots,u_n)\trans$ that satisfy $u_n = 0$. (I.e. we reduce
  the number of variables in the function $P_\Gam$ to $n-1$.)
%Shmuel Sept. 2, 2005
  We show that the same holds for $\partial P_\Gam$ and $\nabla
  P_\Gam$. For $\u \in \R^n$, we use the notation
  \[\overline{\u} := (u_1,\ldots,u_{n-1})\trans\]
  for the projection of $\u$ on the first $n-1$ coordinates, and
  extend it naturally to sets $\overline{\U} := \{\overline{\u} : \u
  \in \U\}$ for $\U \subseteq \R^n$. In the other direction, for $\x
  \in \R^{n-1}$, we use the notation
  \[\iota(\x) := (x_1,\ldots,x_{n-1},1-x_1-\cdots-x_{n-1})\trans\]
  for the unique lifting of $\x$ to the hyperplane $\Sigma_n := \{\x
  \in \R^n : \x\trans \e = 1\}$, and
  \[\iota_0(\x) := (x_1,\ldots,x_{n-1},0).\]
  A straightforward computation shows that
  \begin{equation}\label{straightforward}
   \q\trans(\overline{\z} - z_n \overline{\e}) = \iota(\q)\trans \z -
   z_n \qquad \forall  \q \in \R^{n-1}, \z \in \R^n.
  \end{equation}

  We define the convex function
  $\widehat{P}_\Gam(\cdot)$ on $\R^{n-1}$ by
  \begin{equation}\label{hhat}
  \widehat{P}_\Gam(\x) := P_\Gam(\iota_0(\x)) \qquad \x \in \R^{n-1}.
  \end{equation}
  By taking $t = u_n$ in Proposition~\ref{heid}, we obtain
  \begin{equation}\label{shifting}
   P_\Gam(\u) = u_n + \widehat{P}_\Gam(\overline{\u} - u_n \overline{\e}).
  \end{equation}
  We now obtain
  a straightforward generalization of the density theorem proved
  in~\cite{FP} for monomer-dimer tilings.
% CHECK THIS AND EXPLAIN.
%  To state our results we need to introduce the
%  injection $\iota:\R^{n-1}\to \R^n$ given by
%  $(u_1,\ldots,u_{n-1})\trans\mapsto
%  (u_1,\ldots,u_{n-1},0)\trans$.
  %
  \begin{theo}\label{denthm}
  Let $\widehat{P}_\Gam(\cdot)$ be defined on $\R^{n-1}$
  by~(\ref{hhat}).  Then
  \begin{equation}\label{denthmeq1}
  \overline{\partial P_\Gam(\u)} = \partial \widehat{P}_\Gam(\overline{\u} -
  u_n\overline{\e})) \qquad \forall \u \in \mathbb{R}^n,
  \end{equation}
  \begin{equation}\label{denthmeq2}
  \overline{\partial P_\Gam(\R^n)} = \partial \widehat{P}_\Gam(\R^{n-1}).
  \end{equation}
  Furthermore, $P_\Gam$ is differentiable at $\u$ if and only if
  $\widehat{P}_\Gam$ is differentiable at $\overline{\u} -
  u_n\overline{\e}$. If $\widehat{P}_\Gam$ has all $n-1$ partial
  derivatives at $\overline{\u} - u_n \overline{\e}$, then $P_\Gam$ is
  differentiable at $\u$ and
  \begin{equation}\label{uniqueLambdahat}
  \nabla P_\Gam(\u) = \left(\frac{\partial \widehat{P}_{\Gam}}{\partial u_1}
  (\overline{\u} - u_n\overline{\e}),\ldots,
  \frac{\partial \widehat{P}_{\Gam}}{\partial u_{n-1}}(\overline{\u} - u_n\overline{\e}),
  1 - \sum_{i=1}^{n-1} \frac{\partial \widehat{P}_{\Gam}}{\partial u_i}
  (\overline{\u} -  u_n\overline{\e})\right).
  \end{equation}
  \end{theo}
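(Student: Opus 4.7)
The plan is to leverage the identity~(\ref{shifting}), which already decouples the $u_n$-direction from the rest of the variables. The key auxiliary observation is that Proposition~\ref{heid} says $P_\Gam(\u+t\e)=P_\Gam(\u)+t$ for all $t\in\R$, so by the subgradient inequality every $\v\in\partial P_\Gam(\u)$ must satisfy $\v\trans\e=1$, i.e.\ $\v=\iota(\overline{\v})$. This means $\partial P_\Gam(\u)\subseteq\Sigma_n$ and the map $\v\mapsto\overline{\v}$ is a bijection between $\partial P_\Gam(\u)$ and its projection $\overline{\partial P_\Gam(\u)}$.

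To prove~(\ref{denthmeq1}) I would then use~(\ref{shifting}) directly in the subgradient inequality $P_\Gam(\z)\ge P_\Gam(\u)+\v\trans(\z-\u)$. Writing $\x:=\overline{\u}-u_n\overline{\e}$, $\y:=\overline{\z}-z_n\overline{\e}$, a straightforward computation (expanding $\v\trans\z=\overline{\v}\trans\y+z_n\v\trans\e$ and similarly for $\v\trans\u$, which is essentially~(\ref{straightforward}) with $\q=\overline{\v}$) converts the inequality into
\[(z_n-u_n)(1-\v\trans\e)+\widehat{P}_\Gam(\y)-\widehat{P}_\Gam(\x)\ge\overline{\v}\trans(\y-\x).\]
Setting $\y=\x$ and varying $z_n$ reproves $\v\trans\e=1$, after which the residual inequality is exactly $\overline{\v}\in\partial\widehat{P}_\Gam(\x)$. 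Conversely, given $\q\in\partial\widehat{P}_\Gam(\x)$, the candidate lift $\v:=\iota(\q)$ satisfies $\v\trans\e=1$ automatically, and the same algebra run backwards yields $\iota(\q)\in\partial P_\Gam(\u)$. This gives $\overline{\partial P_\Gam(\u)}=\partial\widehat{P}_\Gam(\x)$ together with the bijection $\v\leftrightarrow\overline{\v}$.

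Identity~(\ref{denthmeq2}) follows by taking the union over $\u$: as $\u$ ranges over $\R^n$, $\x=\overline{\u}-u_n\overline{\e}$ already covers $\R^{n-1}$ (take $\u=\iota_0(\x)$). For differentiability, I would invoke \cite[Thm~25.1]{Roc}: $P_\Gam$ is differentiable at $\u$ iff $\partial P_\Gam(\u)$ is a singleton, and similarly for $\widehat{P}_\Gam$ at $\x$; since the bijection $\v\leftrightarrow\overline{\v}$ established above preserves cardinality, the two differentiabilities are equivalent. When they hold, $\nabla P_\Gam(\u)=\iota(\nabla\widehat{P}_\Gam(\x))$, which is exactly~(\ref{uniqueLambdahat}). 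Finally, the statement that existence of all $n-1$ partial derivatives of $\widehat{P}_\Gam$ at $\x$ suffices uses the standard fact that a finite convex function is differentiable at any point where all partial derivatives exist \cite[Thm~25.2]{Roc}.

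The main obstacle is purely bookkeeping in the second paragraph: keeping the affine change of variables $\z\leftrightarrow(\y,z_n)$ straight and using $\v\trans\e=1$ at the right moment to kill the $(z_n-u_n)$-terms. Once that algebra is in hand everything else is a direct consequence of standard convex-analytic facts and the shift identity~(\ref{shifting}).
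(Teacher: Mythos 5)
Your proof is correct and follows essentially the same route as the paper: both arguments hinge on the shift identity~(\ref{shifting}) and the algebraic identity~(\ref{straightforward}) to translate the subgradient inequality for $P_\Gam$ into the corresponding one for $\widehat{P}_\Gam$, and both use \cite[Thm~25.1, 25.2]{Roc} for the differentiability statements. The one place where you genuinely improve on the paper is in establishing that $\partial P_\Gam(\u)\subseteq\Sigma_n$. The paper obtains this (in the converse direction of the differentiability claim) by invoking part~(\ref{one}) of Theorem~\ref{maxchar}, which rests on the whole density-point machinery of Section~\ref{densityentropy}. You instead read it off in one line from Proposition~\ref{heid} and the subgradient inequality: $P_\Gam(\u+t\e)=P_\Gam(\u)+t$ forces $t(1-\v\trans\e)\ge0$ for all $t$, hence $\v\trans\e=1$. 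This makes the theorem self-contained, depending only on Proposition~\ref{heid} and the shift formula rather than on Theorem~\ref{maxchar}, and it also makes the bijection $\v\leftrightarrow\overline{\v}$ between $\partial P_\Gam(\u)$ and $\partial\widehat{P}_\Gam(\x)$ fully explicit from the outset. The rest is the same bookkeeping the paper carries out.
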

  \begin{Proof}
  Assume that $\p \in \partial P_\Gam(\u)$. By definition of $\partial
  P_\Gam(\u)$ we have
  \begin{equation}\label{support}
   P_\Gam(\u + \v) \geq \p\trans \v + P_\Gam(\u) \qquad \forall \v.
  \end{equation}
  Choose $\v$ such that $v_n = 0$. Then,
  using (\ref{shifting}) for $P_\Gam(\u + \v)$ and $P_\Gam(\u)$
  in~(\ref{support}), we obtain
  \[u_n + \widehat{P}_\Gam(\overline{\u} - u_n \overline{\e} + \overline{\v}) \geq
  \overline{\p}\trans \overline{\v} + u_n + \widehat{P}_\Gam(\overline{\u} - u_n
  \overline{\e}) \qquad \forall \overline{\v},
  \]
  which by definition means $\overline{\p} \in \partial \widehat{P}_\Gam(\overline{\u} -
  u_n\overline{\e})$.

  Conversely, assume that $\q \in \partial \widehat{P}_\Gam(\overline{\u} -
  u_n\overline{\e})$, which means that
  \begin{equation}\label{supportdown}
  \widehat{P}_\Gam(\overline{\u} - u_n \overline{\e} + \overline{\v}) \geq
  \q\trans \overline{\v} + \widehat{P}_\Gam(\overline{\u} - u_n
  \overline{\e}) \qquad \forall \overline{\v}.
  \end{equation}
  Now for arbitrary $\z$, choose $\overline{\v} = \overline{\z} - z_n
  \overline{\e}$ in~(\ref{supportdown}), and use~(\ref{straightforward}) once and
  (\ref{shifting}) twice in the resulting inequality to obtain
  \[P_\Gam(\u + \z) - u_n - z_n \geq \iota(\q)\trans\z - z_n + P_\Gam(\u) -
  u_n \qquad \forall \z,
  \]
  which means by definition that $\iota(\q) \in \partial P_\Gam(\u)$, and
  therefore $\q \in \overline{\partial P_\Gam(\u)}$. We have
  proved~(\ref{denthmeq1}).

  We now show (\ref{denthmeq2}). It follows easily from
  (\ref{denthmeq1}) that $\overline{\partial P_\Gam(\R^n)} \subseteq
  \partial \widehat{P}_\Gam(\R^{n-1})$. To show the reverse
  inclusion, let $\q \in \partial \widehat{P}_\Gam(\w)$, and let $\u
  = \iota_0(\w)$ so that $\w = \overline{\u} - u_n \overline{\e}$.
  By (\ref{denthmeq1}) we have $\q \in \overline{\partial
  P_\Gam(\u)}$, as required.

  Assume that $P_\Gam$ is differentiable at $\u$. Then $\partial
  P_\Gam(\u)$ is a singleton, and therefore $\overline{\partial
  P_\Gam(\u)}$ is a singleton. By (\ref{denthmeq1}) $\partial
  \widehat{P}_\Gam(\overline{\u} - u_n\overline{\e}))$ is a
  singleton, and therefore $\widehat{P}_\Gam$ is differentiable at
  $\overline{\u} - u_n\overline{\e}$.

  Conversely assume that $\widehat{P}_\Gam$ is differentiable at
  $\overline{\u} - u_n\overline{\e}$, and therefore $\partial
  \widehat{P}_\Gam(\overline{\u} - u_n\overline{\e})$ is a singleton
  $\{\q\}$. We assert that $\partial P_\Gam(\u) = \{\iota(\q)\}$.
  Indeed, if $\x \in \partial P_\Gam(\u)$, then $\overline{\x} \in
  \overline{\partial P_\Gam(\u)} = \partial
  \widehat{P}_\Gam(\overline{\u} - u_n\overline{\e})  = \{\q\}$,
  where the first equality is  by (\ref{denthmeq1}), and so
  $\overline{\x} = \q$. Since $\x \in \partial P_\Gam(\u) \subseteq
  \Pi_n \subset \Sigma_n$ by~(\ref{one}) of Theorem~\ref{maxchar},
  it follows that $\x = \iota(\q)$, proving the assertion since
  $\partial P_\Gam(\u) \neq \emptyset$. Therefore $P_\Gam$ is
  differentiable at $\u$.

  The last statement of the theorem follows from the previous
  statement and the fact \cite[Thm 25.2]{Roc} that a convex function
  $f : \R^k \to \R$ is differentiable at a point $\a$ if merely the
  $k$ partial derivatives $\frac{\partial f}{\partial x_i}$, $i
  =1,\ldots,k$ exist at $\a$. \end{Proof}
  We can reformulate Theorem \ref{maxchar} for
  $\widehat{P}_\Gam$.

   \begin{theo}\label{maxcharhat}
   Let $\widehat{P}_\Gam^*$ be the conjugate convex function of
   $\widehat{P}_\Gam$.
%   Assume that $\x\in\R^{n-1}$.
   Then
   { % begin scope of \renewcommand
   \renewcommand{\theenumi}{\alph{enumi}}
   \renewcommand{\labelenumi}{(\theenumi)}
   \begin{enumerate}
     \item \label{twoiota} $h_\Gam(\p) \leq -\widehat{P}^*_\Gam(\overline{\p})$
     for all $\p \in \Pi_\Gam$.
     \item \label{fouriota}
      \begin{equation}\label{maxchar1hat}
      \widehat{P}_\Gam(\x) = \max_{\p \in \Pi_{\Gam}}
      (\overline{\p}\trans\x + h_{\Gam}(\p)) \text{ for all } \x\ \in
      \R^{n-1}.
      \end{equation}
      For $\x \in \R^{n-1}$, let
      $\q(\x) \in \overline{\Pi_\Gam(\iota_0(\x))}$
      i.e., $\q(\x)$
      is any vector satisfying
      \begin{equation}\label{defpx}
      \begin{split}
      \iota (\q(\x))\in \Pi_\Gam \;\;\text{ and }\;\; \widehat{P}_\Gam(\x) & =
      \iota(\q(\x))\trans \iota_0(\x) + h_{\Gam}(\iota (\q(\x)))
      \\
      & = {\q(\x)}\trans \x + h_{\Gam}(\iota (\q(\x))).
      \end{split}
      \end{equation}
     \item \label{sixiota} $h_\Gam(\iota (\q(\x)))= -\widehat{P}^*_\Gam(\q(\x))$.
     \item \label{fiveiota} $\q(\x)\in \partial \widehat{P}_{\Gam}(\x)$. In
     particular, if $\x \in \Der \widehat{P}_\Gam$, then $\q(\x) = \nabla
     \widehat{P}_\Gam(\x)$. Therefore $\partial \widehat{P}_\Gam
     (\Der \widehat{P}_\Gam) \subseteq
     \overline{\Pi_\Gam}$.
%     \item \label{threeiota} For each $\q \in
%     \overline{\Pi_\Gam} \cap \partial \widehat{P_\Gam}(\R^{n-1})$
%     there exists $\x \in \R^{n-1}$
%     such that $\q = \q(\x)$, and hence $h_\Gam(\iota (\q)) =
%     -\widehat{P_\Gam}^*(\q)$.
     \item \label{threebisiota}
     Let $\x \in \R^{n-1} \setminus \Der \widehat{P}_\Gam$, and let $S(\x)$
     consist of all the limits of sequences $\nabla \widehat{P}_\Gam(\x_i)$
     such that $\x_i \in \Der \widehat{P}_\Gam$ and $\x_i \to \x$. Then $S(\x)
     \subseteq \overline{\Pi_\Gam(\iota_0(\x))}$.
     %, and
     %, each $\q \in S(\x)$ is equal to $\q(\x)$ as
     %defined in (\ref{fouriota}).
     \item \label{oneiota} $\conv \overline{\Pi_\Gam} = \dom \widehat{P}_\Gam ^*$.
   \end{enumerate}
   } % end scope of \renewcommand
  \end{theo}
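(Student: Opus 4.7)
The plan is to deduce Theorem~\ref{maxcharhat} directly from Theorem~\ref{maxchar} and Theorem~\ref{denthm} by transferring everything through the relation $\widehat{P}_\Gam(\x)=P_\Gam(\iota_0(\x))$ (which is the special case $u_n=0$ of (\ref{shifting})).

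First I would establish the basic bookkeeping identity
\[
\iota_0(\x)\trans\p \;=\; \x\trans\overline{\p}
\qquad (\x\in\R^{n-1},\ \p\in\R^n),
\]
which is immediate from $\iota_0(\x)_n=0$, together with the key conjugate identity
\[
\widehat{P}^*_\Gam(\q) \;=\; P^*_\Gam(\iota(\q)) \qquad (\q\in\R^{n-1}).
\]
The latter uses Proposition~\ref{heid}: for any $\u\in\R^n$ with $u_n=t$ and $\q\in\R^{n-1}$, set $\overline{\w}:=\overline{\u}-t\overline{\e}$, and compute
\[
\u\trans\iota(\q) - P_\Gam(\u) = (\overline{\u}-t\overline{\e})\trans\q + t - \bigl(P_\Gam(\iota_0(\overline{\w}))+t\bigr)=\overline{\w}\trans\q - \widehat{P}_\Gam(\overline{\w}),
\]
using $\iota(\q)\trans\e=1$ and (\ref{shifting}). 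Taking $\sup$ over $\u$ is the same as taking $\sup$ over $\overline{\w}\in\R^{n-1}$, proving the identity; restricting to $\p\in\Pi_n$ gives $P^*_\Gam(\p)=\widehat{P}^*_\Gam(\overline{\p})$.

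With these two identities in hand the remaining parts follow mechanically. Part~(\ref{twoiota}) is immediate from Theorem~\ref{maxchar}(\ref{two}) and $P^*_\Gam(\p)=\widehat{P}^*_\Gam(\overline{\p})$ since $\Pi_\Gam\subseteq\Pi_n$. For (\ref{fouriota}) I would compute
\[
\widehat{P}_\Gam(\x) = P_\Gam(\iota_0(\x)) = \max_{\p\in\Pi_\Gam}\bigl(\p\trans\iota_0(\x)+h_\Gam(\p)\bigr) = \max_{\p\in\Pi_\Gam}\bigl(\overline{\p}\trans\x + h_\Gam(\p)\bigr),
\]
via Theorem~\ref{maxchar}(\ref{four}), and the maximizer $\iota(\q(\x))$ is simply any $\p\in\bpi(\iota_0(\x))$. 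Part~(\ref{sixiota}) then follows from Theorem~\ref{maxchar}(\ref{six}) combined with $P^*_\Gam(\iota(\q(\x)))=\widehat{P}^*_\Gam(\q(\x))$. Part~(\ref{fiveiota}) is the translation of Theorem~\ref{maxchar}(\ref{five}) via (\ref{denthmeq1}) of Theorem~\ref{denthm}: $\iota(\q(\x))\in\partial P_\Gam(\iota_0(\x))$ projects to $\q(\x)\in\overline{\partial P_\Gam(\iota_0(\x))}=\partial\widehat{P}_\Gam(\x)$, and the differentiability statement is already contained in Theorem~\ref{denthm}.

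For part~(\ref{threebisiota}), if $\x_i\to\x$ with $\x_i\in\Der\widehat{P}_\Gam$ and $\nabla\widehat{P}_\Gam(\x_i)\to\q$, then by Theorem~\ref{denthm} $\iota_0(\x_i)\in\Der P_\Gam$ and $\nabla P_\Gam(\iota_0(\x_i))=\iota(\nabla\widehat{P}_\Gam(\x_i))\to\iota(\q)$; Theorem~\ref{maxchar}(\ref{threebis}) then places $\iota(\q)$ in $\bpi(\iota_0(\x))$, whence $\q\in\overline{\Pi_\Gam(\iota_0(\x))}$. Finally, for (\ref{oneiota}), Theorem~\ref{maxchar}(\ref{onebis}) gives $\dom P^*_\Gam=\conv\Pi_\Gam$; projecting by the linear map $\p\mapsto\overline{\p}$ (which commutes with $\conv$) and using the conjugate identity to identify $\overline{\dom P^*_\Gam}=\dom\widehat{P}^*_\Gam$ yields $\dom\widehat{P}^*_\Gam=\conv\overline{\Pi_\Gam}$.

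The only genuinely delicate step is the conjugate identity $P^*_\Gam(\iota(\q))=\widehat{P}^*_\Gam(\q)$; once that is in place the rest is routine translation. A minor technical point in part~(\ref{oneiota}) is verifying that the projection $\p\mapsto\overline{\p}$ sends $\dom P^*_\Gam$ bijectively onto $\dom\widehat{P}^*_\Gam$, which requires knowing $\dom P^*_\Gam\subseteq\Pi_n\subset\Sigma_n$ so that $\p$ is determined by $\overline{\p}$; this is supplied by Theorem~\ref{maxchar}(\ref{one}).
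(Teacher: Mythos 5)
Your proof is correct and takes a genuinely different route from the paper. The paper's proof of Theorem~\ref{maxcharhat} re-derives each part from scratch, mirroring the arguments in the proof of Theorem~\ref{maxchar}: for part~(a) it fixes $\p\in\Pi_\Gam$ and $\u=\iota_0(\x)$, applies the basic inequality, then takes $\sup$ over $\x$; for part~(c) it argues directly that $-h_\Gam(\iota(\q(\x)))=\q(\x)\trans\x-\widehat{P}_\Gam(\x)\le\widehat{P}^*_\Gam(\q(\x))$ and invokes~(a); and for part~(f) it repeats the $\ri$--closure argument of Theorem~\ref{maxchar}(g) using Lemma~\ref{ransubf}, (\ref{denthmeq2}), and $\cl(\ri C)=\cl C$. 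Your proof instead isolates the conjugate identity $\widehat{P}^*_\Gam(\q)=P^*_\Gam(\iota(\q))$ --- equivalently $P^*_\Gam(\p)=\widehat{P}^*_\Gam(\overline{\p})$ for $\p\in\Sigma_n$ --- as the central lemma (the paper never states this identity), and then parts~(a), (c), (f) become direct transfers of the corresponding parts of Theorem~\ref{maxchar}. This is cleaner and shorter: in particular, for~(f) you avoid re-running the relative-interior argument by simply projecting $\dom P^*_\Gam=\conv\Pi_\Gam$ through the bijection $\p\mapsto\overline{\p}$ between $\Sigma_n$ and $\R^{n-1}$, using $\dom P^*_\Gam\subseteq\Sigma_n$ (which you correctly note follows from Theorem~\ref{maxchar}(\ref{one})). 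Your verification of the conjugate identity via (\ref{shifting}) and Proposition~\ref{heid} is sound, with the observation that as $\u$ ranges over $\R^n$ the quantity $\overline{\u}-u_n\overline{\e}$ ranges over all of $\R^{n-1}$. Parts~(b), (d), (e) are handled essentially as in the paper. What the paper's approach buys is self-containment part-by-part; what yours buys is a unifying principle that makes the theorem visibly a restriction of Theorem~\ref{maxchar} to the affine slice $\Sigma_n$.
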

  \begin{Proof}
  From (\ref{maxchar1}) we have $P_\Gam(\u) \geq \p\trans \u +
  h_{\Gam}(\p)$ for all $\u \in \R^n$ and $\p \in \Pi_{\Gam}$. Fix
  $\p\in\Pi_\Gam$, and let $\x \in \R^{n-1}$ and $\u =
  \iota_0(\x)$. Then $\widehat{P}_\Gam(\x)=P_\Gam(\u) \geq
  \overline{\p}\trans\x +h_\Gam(\p)$, so $-h_\Gam (\p) \geq
  \overline{\p}\x - \widehat{P}_\Gam(\x)$.  Now take the supremum
  over $\x \in\R^{n-1}$ to obtain $-h_\Gam(\p) \geq
  \widehat{P}_\Gam^*(\overline{\p})$, which is (a).  Substitute
  $\u=\iota_0(\x)$ in (\ref{maxchar1}) to deduce
  (\ref{maxchar1hat}). We now show (\ref{sixiota}). By
  (\ref{defpx}) and the definition of $\widehat{P}^*$ it follows
  that $-h_\Gam(\iota(\q(\x))) = \q(\x)\trans\x -
  \widehat{P}_\Gam(\x) \leq \widehat{P}_\Gam^*(\q(\x))$. Combining
  this with the opposite inequality (\ref{twoiota}), we deduce
  (\ref{sixiota}). To prove (\ref{fiveiota}), let $\x,\z \in
  \R^{n-1}$. Since $\iota(\q(\x)) \in \Pi_\Gam$,
  (\ref{maxchar1hat}) applied to $\x + \z$ and (\ref{defpx}) give
  $\widehat{P}_\Gam(\x+\z) \geq \q(\x)\trans(\x+\z) +
  h_\Gam(\iota(\q(\x))) = \q(\x)\trans \z + \widehat{P}_\Gam(\x)$.
  This proves (\ref{fiveiota}). To prove (\ref{threebisiota}), let
  $\x \in \R^{n-1} \setminus \Der \widehat{P}_\Gam$ and let $\q
  \in S(\x)$. Then there exists a sequence $\x_i \in \Der
  \widehat{P}_\Gam$ such that $\x_i \to \x$ and $\nabla
  \widehat{P}_\Gam(\x_i) \to \q$. By the ``furthermore'' part of
  Theorem~\ref{denthm} we have $\iota_0(\x_i) \in \Der P_\Gam$ and
  $\iota_0(\x )\in \R^n \setminus \Der P_\Gam$.
  By~(\ref{uniqueLambdahat}) and the continuity of $\iota$ we have
  $\nabla P_\Gam(\iota_0(\x_i)) = \iota(\nabla
  \widehat{P}_\Gam(\x_i)) \to \iota(\q)$. This shows that
  $\iota(\q) \in S(\iota_0(\x)) \subseteq \Pi_\Gam(\iota_0(\x))$,
  where the inclusion is by (\ref{threebis}) of
  Theorem~\ref{maxchar}. It follows that $\q =
  \overline{\iota(\q)} \in \overline{\Pi_\Gam(\iota_0(\x))}$, as
  required. Now we prove (\ref{oneiota}). By (\ref{twoiota}) we
  have $\widehat{P}^*_\Gam(\overline{\p}) \leq -h_\Gam(\p) <
  \infty$ for all $\p \in \Pi_\Gam$, so $\overline{\Pi_\Gam}
  \subseteq \dom \widehat{P}_\Gam ^*$, and taking the convex hull
  gives $\conv \overline{\Pi_\Gam} \subseteq \dom \widehat{P}_\Gam
  ^*$. It remains to show the reverse inclusion. We know by
  Lemma~\ref{ransubf} that $\partial \widehat{P}_\Gam(\R^{n-1})$
  is the set of all points where $\widehat{P}^*_\Gam$ is
  subdifferentiable, so in particular $\ri(\dom
  \widehat{P}^*_\Gam) \subseteq \partial
  \widehat{P}_\Gam(\R^{n-1})$. By (\ref{denthmeq2}) and
  (\ref{one}) of Theorem~\ref{maxchar} we have $\partial
  \widehat{P}_\Gam(\R^{n-1}) = \overline{\partial P_\Gam(\R^n)}
  \subseteq \overline{\conv \Pi_\Gam} = \conv
  \overline{\Pi_\Gam}$. Combining the above inclusions gives
  $\ri(\dom \widehat{P}^*_\Gam) \subseteq \conv
  \overline{\Pi_\Gam}$. Applying the closure operator, we obtain
  $\dom \widehat{P}^*_\Gam \subseteq \cl(\dom \widehat{P}^*_\Gam)
  = \cl(\ri(\dom \widehat{P}^*_\Gam)) \subseteq \conv
  \overline{\Pi_\Gam}$ as in  the proof of (\ref{onebis}) of
  Theorem~\ref{maxchar}.
  \end{Proof}

  Since a probability vector $\p\in\Pi_n$ is determined completely by its projection $\bar \p$ on the first $n-1$
  components, we can view the function $h_\Gam: \Pi_\Gam \to \R_+$ as a function on $\overline{\Pi_\Gam}$.
  Formally, let
  \begin{equation}\label{defhbar}
  \bar h_\Gam(\q): = h_\Gam(\iota(\q))  \textrm{ for all } \q\in \overline{\Pi_\Gam}.
  \end{equation}
 \section{$P_\Gam$ and density entropies for one dimensional
 SOFT}\label{sec:onedimensionalsoft}

 In this section we apply the results of Section~\ref{densityentropy}
 to one dimensional SOFT.  In this case $\Gam$ is given
 by a digraph $\Gamma:=\Gamma_1 \subseteq \an{n}\times \an{n}$.

 \begin{theo}\label{present1soft}
 Let $\Gamma\subseteq \an{n}\times\an{n}$ be a digraph
 on $n$ vertices, with at least one strongly connected component.
 Then $P_{\Gamma}(\u)=\log \rho(D(\Gamma,\u))$,
 where the nonnegative matrix $D(\Gamma,\u)$ is given in
 Proposition \ref{onedimpressure}.  If $\Gamma$ is strongly
 connected, or more generally $\Gamma$ has one connected
 component, then $P_{\Gamma}$ is an analytic function on $\R^n$,
 $\Pi_{\Gamma}$ is a closed convex set of probability vectors equal
 to $\dom P_{\Gamma}^*$,
 $h_{\Gamma}$ is concave and continuous on $\ri
 \Pi_{\Gamma}$, and coincides there with $-P_{\Gamma}^*$.
 In particular, for any $\u\in\R^n$
 \begin{equation}\label{hgamfor}
 h_{\Gamma}(\p(\u))= -\p(\u) \trans \u
 +\log\rho(D(\Gamma,\u)),\;  \textrm{where } \p(\u):=\frac{\nabla \rho(D(\Gamma,\u))}
 {\rho(D(\Gamma,\u))}.
 \end{equation}
 Furthermore
 \begin{equation}\label{hgamfor1}
 h_{\Gamma}=\max_{\p\in\Pi_{\Gamma}} h_{\Gamma}(\p)=
 h_{\Gamma}(\frac{\nabla\rho(D(\Gamma,\0))}{\rho(\Gamma)})=\log\rho(\Gamma).
 \end{equation}

 Assume that $\Gamma$ has $k>1$ connected components
 $\Delta_1,\ldots,\Delta_k$.  Then $P_{\Gamma}(\u)=$

 \noindent
 $\max (P_{\Delta_1}(\u),\ldots,P_{\Delta_k}(\u))$, where
 each $P_{\Delta_i}$ is an analytic function on $\R^n$.

 \end{theo}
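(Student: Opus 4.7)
The plan is to dispatch each assertion in turn, leveraging Proposition \ref{onedimpressure} and Theorem \ref{maxchar}. The identity $P_\Gamma(\u) = \log \rho(D(\Gamma,\u))$ is essentially the content of Proposition \ref{onedimpressure}; the assumption of a strongly connected component only ensures $\rho(D(\Gamma,\u)) > 0$ (some cycle carries positive weight), so the logarithm is finite and $C_\Gamma(\Z)$ is nonempty.

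For the strongly connected case, I would first invoke Perron--Frobenius: irreducibility of $D(\Gamma,\u)$ gives that $\rho(D(\Gamma,\u))$ is a simple eigenvalue with a strictly positive eigenvector. Since the entries of $D(\Gamma,\u)$ are entire in $\u$ through exponentials, analytic perturbation theory for simple eigenvalues (implicit function theorem on the characteristic polynomial) makes $\rho(D(\Gamma,\u))$, and hence $P_\Gamma(\u)$, real-analytic on $\R^n$. Then $\partial P_\Gamma(\u) = \{\nabla P_\Gamma(\u)\}$ everywhere, and the chain rule gives $\nabla P_\Gamma(\u) = \nabla \rho(D(\Gamma,\u))/\rho(D(\Gamma,\u)) = \p(\u)$. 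By Theorem \ref{maxchar}(e), $\{\p(\u)\} = \bpi(\u) \subseteq \Pi_\Gamma$; combining part (c) with (\ref{fstari}) yields
\[
h_\Gamma(\p(\u)) = -P^*_\Gamma(\p(\u)) = P_\Gamma(\u) - \p(\u)\trans \u = \log \rho(D(\Gamma,\u)) - \p(\u)\trans \u,
\]
which is (\ref{hgamfor}). Specializing to $\u = \0$, together with Corollary \ref{maxcharhgu}, gives (\ref{hgamfor1}).

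To prove convexity of $\Pi_\Gamma$, I would argue directly: given $\p, \q \in \Pi_\Gamma$ realized by long walks in $\Gamma$ with limiting frequency vectors $\p, \q$, use strong connectivity to splice many copies of such walks together via connecting paths of bounded length, producing walks whose frequency vectors tend to any rational convex combination $\lambda \p + (1-\lambda) \q$; closedness of $\Pi_\Gamma$ then yields full convexity. Combined with Theorem \ref{maxchar}(g), this gives $\Pi_\Gamma = \conv \Pi_\Gamma = \dom P^*_\Gamma$. By Lemma \ref{ransubf}, $\ri(\dom P^*_\Gamma) \subseteq \partial P_\Gamma(\R^n)$, so the identity $h_\Gamma(\p) = -P^*_\Gamma(\p)$, which parts (c) and (e) of Theorem \ref{maxchar} establish on $\partial P_\Gamma(\R^n)$, extends to all of $\ri \Pi_\Gamma$. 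Concavity and continuity of $h_\Gamma$ on $\ri \Pi_\Gamma$ then follow from the standard properties of the proper closed convex function $-P^*_\Gamma$ on the relative interior of its domain.

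For the multi-component case, every $\Gamma$-allowed walk lies entirely within some weakly connected component $\Delta_i$, hence $Z_\Gamma(m,\u) = \sum_{i=1}^k Z_{\Delta_i}(m,\u)$ and
\[
P_\Gamma(\u) = \lim_{m\to\infty} \frac{\log \sum_{i=1}^k Z_{\Delta_i}(m,\u)}{m} = \max_{i \in \an{k}} P_{\Delta_i}(\u),
\]
since the maximal exponential rate dominates; each $P_{\Delta_i}$ is analytic by the previously established single-component case. The main obstacle I anticipate is the combinatorial convexity argument for $\Pi_\Gamma$, which requires carefully controlling the bounded length overhead from connecting paths relative to the (arbitrarily large) lengths of the spliced walks so that limiting frequencies are preserved; the remainder is a routine combination of convex analysis with the Perron--Frobenius theory of irreducible nonnegative matrices.
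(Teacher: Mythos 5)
Your proof follows the same broad strategy as the paper---Proposition~\ref{onedimpressure} for the identity $P_\Gamma = \log\rho(D(\Gamma,\u))$, Perron--Frobenius plus the implicit function theorem for analyticity in the irreducible case, and Theorem~\ref{maxchar} with Lemma~\ref{ransubf} for the density-entropy formulas. The derivation of~(\ref{hgamfor}) and~(\ref{hgamfor1}) from parts~(c), (d) of Theorem~\ref{maxchar}, (\ref{fstari}), and Corollary~\ref{maxcharhgu} is exactly right.

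Where you genuinely diverge from the paper is the convexity of $\Pi_\Gamma$. The paper deduces it purely from convex analysis: combining Theorem~\ref{maxchar}(g), the differentiability inclusion $\partial P_\Gamma(\R^n)\subseteq\Pi_\Gamma$, and Lemma~\ref{ransubf} gives the chain $\dom P^*_\Gamma = \conv\Pi_\Gamma \supseteq \Pi_\Gamma \supseteq \partial P_\Gamma(\R^n) \supseteq \ri(\dom P^*_\Gamma)$; taking closures and using that $\Pi_\Gamma$ is closed collapses everything to $\Pi_\Gamma = \conv\Pi_\Gamma = \dom P^*_\Gamma$. You propose instead a direct combinatorial splicing of long walks via bounded-length connectors, yielding rational convex combinations and then appealing to closedness. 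Your route is valid and arguably more explicit, though longer; the paper's route is shorter because it reuses the already-proved machinery. Either is acceptable.

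There is, however, a genuine gap in your treatment of the multi-component case. You decompose $\Gamma$ into its \emph{weakly} connected components $\Delta_i$, obtain $Z_\Gamma = \sum_i Z_{\Delta_i}$ and hence $P_\Gamma = \max_i P_{\Delta_i}$, and then assert each $P_{\Delta_i}$ is analytic ``by the previously established single-component case.'' But the single-component case you established assumed \emph{strong} connectivity. A weakly connected component can contain several nontrivial strongly connected subcomponents (e.g.\ $1\to 2$ with self-loops at both $1$ and $2$), and then $\rho(D(\Delta_i,\u))$ is a pointwise maximum of competing eigenvalues, which is not analytic in general. The paper avoids this by working with the irreducible blocks in the Frobenius normal form of $D(\Gamma)$, i.e.\ the \emph{strongly} connected components $\Delta_i$: block triangularity gives $\rho(D(\Gamma,\u)) = \max_i\rho(D(\Delta_i,\u))$ directly (no decomposition of $Z_\Gamma$ is needed), and each $P_{\Delta_i}$ is then analytic by the irreducible analysis. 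Your argument needs to be repaired by switching the decomposition from weak to strong components and replacing the $Z_\Gamma$-sum identity by the spectral-radius-of-a-block-triangular-matrix fact; otherwise the analyticity claim is simply false.
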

 \begin{Proof}  Proposition \ref{onedimpressure} yields the
 equality $P_{\Gamma}(\u)=\log \rho(D(\Gamma,\u))$.
 Assume first that $\Gamma$ is strongly connected,
 which is equivalent to the assumption that the
 adjacency matrix $D(\Gamma)=(d_{ij})_{i,j\in \an{n}}$ is irreducible.
 From the definition of $D(\Gamma,\u) =
 (d_{ij} \cdot  e^{\frac{1}{2}(\e_i\trans \u + \e_j\trans \u)})_{i,j\in\an{n}}$
 it follows that $D(\Gamma,\u)$ is an irreducible matrix
 for each value $\u\in\R^n$.  Then $\rho(D(\Gamma,\u))$
 is a simple root of the characteristic equation
 $p(z,\u):=\det (zI-D(\Gamma,\u))=0$ for each $\u\in\R^n$.
 Since the coefficients of $p(z,\u)$ are analytic in $\u$,
 where $\u\in\C^n$, the implicit function theorem implies
 that $\rho(D(\Gamma,\u))$ is analytic function in a neighborhood
 of $\R^n$ of $\C^n$.  Since $\rho(D(\Gamma,\u))$ is positive
 on $\R^n$ it follows that $\log\rho(D(\Gamma,\u)), \u\in\R^n$
 has an analytic extension to some neighborhood of $\R^n$
 in $\C^n$.  Hence $P_{\Gamma}$ is analytic on $\R^n$.
 In particular, $P_{\Gamma}$ is differentiable on $\R^n$.
 Theorem \ref{maxchar} and Lemma \ref{ransubf} yield that
 $\dom P_{\Gamma}^*=\conv \Pi_{\Gamma}\supseteq\Pi_{\Gamma} \supseteq \partial
 P_{\Gamma}(\R^n)\supseteq \ri(\dom P_{\Gamma}^*)$.
 Since $\Pi_{\Gamma}$ is closed we obtain that
 $\Pi_{\Gamma}=\Cl(\dom P_{\Gamma}^*)=\conv \Pi_{\Gamma}$,
 hence $\Pi_{\Gamma}$ is convex.  According to Theorem
 \ref{maxchar} $\Cl(\dom P^*_{\Gam})=\dom P^*_{\Gam}$.

 Since $P_{\Gamma}$ is differentiable, Theorem  \ref{maxchar} yields that
 $h_{\Gamma}(\p)=-P^*_{\Gamma}(\p)$ for $\p\in \partial
 P_{\Gamma}(\R^n)$.  As $\partial P_{\Gamma}(\R^n) \supseteq \ri
 (\dom P_{\Gamma}^*)$ we deduce that $h_{\Gamma}=-P^*_{\Gamma}$ on
 $\ri (\dom P_{\Gamma}^*)$.  Since $P^*_{\Gamma}$ is a convex
 continuous function on $\ri (\dom P_{\Gamma}^*)$, it follows that
 $h_{\Gamma}$ is a concave continuous function on $\ri (\dom
 P_{\Gamma}^*)$.

 As $P_{\Gamma}(\u)=\log\rho(D(\Gamma,\u))$ it follows that
 $\nabla P_{\Gamma}(\u)=\frac{\nabla \rho(D(\Gamma,\u))}{\rho(D(\Gamma,\u))}$.
 Hence (\ref{hgamfor}) holds.  Clearly,
 $\rho(D(\Gamma,\0))=\rho(\Gamma)$ and (\ref{hgamfor1}) follows.

 Assume now that $\Gamma$ is not strongly connected digraph.
 Rename the vertices of $\Gamma$ such that $D(\Gamma)$ is its
 \emph{normal} form \cite[XIII.4]{Gan}.   That is, $D(\Gamma)$ is a block lower
 triangular form matrix, where
 each submatrix on a diagonal block is either a nonzero irreducible matrix or
 $1\times 1$ zero matrix.  Then each nonzero irreducible
 submatrix corresponds to a strongly irreducible component of
 $\Gamma$. Let $\Delta_1,\ldots,\Delta_k$ be the $k\ge 1$
 irreducible components of $\Gamma$.  Since $D(\Gamma,\u)$
 is also in its normal form it follows that
 $\rho(D(\Gamma,\u))=\max_{i\in [1,k]} \rho(D(\Delta_k,\u))$.
 Note that $\log\rho(D(\Delta_i,\u))=P_{\Delta_i}(\u)$ for
 $i=1,\ldots,k$.

 Assume first that $k=1$ and $\Delta_1\ne \Gamma$.  Rename
 the vertices of $\Gamma$ such that $\an{m}$ is the set of vertices
 of $\Delta_1$, where $1\le m <n$.  Let $\tilde \u=(u_1,\dots,
 u_m)\trans$.  Then $P_{\Gamma}(\u)=P_{\Delta_1}(\tilde \u)$
 and the theorem follows in this case.

 Assume finally that $k>1$.  The above arguments show that each $P_{\Delta_i}$
 is an analytic function in $\u$, which does not depend on a
 variable $u_j$ if $j$ is not a vertex of $\Delta_i$.
 \end{Proof}

 Assume that $\Gamma$ is strongly connected and we want to compute
 $\p(\u)=\frac{\nabla \rho(D(\Gamma,\u))} {\rho(D(\Gamma,\u))}$.
 We give the following simple formula for $\p(\u)$ which is known
 to the experts.

 \begin{prop}\label{radru}
 Let $\Gamma\subseteq \an{n}\times\an{n}$ be
 a strongly connected digraph
 on $n$ vertices.  Let $D(\Gamma,\u)$ be the nonnegative matrix given in
 Proposition \ref{onedimpressure}.  Let $\x(\u)=(x_1(\u),\ldots,x_n(\u))\trans$,
 $\y(\u)=(y_1(\u),\ldots,y_n(\u))\trans$
 be positive eigenvectors of $D(\Gamma,\u)$, $D(\Gamma,\u)\trans$
 respectively, normalized by the condition $\y(\u)\trans \x(\u)=1$.
 Then
 \begin{equation}\label{radru1}
 \nabla P_{\Gamma}(\u)=\frac{\nabla \rho(D(\Gamma,\u))} {\rho(D(\Gamma,\u))}
 =(y_1(\u)x_1(\u),\ldots,y_n(\u)x_n(\u)) \textrm{ for each }
 \u\in\R^n.
 \end{equation}
 \end{prop}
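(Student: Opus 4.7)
The plan is to use standard perturbation theory for simple eigenvalues of irreducible nonnegative matrices, combined with a direct calculation of the partial derivatives of $D(\Gamma,\u)$.

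First I would fix $\u \in \R^n$ and use that $\Gamma$ is strongly connected to conclude, via Perron-Frobenius, that $\rho := \rho(D(\Gamma,\u))$ is a simple eigenvalue with strictly positive left and right eigenvectors $\y(\u)$, $\x(\u)$, uniquely determined (up to scaling) by $D(\Gamma,\u)\x = \rho\x$ and $\y\trans D(\Gamma,\u) = \rho \y\trans$. As was already noted in the proof of Theorem~\ref{present1soft}, simplicity of $\rho$ together with analyticity of $D(\Gamma,\u)$ in $\u$ allows us, by the implicit function theorem, to differentiate $\rho(\u)$, $\x(\u)$, $\y(\u)$ smoothly in $\u$.

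Next, for each $k \in \an{n}$ I would differentiate the eigenvalue equation $D(\Gamma,\u)\x(\u) = \rho(\u)\x(\u)$ with respect to $u_k$, obtaining
\[
\frac{\partial D(\Gamma,\u)}{\partial u_k}\x(\u) + D(\Gamma,\u)\frac{\partial \x(\u)}{\partial u_k}
= \frac{\partial \rho(\u)}{\partial u_k}\x(\u) + \rho(\u)\frac{\partial \x(\u)}{\partial u_k}.
\]
Left-multiplying by $\y(\u)\trans$ and using $\y(\u)\trans D(\Gamma,\u) = \rho(\u)\y(\u)\trans$ together with the normalization $\y(\u)\trans \x(\u) = 1$, the $\partial\x/\partial u_k$ terms cancel and I get the clean formula
\[
\frac{\partial \rho(\u)}{\partial u_k} = \y(\u)\trans \frac{\partial D(\Gamma,\u)}{\partial u_k} \x(\u).
\]

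The remaining step is the pointwise computation of $\partial D(\Gamma,\u)/\partial u_k$. From (\ref{AGamma1u}), $d_{ij}(\u) = d_{ij} e^{(\e_i\trans\u + \e_j\trans\u)/2}$, so $\partial d_{ij}(\u)/\partial u_k = \tfrac{1}{2}(\delta_{ik}+\delta_{jk}) d_{ij}(\u)$. Plugging this in,
\[
\y\trans \frac{\partial D(\Gamma,\u)}{\partial u_k}\x
= \tfrac{1}{2}\sum_{i,j}(\delta_{ik}+\delta_{jk})\, y_i\, d_{ij}(\u)\, x_j
= \tfrac{1}{2}\bigl[y_k (D(\Gamma,\u)\x)_k + x_k(D(\Gamma,\u)\trans \y)_k\bigr],
\]
and using $D(\Gamma,\u)\x = \rho \x$ and $D(\Gamma,\u)\trans \y = \rho \y$ this collapses to $\rho(\u)\, y_k(\u) x_k(\u)$. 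Dividing by $\rho(\u)$ gives the $k$-th coordinate of (\ref{radru1}), and since $P_\Gamma(\u) = \log\rho(D(\Gamma,\u))$ by Proposition~\ref{onedimpressure}, $\nabla P_\Gamma(\u) = \nabla\rho/\rho$, completing the proof.

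I do not expect any genuine obstacle: the only subtlety is verifying that $\partial\x/\partial u_k$ really exists (handled by simplicity of $\rho$ and the implicit function theorem, already invoked in Theorem~\ref{present1soft}). The rest is a one-line eigenvector trick and a two-line index manipulation exploiting the symmetric splitting $\tfrac{1}{2}(\e_i\trans\u + \e_j\trans\u)$ in the definition of $d_{ij}(\u)$, which is precisely what makes the resulting expression factor as a product of Perron components.
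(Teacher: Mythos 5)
Your proposal is correct and follows essentially the same route as the paper: both rely on simplicity of the Perron root (so that $\rho$, $\x$, $\y$ depend smoothly on $\u$), the standard eigenvalue-perturbation identity $\partial_k\rho = \y\trans(\partial_k D)\x$ (the paper gets it by differentiating $\y\trans D\x=\rho$ together with the normalization, you get it by differentiating $D\x=\rho\x$ and left-multiplying by $\y\trans$ — the same trick), and the computation $\y\trans(\partial_k D)\x=\rho\, y_k x_k$ coming from the symmetric splitting in (\ref{AGamma1u}), which you helpfully spell out where the paper only asserts it.
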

 \begin{Proof}  Let $D(\u):=D(\Gamma,\u),\rho(\u):=\rho(D(\Gamma,\u))$.
 Since $\rho(\u)>0$ is a simple root of $\det (zI -
 D(\u))$ it follows that one can choose $\x(\u),\y(\u)$ to be
 analytic on $\R^n$ in $\u$.  (For example first choose
 $\x(\u),\tilde\y(\u)\in \R^n_+$ to be the unique left and right eigenvectors of $D(\u)$
 of length $1$.  Then let $\y(\u)=\frac{1}{\tilde\y(\u)\trans \x(\u)}
 \y(\u)$.)  Let $\partial_i$ be the partial derivative with respect
 to $u_i$.  Then
 $$\y(\u)\trans \x(\u)=1 \forall \u\in \R^n
 \Rightarrow\partial_i\y(\u)\trans \x(\u) +\y(\u)\trans
 \partial_i\x(\u)=0, \texttt{ for }i=1,\ldots,n.$$
 Observe next that $\y(\u)\trans D(\u)\x(\u)=\rho(\u)$.
 Taking the partial derivative with respect to $u_i$
 and using the formula (\ref {AGamma1u}) for the entries of $D(\u)$ we
 obtain
 \begin{eqnarray*}
 &&\partial_i\rho(\u)= \partial_i\y(\u)\trans D(\u) \x(\u)
 +\y(\u)\trans D(\u)
 \partial_i\x(\u) +\y(\u)\trans \partial_i D(\u) \x(\u).\\
 &&\rho(\u)(\partial_i\y(\u)\trans \x(\u) +\y(\u)\trans
 \partial_i\x(\u)) + \rho(\u)y_i(\u)x_i(\u)= \rho(\u)y_i(\u)x_i(\u).
 \end{eqnarray*}
 This proves (\ref{radru1}).  \end{Proof}

 We now apply the above results to the following simple digraph
 on two vertices:

 \begin{center}
  \psset{unit=1cm}
  \begin{pspicture}(2,.5)
   \psset{fillstyle=solid,radius=.1}
   \Cnode[fillcolor=red](0,0){red}
   \Cnode[fillcolor=blue](1,0){blue}
   \ncarc[arcangle=30]{->}{red}{blue}
   \ncarc[arcangle=30]{->}{blue}{red}
   \nccircle[angle=-90]{->}{blue}{.2}
   \Cnode[fillcolor=blue](1,0){anotherblue}
  \end{pspicture}
 \end{center}
 Identify the red color with the state $1$ and the blue color with
 the state $2$, which is usually identified with the state $0$.
 Then $C_{\Gamma}(\Z)$ consists of all coloring of the lattice
 $\Z$ in blue and red colors such that no two red colors are
 adjacent.  This is the simplest \emph{hard core model} in
 statistical mechanics.  The adjacency matrix $D(\Gamma)$ is
 the following $2\times 2$ matrix $\left(\begin{array}{cc} 0
 &1\\1&1\end{array}\right)$.
 Let $\u=(s,t)\trans$.  Then $\nabla P_{\Gamma}(\u)=(p_1(\u),p_2(\u))\in \Pi_2$
 it follows that $p_2(\u)=1-p_1(\u)$.  It is enough to consider
 $\u=(s,0)$ and $p_1(s)=\frac{dP_{\Gamma}((s,0)\trans)}{ds}$.
 So $p:=p_1(s)$ is the density of $1$ in all the configurations
 of infinite strings of $0,1$, where no two $1$ are adjacent.
 Clearly $D(\Gamma,\u)=\left(\begin{array}{cc} 0
 &e^{\frac{s}{2}}\\ e^{\frac{s}{2}}&1\end{array}\right)$.
 Hence
 \begin{eqnarray*}
 &&\rho(\u)=\frac{1+\sqrt{1+4e^s}}{2},\quad p_1(s)=\frac{2e^s}
 {(1+\sqrt{1+4e^s})\sqrt{1+4e^s}}=\\
 &&\frac{2}{(e^{-\frac{s}{2}}+\sqrt{e^{-s}+4})\sqrt{e^{-s}+4}}
 = \frac{1}{2}\big(1-\frac{1}{\sqrt{1+4e^s}}\big)\in (0,\frac{1}{2}).
 \end{eqnarray*}
 Note that $p_1(s)$ is increasing on $\R$, and $p_1(-\infty)=0,
 p_1(\infty)=\frac{1}{2}$.  Hence $\Pi_{\Gamma}=$

 \noindent
 $\conv(\{(0,1)\trans,\frac{1}{2}(1,1)\trans\})$ and $\partial P_{\Gamma}(\R^2)=\ri
 \Pi_{\Gamma}$.
 As $P_{\Gamma}(\0)=h_{\Gamma}=\log \frac{1+\sqrt{5}}{2}$ it follows that the
 value $p^*:=p_1(0)=\frac{2}{(1+\sqrt{5})\sqrt{5}}=.2763932024$
 is the density $p^*$ of $1$'s for
 which $h_{\Gamma}=h_{\Gamma}((p^*,1-p^*))$.

 To find the formula for $\bar h_{\Gamma}(p)=h_{\Gamma}((p,1-p))$ first
 note that if $p=p_1(s)$ then
 $$\sqrt{1+4e^s}=\frac{1}{1-2p}, \quad s(p)=
 \log\frac{p(1-p)}{(1-2p)^2}.$$
 Then
 $$\bar h_{\Gamma}(p)= \log \frac{1-p}{1-2p} -
 p \log\frac{p(1-p)}{(1-2p)^2}, \quad p\in (0,\frac{1}{2}).$$

 Our computations of $P_\Gam$, for $d\ge 2$, are based on
 upper and lower bounds, for example as given in Corollary \ref{ulbd=2}.
 We claim that the function $\frac{\log\theta_2(m,\u)}{m}$ can be viewed
 as the pressure function
 of certain corresponding one dimensional subshift of finite type
 given.

 Consider for the simplicity of the exposition two
 dimensional SOFT given by $\Gam=(\Gamma_1,\Gamma_2)$, where
 $\Gamma_1$ is a symmetric digraph.  Let $\Delta$ be the transfer digraph induced
 by $\Gamma_2$ between the
 allowable $\Gamma_1$ coloring of the circle $T(m)$.
 Then $V:=C_{\Gamma_1,\perio}(m)$ are
 the set of vertices of $\Delta$.  For any
 $\alpha,\beta \in C_{\Gamma_1,\perio}(m)$ the directed
 edge
 $(\alpha,\beta)$ is in $\Delta$ if and only if the configuration
 $[(\alpha,\beta)]$ is an allowable configuration on
 $C_\Gam((m,2))$.  Note that the adjacency matrix $D(\Delta)=(d_{\alpha\beta})
 _{\alpha,\beta\in C_{\Gamma_1,\perio}(m)}$ is $N\times N$
 matrix, where
 $N:=\#C_{\Gamma_1,\perio}(m)$.
 Then the one dimensional SOFT is $C_{\Gam}(T(m)\times \Z)$:
 all $\Gam$ allowable coloring of
 the infinite torus in the direction $\e_2$ with the basis $T(m)$.
 The pressure corresponding to this one dimensional SOFT
 is denoted by $\tilde P_{\Delta}(\u)$.  It is given by the
 following formula:
 Let
 \begin{equation}\label{tildedudef}
 \tilde D(\Delta,\u)=(\tilde d_{\alpha\beta}(\u))
 _{\alpha,\beta\in C_{\Gamma_1,\perio}(m)},\;
 \tilde d_{\alpha\beta}(\u)=d_{\alpha\beta}e^{\frac{1}{2}(\bc(\alpha)+\bc(\beta))\trans
 \u}.
 \end{equation}
 Then
 \begin{equation}\label{tildedelpres}
 \tilde P_{\Delta}(\u):=\frac{\log\rho(\tilde D(\Delta,\u))}{m}.
 \end{equation}
 The reason we divide $\log\rho(\tilde D(\Delta,\u))$ by $m$, is
 to have the normalization
 \[\tilde P_{\Delta}(\u + t\e)=\tilde P_{\Delta}(\u)+t\quad \textrm{for any
 }t\in\R.\]

 It is straightforward to show that $\frac{\log\theta_2(m,\u)}{m}=\tilde
 P_{\Delta}(\u)$.
 Assume that $\Delta$ has one irreducible component.
 Then the arguments of the proof of Proposition \ref{radru}
 yield that $\tilde P_{\Delta}(\u)$ is analytic on $\R^n$.
 Furthermore
 \begin{equation}\label{nabfortilpres}
 \nabla \tilde P_{\Delta}(\u)=(\y(\u)\trans (\partial_1 \tilde
 D(\Delta,\u))\x(\u),\ldots, \y(\u)\trans (\partial_n \tilde
 D(\Delta,\u))\x(\u)),
 \end{equation}
 for any $\u\in\R^n$.  Here
 $\x(\u)$ and $\y(\u)$ are the nonnegative eigenvectors of
 $D(\Delta,\u)$ and $D(\Delta,\u)\trans$, respectively, normalized by
 the condition $\y(\u)\trans \x(\u)=1$.  Then $\nabla \tilde
 P_{\Delta}(\u)\in\Pi_n$ corresponds to the limiting densities
 of the $n$ kind of particles in this one dimensional
 SOFT.

 In the numerical computations, as in the next section, we use
 one dimensional subshifts to estimate the pressure $P_\Gam$
 from above or below as described for example in Corollary \ref{ulbd=2}.  To estimate
 the partial derivatives of $P_\Gam$ one can find the partial
 derivatives of the pressure corresponding to the one
 dimensional subshift approximation using Proposition
 \ref{radru}.  Since $P_\Gam(\u)$ is convex in each variable we
 can estimate each partial derivative from above and below by
 finite differences.  However, these estimates are not as good
 as taking the derivatives of the one dimensional subshift
 approximation to $P_\Gam(\u)$.

 \section{The monomer-dimer model in $\Z^d$}\label{sec:weightedmonomerdimertiligs}

 A \emph{dimer} is a union of two adjacent sites in the grid
 $\Z^d$, and a \emph{monomer} is a single site. By a \emph{tiling}
 of a set $S \subseteq \Z^d$ we mean a partition of $S$ into
 monomers and dimers. By a \emph{cover} of $S$ we mean a tiling of
 a superset of $S$ with each monomer contained in $S$ and each
 dimer meeting $S$; in other words, dimers are allowed to protrude
 halfway out of $S$. Usually our set $S$ will be a box or the
 entire $\Z^d$; in the case of a torus we only speak of tilings.
 As mentioned in \cite{FP}, the set of monomer-dimer
 tilings of $\Z^d$ can be encoded as an NNSOFT $C_\Gam(\Z^d)$ as
 follows. We color $\Z^d$ with the $2d + 1$ colors $1,\ldots,2d +
 1$: a dimer in the direction of $\e_k$ occupying the adjacent
 sites $\i,\i + \e_k$ is encoded by the color $k$ at $\i$ and the
 color $k + d$ at $\i + \e_k$; a monomer at $\i$ is encoded by the
 color $2d + 1$ at $\i$.  This imposes restrictions on the
 coloring, which are expressed by the $d$-digraph $\Gam =
 (\Gamma_1,\ldots,\Gamma_d)$ on the set of vertices $\an{2d + 1}$,
 where
 \begin{equation}\label{encoding}
 (p,q) \in \Gamma_k \Longleftrightarrow (p = k, q = k+d) \text{ or } (p \neq k,q \neq k+d).
 \end{equation}
 It is easy to check that this gives a bijection between the
 monomer-dimer tilings of $\Z^d$ and $C_\Gam(\Z^d)$.
 Let $P_\Gam(\u), \u\in\R^{2d+1}$ be the pressure function for the
 monomer-dimer model in $\Z^d$.  Since each dimer in the direction
 $\e_k$ corresponds to the colors $k$ and $k+d$ it follows that
 $P_\Gam(\u)$ is effectively a function of $d+1$ variables.
 To show that
  we define the following linear transformations
  \begin{defn}\label{defnvart}
  { % begin scope of \renewcommands
   \renewcommand{\theenumi}{\alph{enumi}}
   \renewcommand{\labelenumi}{(\theenumi)}
  \begin{enumerate}
  \item  Let
  $T,T_1:\R^{d+1}\to \R^{2d+1}$ be the linear transformations
  \begin{eqnarray*}
  T(w_1,\ldots,w_d,w_{d+1})
  =(\frac{w_1}{2},\ldots,\frac{w_d}{2},\frac{w_1}{2},\ldots,\frac{w_d}{2},
  w_{d+1}),\\
  T_1(w_1,\ldots,w_d,w_{d+1})=(w_1,\ldots,w_d,w_1,\ldots,w_d,w_{d+1}).
  \end{eqnarray*}

  \item  Let $Q:\R^{2d+1} \to \R^{d+1}$
  be the linear transformation given by
  $Q(u_1,\ldots,u_{2d+1}) =
  (u_1+u_{d+1},\ldots,u_{d}+u_{2d},u_{2d+1})$.

  \item Let $Q_d:\R^d\to \R$ be the linear transformation $(v_1,\ldots,v_d)\trans
  \mapsto v_1+\dots+v_d$.

  \end{enumerate}

   } % end scope of renewcommand

  \end{defn}

  \begin{theo}\label{mdimertheo}  Let
  $\Gam=(\Gamma_1,\ldots,\Gamma_d)$-coloring, with $2d+1$ colors
  given by (\ref{encoding}) For $\u\in\R^{2d+1}$
  let $P_{\Gam}(\u)$ denote the pressure function.
  Then
  { % begin scope of \renewcommands
   \renewcommand{\theenumi}{\alph{enumi}}
   \renewcommand{\labelenumi}{(\theenumi)}
  \begin{enumerate}
  \item \label{average}
  $P_\Gam(\u) =
  P_\Gam(TQ\u)$.
  \item\label{pgamprp}

  $\partial P_{\Gam}(\R^{2d+1}) \subseteq T(\R^{d+1})$.

  \item\label{descpgam}  $\Pi_\Gam= T(\Pi_{d+1})$.
  Hence $\partial P_{\Gam}(\R^{2d+1}) \subseteq T(\Pi_{d+1})$.

  \item\label{conchgam} The function $h_\Gam: T\Pi_{d+1}\to \R_+$
  is a concave function.

%  \item\label{hgamconv} The function $h_\Gam(\cdot):\Pi_\Gam(\R^n)\to
%  \R_+$ is a concave function on every convex subset of $\Pi_\Gam(\R^n)$.

  \end{enumerate}
 } % end scope of renewcommand

  \end{theo}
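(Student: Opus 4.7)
My overall approach is to exploit the structure of the monomer-dimer encoding (\ref{encoding}): each dimer in direction $\e_k$ contributes exactly one color-$k$ site and one color-$(k+d)$ site, so $c_k(\phi)$ and $c_{k+d}(\phi)$ agree up to a boundary correction. For part (\ref{average}), the identity
\[\bc(\phi)\trans \u - \bc(\phi)\trans TQ\u = \tfrac{1}{2}\sum_{k=1}^d (c_k(\phi) - c_{k+d}(\phi))(u_k - u_{k+d})\]
combined with the bound $|c_k(\phi) - c_{k+d}(\phi)| \leq \vol(\m)/m_k$ (only sites on the faces $i_k=m_k$ or $i_k=1$ can be unpaired) shows that the two Hamiltonians differ by $O(\vol(\m)/\min_k m_k)$. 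Term-by-term comparison in (\ref{wmufor}) then yields $\frac{1}{\vol(\m)}|\log Z_\Gam(\m,\u) - \log Z_\Gam(\m, TQ\u)| \to 0$ as $\m\to\infty$, and (\ref{average}) follows.

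For part (\ref{pgamprp}), I would translate (\ref{average}) into a subgradient statement. Since $T$ is injective, $\ker(TQ) = \ker Q$, which is spanned by the vectors $\e_k - \e_{k+d}$ for $k=1,\ldots,d$. Part (\ref{average}) says $P_\Gam$ is constant along each affine fiber $\u + \ker Q$; hence for any $\p \in \partial P_\Gam(\u)$ and $\w \in \ker Q$, applying the subgradient inequality to both $\pm\w$ forces $\p\trans \w = 0$, so $\p \in (\ker Q)^\perp$. A short linear-algebra check identifies $(\ker Q)^\perp$ with $T(\R^{d+1})$, since both sides consist of vectors $\p$ satisfying $p_k = p_{k+d}$ for $k\leq d$.

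For part (\ref{descpgam}), the forward inclusion $\Pi_\Gam \subseteq T(\Pi_{d+1})$ reuses the same boundary estimate: any limit $\p$ of $\bc_q/\vol(\m_q)$ automatically satisfies $p_k = p_{k+d}$ for $k\leq d$, so $\p = T\q$ with $\q := (2p_1,\ldots,2p_d,p_{2d+1})$, which lies in $\Pi_{d+1}$ since $\p \in \Pi_{2d+1}$. For the reverse inclusion, I invoke Example \ref{examfr}(b) to say $\Gam$ is friendly, so by Theorem \ref{frnthm}(\ref{1frnd}), $\Pi_\Gam$ is convex; it then suffices to realize the extreme points $T\e_i$ of $T(\Pi_{d+1})$ as density points. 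For $k\leq d$, tile boxes $\an{\m_q}$ with $m_{k,q}$ even entirely by $\e_k$-dimers, yielding $\bc_q/\vol(\m_q) = T\e_k$; for $i=d+1$, tile entirely by monomers, yielding $\e_{2d+1} = T\e_{d+1}$. Finally, the last inclusion follows from Theorem \ref{maxchar}(\ref{one}), which gives $\partial P_\Gam(\R^{2d+1}) \subseteq \conv \Pi_\Gam = \Pi_\Gam = T(\Pi_{d+1})$.

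Part (\ref{conchgam}) is then immediate from (\ref{descpgam}) and Theorem \ref{frnthm}(\ref{2frnd}). The main technical work is confined to the boundary-counting argument underlying (\ref{average}) and the concrete construction of extreme-point tilings in (\ref{descpgam}); everything else is a clean application of the convex-analysis framework already developed in Section~\ref{densityentropy}, together with the friendliness of the monomer-dimer digraph noted in Example~\ref{examfr}(b).
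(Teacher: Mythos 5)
Your proposal is correct, and for parts (a) and (d) it is essentially the paper's argument (the paper merely asserts (a) from the "two halves of a dimer" observation; your boundary count $|c_k(\phi)-c_{k+d}(\phi)|\le \vol(\m)/m_k$ and the resulting $o(\vol(\m))$ discrepancy of the Hamiltonians is exactly the justification that is left implicit). Where you diverge is in (b) and in the reverse inclusion of (c). For (b) the paper argues pointwise: at $\u\in\Der P_\Gam$ it applies the chain rule to the identity $P_\Gam(\u)=P_\Gam(TQ\u)$ to get $p_i=p_{i+d}$, and at non-differentiable points it passes to limits via $\partial P_\Gam(\u)=\conv S(\u)$; your argument instead notes that (a) makes $P_\Gam$ constant on each fiber $\u+\ker Q$ and applies the subgradient inequality to $\pm\w$, $\w\in\ker Q$, giving $\partial P_\Gam(\u)\subseteq(\ker Q)^\perp=T(\R^{d+1})$ uniformly in $\u$ --- a cleaner route that avoids the case split and the appeal to Theorem \ref{maxchar}. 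For (c), the paper proves $T(\Pi_{d+1})\subseteq\Pi_\Gam$ directly: for every rational positive density vector it partitions a large cube into $d+1$ slabs along $\e_d$ with thicknesses proportional to the prescribed densities, tiles slab $j$ by $\e_j$-dimers and the last slab by monomers, and then uses closedness of $\Pi_\Gam$; it does not need convexity of $\Pi_\Gam$ at this point. You instead realize only the $d+1$ extreme points $T\e_k$ (all-dimer and all-monomer tilings) and import convexity of $\Pi_\Gam$ from Theorem \ref{frnthm}(a) via the friendliness of $\Gam$ (Example \ref{examfr}(b)). Both are legitimate within the paper's framework --- the paper itself invokes friendliness and Theorem \ref{frnthm}(b) for part (d), so you add no new hypotheses --- but the trade-off is that your version of (c) is shorter while leaning on the heavier machinery of Theorem \ref{frnthm}, whereas the paper's slab construction is self-contained (and, as a by-product, exhibits explicit configurations for every density in $T(\Pi_{d+1})$, not just the vertices). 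Your final inclusion $\partial P_\Gam(\R^{2d+1})\subseteq\conv\Pi_\Gam=T(\Pi_{d+1})$ via Theorem \ref{maxchar}(f) is also fine and replaces the paper's second use of the $\conv S(\u)$ argument.
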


 \begin{Proof}
 { % begin scope of \renewcommands
 \renewcommand{\theenumi}{\alph{enumi}}
 \renewcommand{\labelenumi}{(\theenumi)}
  \begin{enumerate}
  \item  Since the colors $i$ and $i+d$ describe
  the two halves of a dimer in the direction $\e_i$ for
  $i=1,\ldots,d$, we have the identity $P_\Gam(\u) =
  P_\Gam(TQ\u)$.

  \item Let $\p=(p_1,\ldots,p_{2d+1}) \in \partial(P_\Gam(\u))$. In case $\u \in \Der
  P_\Gam$, the Chain Rule applied to the identity in~(\ref{average}) yields the equalities
  $p_i=p_{d+i}, i=1,\ldots,d$.
  In case $\u \in \partial(P_\Gam(\u)) \setminus \Der
  P_\Gam$, this follows from the fact that $\partial P_\Gam(\u) =
  \conv S(\u)$ as in the beginning of
  Section~\ref{densityentropy}.

  \item Since the color $i$
  appears with color $i+d$, it follows that $p_i=p_{i+d}$ for
  $i=1,\ldots,d$.  Hence $\Pi_\Gam\subseteq T\Pi_{d+1}$.
  It is left to show that any

  \noindent
  $\p=(p_1,\ldots,p_d,p_1,\ldots,p_d,p_{2d+1})\in \Pi_{2d+1}$
  is in $\Pi_\Gam$.  Equivalently, the probability vector
  $\br:=(2p_1,\ldots,2p_d,p_{d+1})$ is the density vector
  of the dimer-monomer covering of $\Z^d$.  For $d=1$, this result is
  straightforward, e.g. \cite{FP}.  So assume that $d>1$.
  Suppose first that all the coordinates of $\br$ are rational and positive:
  $\br=(\frac{i_1}{m},\ldots,\frac{i_d}{m},\frac{i_{d+1}}{m})$,
  where $m$ is a positive integer.  Consider the sequence
  $\m_q=(2qm,2qm,\ldots,2qm)\in \N^d, q\in\N$.  Partition the cube
  $\an{\m_q}$ to $d+1$ boxes with a basis
  $\an{\m_q'}, (2qm,\ldots,2qm) \in \N^{d-1}$:
  $\an{(\m_q',2qi_j)}, j=1,\ldots,d+1$.  Tile the boxes
  $\an{(\m_q',2qi_j)}$ with the dimers in the direction $\e_j$ for
  $j=1,\ldots,d$, and the last box $\an{(\m_q',2qi_{d+1})}$ with
  monomers.  Then $\p=T\br\in \Pi_\Gam$.  Since $\Pi_\Gam$ is closed
  we deduce that $\Pi_\Gam \supseteq T\Pi_{d+1}$.  Hence
   $\Pi_\Gam= T\Pi_{d+1}$.

  In case $\u \in \Der
  P_\Gam$, then $ \nabla P_\Gam(\u)\in \Pi_\Gam= T\Pi_{d+1}$.
  In case $\u \in \partial(P_\Gam(\u)) \setminus \Der
  P_\Gam$, clearly $S(\u) \subset T\Pi_{d+1}$.
  Hence
  $\partial P_\Gam(\u) =
  \conv S(\u)\subset T\Pi_{d+1}$.  Therefore
  $\partial P_{\Gam}(\R^{2d+1}) \subseteq T(\Pi_{d+1})$.

  \item
  %Corrected on July 4, 2009
  According to part (b) of Example \ref{examfr},
  the graph $\Gam$, corresponding to the monomer-dimer model, is friendly, (as explained in \cite[\S4]{FP}).
  Part (\ref{2frnd}) of
  Theorem \ref{frnthm} yields that $h_\Gam$ is concave $\Pi_\Gam=T\Pi_d$.
  {\hspace*{\fill} $\blacksquare$}

   \end{enumerate}
 } % end scope of renewcommand

 \end{Proof}

 Define
  \[R_d(\w):=P_{\Gam}(T_1(\w)).\]
  In analogy with Proposition~\ref{conthgu} and Proposition~\ref{heid},
  $R_d(\w):\R^{d+1}\to \R$ is convex Lipschitz function which
  satisfies the conditions
  \begin{equation}\label{barpdprop1}
   |R_d(\w+\z) - R_d(\w)| \leq \|\z\|_{\max} \qquad  \w,\z \in \R^{d+1},
  \end{equation}
  \begin{equation}\label{barpdprop2}
   R_d(\w) = t + R_d(\w - t\e) \qquad \w \in \R^{d+1}, t\in\R.
  \end{equation}

  We now derive the properties $R_d(\w)$ that are analogous
  to the properties
  of $P_{\Gam}(\u)$ discussed in Section~\ref{densityentropy}.
  First we view $R_d$ as the restriction of $P_{\Gam}$ to
  the $(d+1)$-dimensional subspace $T\R^{d+1}$.
  Observe that $Q\Pi_{2d+1}=QT\Pi_{d+1}=\Pi_{d+1}$.
  Note that the vector $\br=(r_1,\ldots,r_d,r_{d+1})\trans\in \Pi_{d+1}$ can be
  defined intrinsically, where $r_i$ the dimer density in the
  direction $\e_i$ for $i=1,\ldots,d$, and $r_{d+1}$ is the monomer
  density in the lattice $\Z^d$.
 Let

  \[H_d(\br):= h_{\Gam}(T\br), \textrm{ for any } \br\in \Pi_{d+1}.\]
  We view $ H_d(\br)$ as the
  anisotropic dimer-monomer entropy of density $\br$.

  It is
  straightforward to show that $R_d$ satisfies an analogous theorem
  to Theorem \ref{maxchar}. In particular, $R_d(\w) = \max_{\br \in
  \Pi_{d+1}}(\br\trans \w + H_d(\br))$.
  For $\w \in \R^{d+1}$, we denote
  \begin{equation}\label{defpubis}
   \Pi_{d+1}(\w) := \arg \max_{\br \in \Pi_{d+1}} (\br\trans \w + H_d(\br))
   = \{\br \in \Pi_{d+1} : R_d(\w) =  \br\trans \w + H_d(\br)\}.
  \end{equation}
  Because of the equality~(\ref{barpdprop2}), we can use the analogous
  results to Theorems \ref{denthm} and \ref{maxcharhat}. More
  precisely, for $\v \in \R^{d}$, let $P_d(\v)$ be defined as in
  (\ref{hhat}), i.e.,
  \begin{equation}\label{hdv}
  P_d(\v)=R_d(\iota_0(\v))=P_\Gam((\u_0)), \; \u_0\trans = (\v\trans,\v\trans,0),
  \v\trans = (v_1,\ldots,v_d) \in \R^d.
  \end{equation}
 In other words, the two halves of a dimer in the direction of
 $\e_k$ are given the positive weight $x_k=e^{v_k}$ each, and a monomer is
 given the weight $1=e^0$. Then $Z_{\perio}(\m,\v) := Z_{\Gam,\perio}(\m,\u_0)$,
 is the \emph{grand
 partition monomer-dimer (counting) function} in which we sum over
 all monomer-dimer tilings of the torus $T(\m)$, and each tiling
 having exactly $\mu_i$ dimers in the direction $\e_i$ for
 $i=1,\ldots,d$ plus monomers contributes $\prod_{1=1}^d
 e^{2\mu_iv_i}$. As in \cite{FP}, the function $Z(\m,\v) :=
 Z_\Gam(\m,\u_0)$ does not exactly count the weighted monomer-dimer
 covers of $\an{\m}$, because protruding dimers have only half
 of their weight counted.  This can be easily taken care of as in
 \cite{FP}, and the pressure $P_d(\v)$
 is a convex function of $\v \in \R^d$.
 \begin{lemma}\label{invhu}
  Let $\v\trans = (v_1,\ldots,v_d) \in \R^d$ and let $\sigma :
  \an{d} \to \an{d}$ be a permutation.  Then
  $P_d((v_1,\ldots,v_d)\trans) =
  P_d((v_{\sigma(1)},\ldots,v_{\sigma(d)})\trans)$; in other words,
  $P_d(\v)$ is a symmetric function of $v_1,\ldots,v_d$. Similarly
  for $Z(\m,\v)$
 \end{lemma}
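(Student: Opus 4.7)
The plan is to exploit the geometric symmetry of the lattice $\Z^d$ under coordinate permutations. Given a permutation $\sigma$ of $\an{d}$, let $\psi_\sigma : \Z^d \to \Z^d$ be the linear map $\psi_\sigma(x_1,\ldots,x_d) = (x_{\sigma^{-1}(1)},\ldots,x_{\sigma^{-1}(d)})$. A direct check shows that $\psi_\sigma(\e_k) = \e_{\sigma(k)}$, so $\psi_\sigma$ sends a pair of adjacent sites $\{\i,\i+\e_k\}$ (a potential dimer in direction $\e_k$) to the pair $\{\psi_\sigma(\i),\psi_\sigma(\i)+\e_{\sigma(k)}\}$ (a potential dimer in direction $\e_{\sigma(k)}$). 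Consequently $\psi_\sigma$ induces a bijection between monomer-dimer tilings (or covers) of $\an{\m}$ and those of $\psi_\sigma(\an{\m}) = \an{\m^\sigma}$, where $m^\sigma_j := m_{\sigma^{-1}(j)}$; monomers are preserved, and a dimer of the tiling $T$ oriented in direction $\e_k$ corresponds to a dimer of the image tiling $T^\sigma$ oriented in direction $\e_{\sigma(k)}$.

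Next I would translate this bijection into a statement about the partition function. Denoting by $\mu_k(T)$ the number of dimers of $T$ in direction $\e_k$, the above yields $\mu_j(T^\sigma) = \mu_{\sigma^{-1}(j)}(T)$, while the number of monomers is unchanged. The weight of $T^\sigma$ with parameter $\v=(v_1,\ldots,v_d)$ is therefore
\[
\exp\!\Bigl(2\sum_j v_j\,\mu_j(T^\sigma)\Bigr) = \exp\!\Bigl(2\sum_k v_{\sigma(k)}\,\mu_k(T)\Bigr),
\]
i.e.\ the weight of $T$ with parameter $(v_{\sigma(1)},\ldots,v_{\sigma(d)})$. Summing the bijection over all tilings gives
\[
Z(\m^\sigma,\v) \;=\; Z\!\bigl(\m,(v_{\sigma(1)},\ldots,v_{\sigma(d)})\bigr),
\]
which is the claimed symmetry for $Z$. (The same identity holds for the NNSOFT encoding $Z_\Gam(\m,\u_0)$ because composing $\psi_\sigma$ with the color relabeling $\pi_\sigma$ that swaps the dimer-half colors $k,k+d \leftrightarrow \sigma(k),\sigma(k)+d$ and fixes $2d+1$ sends $\Gam$-colorings to $\Gam$-colorings bijectively, and transforms the color-count vector in exactly the way dictated by $\u_0 = T_1\iota_0(\v)$.)

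Finally, since $\vol(\m^\sigma) = \vol(\m)$ and $\m \to \infty$ iff $\m^\sigma \to \infty$, dividing the identity above by $\vol(\m)$, taking logarithms, and invoking the limit definition of $P_d(\v) = P_\Gam(T_1 \iota_0(\v))$ through \eqref{pres} yields
\[
P_d(\v) \;=\; P_d\!\bigl((v_{\sigma(1)},\ldots,v_{\sigma(d)})\bigr),
\]
as required. The only real bookkeeping obstacle is the index-chasing that distinguishes $\sigma$ from $\sigma^{-1}$ and keeps track of which variable permutation corresponds to which geometric permutation; since $\sigma$ ranges over the full symmetric group, this distinction is cosmetic in the final statement. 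Boundary corrections (protruding dimers) introduce no difficulty because the bijection is defined on the entire box structure including its boundary and thus transforms the two sides of the identity identically.
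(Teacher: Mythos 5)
Your proposal is correct and follows essentially the same route as the paper: the paper's proof likewise invokes the coordinate-permutation automorphism $(m_1,\ldots,m_d)\mapsto(m_{\sigma(1)},\ldots,m_{\sigma(d)})$ of $\N^d$ to obtain the identity $Z(\m,\v)=Z(\m^\sigma,\v^\sigma)$ and then passes to the limit defining $P_d$. You merely spell out the underlying bijection on tilings, the weight bookkeeping, and the color relabeling in the NNSOFT encoding, which the paper leaves implicit.
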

 \begin{Proof}  By applying the automorphism of $\N^d$ given by
 \[(m_1,\ldots,m_d) \mapsto (m_{\sigma(1)},\ldots,m_{\sigma(d)})\]
 we obtain the equality
 \begin{equation}\label{invwu}
  Z((m_1,\ldots,m_d),(v_1,\ldots,v_d)\trans) = Z((m_{\sigma(1)},\ldots,m_{\sigma(d)}),
  (v_{\sigma(1)},\ldots,v_{\sigma(d)})\trans),
 \end{equation}
 and the result follows from (\ref{wmufor}).
 \end{Proof}
  Then each for each $\br \in \Pi_{d+1}(\iota_0(\v))$ we have
  $\overline{\br} \in \partial P_d(\v)$. We define $\Delta_d := \overline{\Pi_{d+1}}$
  to be the projection of $\Pi_{d+1}$ on the first $d$ coordinates.  Let
  \begin{equation}\label{defbarhd}
  h_d(\bar \br)=H_d(\br), \quad \br\in\Pi_{d+1}.
  \end{equation}

  We can repeat the proof Theorem \ref{maxcharhat} to obtain:

  \begin{theo}\label{maxcharbpd}
   Let $P_d^*$ be
   the conjugate convex function
   of the pressure function $P_d$.
   Then
   { % begin scope of \renewcommands
   \renewcommand{\theenumi}{\alph{enumi}}
   \renewcommand{\labelenumi}{(\theenumi)}
   \begin{enumerate}
     \item \label{twoiotaiota} $\bar h_d(\q) \leq - P^*_d(\q)$ for all $\q \in
     \Delta_d$.
     \item \label{fouriotaiota}
      \begin{equation}\label{maxchar1pd}
      P_d(\v) = \max_{\q \in \Delta_{d}}
      ({\q}\trans\v + \bar h_d(\q)) \text{ for all } \v\ \in
      \R^{d}.
      \end{equation}
      For $\v \in \R^{d}$, we denote
      \[\Delta_d(\v) := {\arg \max_{\q \in \Delta_{d}} (
      {\q}\trans \v + \bar h_{d}(\q))},\]
      that is to say $\q(\v) \in \Delta_d(\v)$ if and only if
      \begin{equation}\label{defpd}
      \q(\v) \in \Delta_d \;\;\text{ and }\;\; P_d(\v) =
      {\q(\v)}\trans \v + \bar h_d(\q(\v)).
      \end{equation}
     \item \label{sixiotaiota} $\bar h_d(\q(\v))= -P^*_d(\q(\v))$.
     \item \label{fiveiotaiota} $\Delta_d(\v) \subseteq \partial P_{d}(\v)$. In
     particular, if $\v \in \Der P_d$, then $\Delta_d(\v) = \{\nabla
     P_d(\v)\}$. Therefore $\partial P_d(\Der P_d) \subseteq
     \Delta_d$.
%     \item \label{threeiotaiota} For each $\q \in
%     \Delta_d \cap \partial P_d(\R^{d})$ there exists $\v \in \R^{d}$
%     such that $\q = \q(\v)$, and hence $H_d(\iota(\q)) =
%     -P^*_d(\q)$.
     \item \label{threebisiotaiota}
     Let $\v \in \R^{d} \setminus \Der P_d$, and let $S(\v)$
     consist of all the limits of sequences $\nabla P_d(\v_i)$
     such that $\v_i \in \Der P_d$ and $\v_i \to \v$. Then $S(\v)
     \subseteq \Delta_d(\v)$.
     \item \label{oneiotaiota} $\conv \Delta_d = \dom P^*_d$.
   \end{enumerate}
   } % end scope of renewcommand
  \end{theo}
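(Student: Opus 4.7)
\medskip

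\noindent\textbf{Proof plan.} The plan is to mirror the proof of Theorem~\ref{maxcharhat} almost verbatim, using the identification $P_d(\v) = P_\Gam(\u_0)$ with $\u_0 = (\v,\v,0)\trans$ from (\ref{hdv}), together with the structural results of Theorem~\ref{mdimertheo}. The central observation is that for $\p = T\br$ with $\br \in \Pi_{d+1}$, a direct calculation gives $\p\trans\u_0 = (T\br)\trans(\v,\v,0)\trans = \bar\br\trans \v = \q\trans \v$, where $\q = \bar\br \in \Delta_d$; and by definition $h_\Gam(T\br) = H_d(\br) = \bar h_d(\q)$. Thus the variational principle for $P_\Gam$ restricts to a variational principle for $P_d$.

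First I would establish (b). Starting from $P_\Gam(\u) = \max_{\p \in \Pi_\Gam}(\p\trans \u + h_\Gam(\p))$ in Theorem~\ref{maxchar}(b), set $\u = \u_0$ and use $\Pi_\Gam = T(\Pi_{d+1})$ from Theorem~\ref{mdimertheo}(c) to reindex the max over $\br \in \Pi_{d+1}$, i.e.\ over $\q = \bar\br \in \Delta_d$. This gives (\ref{maxchar1pd}), and the existence of a maximizer $\q(\v) \in \Delta_d(\v)$ is inherited from the existence of a maximizer $\p \in \Pi_\Gam$. Item (a) then follows by rearranging (b) as $-\bar h_d(\q) \geq \q\trans \v - P_d(\v)$ for every $\v$ and taking the supremum to obtain $-\bar h_d(\q) \geq P_d^*(\q)$. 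Item (c) combines (a) with the reverse inequality $-\bar h_d(\q(\v)) = \q(\v)\trans \v - P_d(\v) \leq P_d^*(\q(\v))$ obtained by unfolding (\ref{defpd}). For item (d), substitute $\v + \z$ into (\ref{maxchar1pd}) and use the same maximizer $\q(\v) \in \Delta_d$ to get $P_d(\v + \z) \geq \q(\v)\trans(\v + \z) + \bar h_d(\q(\v)) = \q(\v)\trans \z + P_d(\v)$, which is the subgradient inequality; when $P_d$ is differentiable at $\v$, the singleton property of $\partial P_d(\v)$ forces $\Delta_d(\v) = \{\nabla P_d(\v)\}$.

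For item (e), I would use that $\bar h_d$ is upper semi-continuous on $\Delta_d$ (inherited from the upper semi-continuity of $h_\Gam$ on $\Pi_\Gam$ via the continuous map $\q \mapsto T\iota(\q)$) and that $P_d$ is continuous. Given $\v_i \in \Der P_d$ with $\v_i \to \v$ and $\nabla P_d(\v_i) \to \q$, part (d) at the differentiable points yields $P_d(\v_i) = \nabla P_d(\v_i)\trans \v_i + \bar h_d(\nabla P_d(\v_i))$; pass to the limit and combine with the $\leq$ half of (\ref{maxchar1pd}) to conclude $P_d(\v) = \q\trans \v + \bar h_d(\q)$, i.e.\ $\q \in \Delta_d(\v)$. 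Closedness of $\Delta_d$ (as the projection of the compact set $\Pi_{d+1}$) is used here to ensure $\q \in \Delta_d$ in the first place. Finally, for (f), combine (e) and the convex hull representation $\partial P_d(\v) = \conv S(\v)$ to get $\partial P_d(\R^d) \subseteq \conv \Delta_d = \Delta_d$ (the set $\Delta_d = \{\q \in \R_+^d : q_1 + \cdots + q_d \leq 1\}$ is already convex), while (a) gives $\Delta_d \subseteq \dom P_d^*$; then Lemma~\ref{ransubf} yields $\ri(\dom P_d^*) \subseteq \partial P_d(\R^d) \subseteq \Delta_d \subseteq \dom P_d^*$, and taking closures pins down $\dom P_d^* = \Delta_d = \conv \Delta_d$.

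The main obstacle is bookkeeping in item (e): one must check that the limit $\q$ of gradients lies in the same codimension--$1$ face of the probability simplex that $\Delta_d$ parametrizes, which rests on the ``no wasted degrees of freedom'' conclusion $\partial P_\Gam(\R^{2d+1}) \subseteq T(\Pi_{d+1})$ from Theorem~\ref{mdimertheo}(b). Apart from this, the argument is essentially a transcription of Theorem~\ref{maxcharhat} with the substitutions $\widehat{P}_\Gam \leftrightarrow P_d$, $\Pi_\Gam \leftrightarrow T(\Pi_{d+1})$, $\overline{\Pi_\Gam} \leftrightarrow \Delta_d$, and $h_\Gam \leftrightarrow \bar h_d$.
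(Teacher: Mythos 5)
Your proposal is correct and is essentially the paper's own argument: the paper gives no written proof, stating only that one repeats the proof of Theorem~\ref{maxcharhat} for the restriction of $P_\Gam$ to the monomer--dimer subspace, which is exactly what you carry out via the identification $P_d(\v)=P_\Gam(\u_0)$, $\Pi_\Gam=T(\Pi_{d+1})$, and $(T\br)\trans\u_0=\overline{\br}\trans\v$, with the standard convex-analysis steps (Lemma~\ref{ransubf}, $\partial P_d(\v)=\conv S(\v)$, upper semi-continuity of $\bar h_d$). The only cosmetic difference is that you collapse the paper's intermediate step through $R_d$ and argue part (e) directly as in Theorem~\ref{maxchar}(e), which is harmless.
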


 Thus, the first order phase transition occurs at the
 points $\v$ where $P_d$ is not differentiable.

  As in \cite{Ha1,Bax1,HL} we consider the total dimer density $q:=q_1 +
  \cdots + q_d$.  This is equivalent to the equalities $v_1 = \cdots = v_d = v = \log
  s$, where $s>0$ is the weight of a half a dimer in any direction.
  We define
  $\pres_d(v):=P_d((v,\ldots,v)\trans)=P_d(v\e):\R\to \R$.
  Then $\pres_d$ is a nondecreasing convex Lipschitz
  function satisfies $|\pres_d(u)-\pres_d(v)|\le |u-v|$.

  \begin{prop}\label{totdimden}  For each $d\in\N$
  $Q_d(\Delta_d)=[0,1]$.  Let
  \begin{equation}\label{fdefHTp}
  \hat h_d(p):=\max_{\q\in\Delta_d, Q_d\q=p}
  \bar h_d(\q), \textrm{ for each }p\in[0,1].
  \end{equation}
  Then
  \begin{equation}\label{maxcharPT}
  \pres_d(v)=\max_{p\in [0,1]} pv + \hat h_d(p).
  \end{equation}
  Furthermore, $\hat h_d(p)$ is the $p$-dimer
  entropy as defined in \cite{Ha1} or \cite{FP}.

  \end{prop}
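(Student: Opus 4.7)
My plan is to derive Proposition \ref{totdimden} by restricting Theorem \ref{maxcharbpd} to the isotropic ray $\v = v\e$ and reorganizing the maximization over $\Delta_d$ by the level sets of $Q_d$, and then to match the resulting $\hat h_d$ with the classical $p$-dimer entropy by unpacking the definitions.

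First I would verify $Q_d(\Delta_d) = [0,1]$ and the well-definedness of $\hat h_d$. By Theorem \ref{mdimertheo}(\ref{descpgam}), $\Pi_\Gam = T(\Pi_{d+1})$ with $\Pi_{d+1}$ the full probability simplex in $\R^{d+1}$, so $\Delta_d = \overline{\Pi_{d+1}} = \{(q_1,\ldots,q_d) : q_i \geq 0,\ \sum_i q_i \leq 1\}$, whose image under $Q_d$ is $[0,1]$. For each $p \in [0,1]$ the slice $\{\q \in \Delta_d : Q_d\q = p\}$ is nonempty and compact; since $\bar h_d$ inherits upper semi-continuity from $h_\Gam$, the supremum in (\ref{fdefHTp}) is attained, so $\hat h_d : [0,1] \to \R_+$ is well defined.

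The variational formula (\ref{maxcharPT}) is then immediate from Theorem \ref{maxcharbpd}(\ref{fouriotaiota}). Substituting $\v = v\e$ and using $\q\trans(v\e) = v\,Q_d\q$,
\[\pres_d(v) = P_d(v\e) = \max_{\q \in \Delta_d}\bigl(v\,Q_d\q + \bar h_d(\q)\bigr),\]
and grouping the maximization by the value $p := Q_d\q \in [0,1]$ converts this into $\pres_d(v) = \max_{p \in [0,1]}(pv + \hat h_d(p))$.

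The remaining and main point is to identify $\hat h_d(p)$ with the $p$-dimer entropy of \cite{Ha1,FP}. Via the encoding (\ref{encoding}), a monomer-dimer tiling of $\an{\m}$ with $n_i$ dimers in direction $\e_i$ corresponds to a $\Gam$-coloring whose color-frequency vector is $T\br$ with $r_i = 2 n_i/\vol(\m)$, so the total site-density occupied by dimers equals $\sum_i r_i = Q_d\bar\br$. The $p$-dimer entropy is defined as the exponential growth rate of the maximum number of tilings of a sequence of boxes $\an{\m_q}\to\infty$ whose total dimer density tends to $p$. Given any such realizing sequence, compactness of $\Pi_{d+1}$ together with a Cantor diagonal argument extracts a subsequence along which the full anisotropic frequency vector converges to some $T\br$ with $Q_d\bar\br = p$; combined with the polynomial bound $\#\Pi_n(\vol(\m)) = O(\vol(\m)^{n-1})$ used in the proof of Theorem \ref{maxchar}(\ref{four}), this upgrades the scalar constraint on $Q_d\q$ to the vector-valued constraint defining $\bar h_d$, yielding $\hat h_d(p) = \max\{\bar h_d(\q) : \q \in \Delta_d,\ Q_d\q = p\}$ as the $p$-dimer entropy. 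This last step is essentially bookkeeping modulo the techniques already used in Theorem \ref{maxchar}; the only subtlety is matching the convention relating dimer counts to site-densities across the two definitions so that the parameter $p$ refers to the same quantity on both sides.
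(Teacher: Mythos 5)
Your proposal is correct and follows essentially the same route as the paper: the identity $\Delta_d=\overline{\Pi_{d+1}}$ (from Theorem \ref{mdimertheo}) gives $Q_d(\Delta_d)=[0,1]$, the formula (\ref{maxcharPT}) comes from substituting $\v=v\e$ in (\ref{maxchar1pd}) and maximizing in two stages over the level sets of $Q_d$, and the identification of $\hat h_d(p)$ with the $p$-dimer entropy uses exactly the paper's two ingredients, namely the trivial inclusion giving $\bar h_d(\q)\le h_d(Q_d\q)$ and the polynomial bound on the number of frequency vectors plus the subsequence-extraction argument from the proof of Theorem \ref{maxchar}(\ref{four}) for the reverse inequality. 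No gaps worth noting.
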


  \begin{Proof}  Let $p\in[0,1]$ be the limit density of dimers, abbreviated here
  as $p$-dimer density, as $\lan
  \m\ran\to\infty$ as discussed in \cite{FP}.
  We recall the definition of the  $p$-dimer density in terms of
  quantities defined in \S\ref{densityentropy}.  (See in particular
  Definition \ref{deflamb1}.)

  For each $\m\in \N^d$ and a nonnegative integer $a\in [0,\vol(\m)]$ define
  \begin{equation}\label{defCgama}
  C_\Gam(\an{\m},a):=\cup_{\bc=(c_1,\ldots,c_{2d+1})\in \Pi_{2d+1}(\vol(\m)),
  c_{2d+1}=a} C_\Gam(\an{\m},\bc).
  \end{equation}
  So $C_\Gam(\an{\m},a)$ is roughly equal to the set of all
  covering of the box $\an{\m}\subset \Z^d$ with monomer-dimers,
  such that the number of monomers is $a$.  (It may happen that some
  of the dimers protruding "out of" the box $\an{\m}$, see
  \cite{FP}.)  Then $p\in [0,1]$ is dimer density
  if there exists a sequence of boxes $\an{\m_q}\subset \N^d$ and a
  corresponding sequence of nonnegative integers $a_q\in
  [0,\vol(\m_q)]$, such that
  \begin{equation}\label{defplimden}
  \m_q\to\infty, \; C_\Gam(\an{\m_q},a_q)\ne \emptyset\; \forall q\in
  \N, \textrm{ and }
  \lim_{q\to\infty} \frac{a_q}{\vol(\m_q)}=1-p.
  \end{equation}
  From the definition
  of the density set $\Delta_{d}$ of the dimers it follows
  that $p$ is a dimer density if and only $p=Q_d\q$ for some
  $\q\in \Delta_{d}$.  Since $\Delta_d=\overline{\Pi_{d+1}}$,
  it follows that $Q_d(\Delta_d)=[0,1]$.

  For each $p\in [0,1]$ let
  \begin{equation}\label{defhdp}
  h_d(p):=\sup_{\m_q,a_q} \limsup_{q\to\infty}
  \frac{\log\#C_\Gam(\an{\m_q},a_q)}{\vol(\m_q)}\ge 0,
  \end{equation}
  where the supremum is taken over all the sequences satisfying
  (\ref{defplimden}).
  Then $h_d(p)$ is the $p$-dimer entropy as defined in \cite{FP}.
  Let $\hpres_d(p)$ be defined as in (\ref{fdefHTp}).
  We claim
  \begin{equation}\label{hd=HT}
  h_d(p)=\hat h_d(p) \textrm{ for all } p\in [0,1].
  \end{equation}

  Observe first that $C_\Gam(\an{\m},\bc)\subseteq
  C_\Gam(\an{\m},c_{2d+1})$ for any $\bc=(c_1,\ldots,c_{2d+1})\in
  \Pi_{2d+1}(\vol(\m))$.  The definition of $\bar h_d(\q)$ and $h_d(p)$
  implies straightforward the inequality $\bar h_d(\q)\le
  h_d(Q_d\q)$.  Hence $\hat h_d(p)\le h_d(p)$.
  (\ref{defCgama}) yields the inequality
  \begin{eqnarray*}
  &&\# C_\Gam(\an{\m},a) \leq \binom{\vol(\m) + 2d}{2d} \times \\
  &&\max_ {\bc=(c_1,\ldots,c_{2d+1})\in \Pi_{2d+1}(\vol(\m)),
  c_{2d+1}=a} \#C_\Gam(\an{\m},\bc).
  \end{eqnarray*}
  Use the arguments of the proof of part (b) Theorem \ref{maxchar}
  to deduce the existence of $\q\in \Delta_d$, such that $Q_d\q=p$
  and $h_d(p)\le \bar h_d(\q)$.  Hence $h_d(p)\le \hat h_d(p)$ and
  therefore $h_d(p)=\hat h_d(p)$.

  To show (\ref{maxcharPT}), take $\v = v\e$ in (\ref{maxchar1pd}).
  We have ${\q}\trans \v = pv$, where $p = {\q}\trans
  \e$. In (\ref{maxchar1pd}) take the maximum in two stages. The
  first stage is for fixed $p$, and the second stage over all $p$.
  \end{Proof}

%  In that case, by the Chain Rule, we
%  take the derivative with respect to $v$ (or $s$).

  The results of \cite{HL} yield that
  $\pres_{d}(v)$ is analytic.
  Since $\pres_{d}(v)$ is also convex and not affine it follows that $\pres_d'(v)$ can
  not be constant on any interval $(a,b)$.  Hence $p(v):=\pres_d'(v)$ is
  increasing on $\R$ with $p(-\infty)=0$ (no dimers) and
  $p(\infty)=1$ (only dimers).  Therefore the analytic function
  $p:\R\to (0,1)$ has an increasing analytic inverse $v(p):(0,1)\to \R$.
  Recall that $\pres_d^*(p)$ is a convex function of $p$.
  Moreover
  \[\frac{d \pres_d^*}{dp}=v(p) + p\frac{dv(p)}{dp} -
  \frac{d\pres_d}{dv}\frac{dv}{dp}=v(p)+p\frac{dv(p)}{dp}-p\frac{dv(p)}{dp}=v(p).\]
  As $v(p)$ is an increasing function of $p$ it follows that
  $\pres_d^*(p)$ is a strictly convex function on $(0,1)$.
  The corresponding dimer density entropy $\pres_d^*(p)=-
  h_d(p)$ is a strictly concave function.  This is an improvement
  of the result of Hammersley \cite{Ha1} which showed that
  $h_d(p)$ is a concave function on $(0,1)$.
  \cite[Corollary 3.2]{FKLM} claims a stronger result, namely
  $h_d(p)+\frac{1}{2}(p\log p +(1-p)\log(1-p))$ is a concave
  function on $[0,1]$.  (Observe that $p\log p +(1-p)\log(1-p)$
  is a strict convex function on $[0,1]$.)

  Since $\pres_d$ is differentiable it follows
  that $\pres_d^*(p)=pv(p)-\pres_d(v(p))$.  Hence we obtain the well
  known formula, e.g. \cite{Bax1}
  \begin{equation}\label{stadforhdp}
  h_d(p(v))=\pres_d(v)-p(v)v,  \textrm{ where }p(v)=\pres_d'(v) \textrm{ for all
  } v\in \R.
  \end{equation}

  Note that
  $h_d(0):=\lim_{p\searrow 0} h_d(p)=0$ and
  $h_d(1):=\lim_{p\nearrow 1} h_d(p)$ is the $d$-dimensional
  dimer-entropy.

 \section{Symmetric encoding of the monomer-dimer model}\label{sec:symencod}

 The disadvantage of the encoding (\ref{encoding}) is that the
 $\Gamma_k$ are not symmetric, so we cannot apply the results of
 Section \ref{sec:SymmetricNNSOFT} directly. However, as pointed
 out in \cite{FP}, there is a hidden symmetry, which enables us to
 obtain results analogous to those of Section
 \ref{sec:SymmetricNNSOFT}.  We now adapt the arguments of
 \cite[Section 6]{FP} to $P_d(\v)$, the pressure corresponding to the
 weighted monomer-dimer coverings.

 For $d \in \N$, $K \subseteq \an{d}$ and $\m \in \N^d$, we denote
 by $\an{\m_K}$ the projection of $\an{\m}$ on the coordinates with
 indices in $K$. Let $C_{\perio,K}(\m)$ be the set of monomer-dimer
 covers of $T(\m_K) \times \an{\m_{\an{d} \setminus K}}$, and
 $Z_{\perio,K}(\m,\v)$ the corresponding weighted sum. Thus
 $C_{\perio,\an{d}}(\m) = C_{\perio}(\m)$ and
 $Z_{\perio,\an{d}}(\m,\v) = Z_{\perio}(\m,\v)$. Note that by the
 isotropy of our $\Gam$, $\#C_{\perio,K}(\m)$ is invariant under
 permutations of the components of $\m$ if $K$ undergoes a
 corresponding change. Similarly for $Z_{\perio,K}(\m,\v)$, if $K$
 and $\v$ undergo a corresponding change.

 In order to analyze $C_{\perio,\{d\}}(\m)$, we focus on the
 dimers in the cover lying along the direction $\e_d$. More
 precisely, with $\m' = (m_1,\ldots,m_{d-1})$, we consider
 $\an{\m'} \times T(m_d)$ as consisting of $m_d$ levels isomorphic
 to $\an{\m'}$. A subset $S$ of the sites on level $q$ is covered
 by dimers joining levels $q-1$ and $q$ (with level $0$ understood
 as level $m_d$); a subset $T$ disjoint from $S$ is covered by
 dimers joining levels $q$ and $q+1$ (with level $m_d + 1$
 understood as level $1$); and the remainder $U$ of level $q$ is
 covered by monomers and dimers lying entirely within level $q$.
 We are interested in counting the coverings of $U$ subject to
 various restrictions. With that in mind, for $\m' \in \N^{d-1}$
 we define an undirected graph $G(\m')$ whose vertices are the
 subsets of $\an{\m'}$, in which subsets $S$ and $T$ are adjacent
 if and only if $S \cap T = \emptyset$. When $S \cap T =
 \emptyset$ we also define, using $U = \an{\m'} \setminus (S \cup
 T)$, and $\v' = (v_1,\ldots,v_{d-1})\trans$,
 \begin{align*}
   \tilde{a}_{ST}(\v') &= \text{ sum of weighted monomer-dimer tilings of } U
   \\
   \tilde{b}_{ST}(\v') &= \text{ sum of weighted monomer-dimer tilings of }
   U \text{ viewed as a subset of }
   \\ T(\m')
   \\
   \tilde{p}_{ST}(\v') &=
   \begin{aligned}[t]
   &\text{ sum of weighted monomer-dimer covers of } U, \text{ viewed as a subset of }
   \\
   &T(m_1) \times \an{(m_2,\ldots,m_{d-1})},
   \text{ each monomer within } U, \text{ and each}
   \\
   &\text{ dimer meeting } U \text{ but not } S \cup T.
   \end{aligned}
   \\
   \tilde{c}_{ST}(\v') &=
   \begin{aligned}[t]
   &\text{ sum of weighted monomer-dimer covers of } U, \text{ each monomer
   within } U,
   \\
   &\text{ and each dimer meeting } U \text{ but not } S \cup T.
   \end{aligned}
 \end{align*}
 In the tilings/covers counted by $\tilde{a}_{ST}(\v')$,
 $\tilde{b}_{ST}(\v')$, $\tilde{p}_{ST}(\v')$,
 $\tilde{c}_{ST}(\v')$, each monomer lies within $U$ and each dimer
 meets $U$ but not $S \cup T$. In $\tilde{a}_{ST}(\v')$, each dimer
 occupies two sites of $U$ that are adjacent in $\an{\m'}$. In
 $\tilde{b}_{ST}(\v')$, each dimer occupies two sites of $U$ that
 are adjacent in $T(\m')$, so is allowed to ``wrap around''. In
 $\tilde{p}_{ST}(\v')$, the dimers in the direction of $\e_1$ are
 allowed to ``wrap around'' and the other dimers are allowed to
 ``protrude out'' of $\an{(m_2,\ldots,m_{d-1})}$. In
 $\tilde{c}_{ST}(\v')$, the dimers may ``protrude'' out of
 $\an{\m'}$.  The weight of each monomer-dimer cover is a product of
 the weights of dimers and ``half'' dimers appearing in the cover.
 If a dimer in the direction of $\e_k$ is entirely within $U$, then
 its weight is $e^{2v_k}$.  If a dimer ``protrudes out'' in the
 direction of $\e_k$, then its weight is only $e^{v_k}$. Therefore
 \[\tilde{a}_{ST}(\v') \leq \tilde{b}_{ST}(\v')
 \leq \tilde{p}_{ST}(\v') \leq \tilde{c}_{ST}(\v').\] By
 definition, if $U = \emptyset$, then $\tilde{a}_{ST}(\v') =
 \tilde{b}_{ST}(\v') = \tilde{p}_{ST}(\v') = \tilde{c}_{ST}(\v') =
 1$. Notice that when $d=2$, there is no distinction between
 $\tilde{b}_{ST}(\v')$ and $\tilde{p}_{ST}(\v')$.

 We define matrices $A(\m',\v) = (a_{ST}(\v))_{S,T \subseteq
 \an{\m'}}$, $B(\m',\v) = (b_{ST}(\v))_{S,T \subseteq \an{\m'}}$,
 $P(\m',\v) = (p_{ST}(\v))_{S,T \subseteq \an{\m'}}$, $C(\m',\v) =
 (c_{ST}(\v))_{S,T \subseteq \an{\m'}}$ with rows and columns
 indexed by subsets of $\an{\m'}$ as follows:

 \begin{align*}
   &\twocases{A(\m',\v)_{ST}}{\tilde{a}_{ST}(\v')e^{(\#S+\#T)v_d}}
   {S \cap T = \emptyset}0{S \cap T \neq \emptyset}
   \\
   &\twocases{B(\m',\v)_{ST}}{\tilde{b}_{ST}(\v')e^{(\#S+\#T)v_d}}
   {S \cap T = \emptyset}0{S \cap T \neq \emptyset}
   \\
   &\twocases{P(\m',\v)_{ST}}{\tilde{p}_{ST}(\v')e^{(\#S+\#T)v_d}}{S \cap T =
   \emptyset}0{S \cap T \neq \emptyset}
   \\
   &\twocases{C(\m',\v)_{ST}}{\tilde{c}_{ST}(\v')e^{(\#S+\#T)v_d}}{S \cap T =
   \emptyset}0 {S \cap T \neq \emptyset.}
 \end{align*}
 Thus $A(\m',\v)$, $B(\m',\v)$, $P(\m',\v)$, $C(\m',\v)$ are symmetric matrices---here is
 the ``hidden symmetry'' referred to above.  Clearly
 \[0 \leq A(\m',\v) \leq B(\m',\v) \leq P(\m',\v) \leq C(\m',\v)\]
 (where the inequalities indicate componentwise comparisons). We
 use the notation $\alpha(\m',\v)$, $\beta(\m',\v)$, $\pi(\m',\v)$,
 $\gamma(\m',\v)$ for the spectral radii of these matrices,
 respectively, so that
 \[\alpha(\m',\v) \leq \beta(\m',\v) \leq \pi(\m',\v) \leq \gamma(\m',\v).\]
 Note that by Kingman's theorem \cite{Kin} all the
 spectral radii are log-convex in $\v$.

 The four matrices have the same zero-nonzero pattern, namely
 the adjacency matrix of the graph $G(\m')$. If the graph is
 connected, we say that the matrix is \emph{irreducible}; if in
 addition the greatest common divisor of the lengths of all its
 cycles is 1, equivalently for sufficiently high powers of the
 matrix all entries are strictly positive, we say that the matrix
 is \emph{primitive}.
 \begin{prop}\label{symprop}
 Let $2 \leq  d\in \N$ and $\m = (\m',m_d) \in \N^d$. Then

 \begin{enumerate}
  \item[(a)] $\tr A(\m',\v)^{m_d}$ is the sum of the weighted monomer-dimer
  tilings of $\an{\m'} \times T(m_d)$ ;
  \item[(b)] $\tr B(\m',\v)^{m_d} =
  Z_{\perio}(\m,\v)$;
  \item[(c)] $\tr P(\m',\v)^{m_d} =
  Z_{\perio,\{1,d\}}(\m,\v)$;
  \item [(d)] $\tr C(\m',\v)^{m_d} =
  Z_{\perio,\{d\}}(\m,\v)$ ;
  \item[(e)] for $m_d \geq 2$, if column vector $\x(\v) = (x_S(\v))_{S \subseteq \an{\m'}}$
  is given by $x_S(\v) = \tilde{b}_{S\emptyset}(\v')e^{\#S v_d}$, then
  $\x(\v)^{\trans} B(\m',\v)^{m_d - 2} \x(\v) = Z_{\perio,\an{d-1}}(\m,\v)$,

  if column vector $\y(\v)= (y_S(\v))_{S \subseteq \an{\m'}}$ is given by $y_S(\v) =
  \tilde{c}_{S\emptyset}(\v')e^{\#S v_d}$,

  \noindent
  then
  $\y(\v)^{\trans} C(\m',\v)^{m_d - 2} \y(\v) = Z(\m,\v)$,

  and if column vector $\z(\v) = (z_S(\v))_{S \subseteq \an{\m'}}$
  is given by $z_S(\v) = \tilde{p}_{S\emptyset}(\v')e^{\#S v_d}$, then
  $\z(\v)^{\trans} P(\m',\v)^{m_d - 2} \z(\v) =
  Z_{\perio,\{1\}}(\m,\v)$;
  \item[(f)] the matrices $A(\m',\v)$, $B(\m',\v)$, $P(\m',\v)$, $C(\m',\v)$ are
  primitive.
 \end{enumerate}
 \end{prop}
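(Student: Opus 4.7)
The plan is to prove Proposition~\ref{symprop} by recognising $A$, $B$, $P$, $C$ as transfer matrices in the direction $\e_d$. I would slice the covering domain into $m_d$ levels, each a copy of (a variant of) $\an{\m'}$. On level $q$ a subset $S_q$ holds the top halves of direction-$\e_d$ dimers coming from level $q-1$, a subset $T_q$ holds the bottom halves of direction-$\e_d$ dimers going to level $q+1$, and $U_q := \an{\m'} \setminus (S_q \cup T_q)$ is covered by monomers and by dimers in directions $\e_1, \ldots, \e_{d-1}$. The matrix entry $M_{ST}$ is precisely the weighted count of such configurations at a single level with interface subsets $S$ and $T$, multiplied by the factor $e^{(\#S+\#T)v_d}$ giving the half weight $e^{v_d}$ to every direction-$\e_d$ half-dimer at that level. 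Adjacent factors $M_{S_q T_q} M_{S_{q+1} T_{q+1}}$ with $T_q = S_{q+1}$ therefore assign the full weight $e^{2v_d}$ to each direction-$\e_d$ dimer between the two levels, matching the definition of $Z(\m,\v)$.

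For parts (a)--(d) I would apply the trace identity
\[
\tr M^{m_d} = \sum_{S_1,\ldots,S_{m_d}} \prod_{q=1}^{m_d} M_{S_q S_{q+1}}, \qquad S_{m_d+1}:=S_1,
\]
and recognise each nonzero term (for which all $S_q \cap S_{q+1}=\emptyset$) as a monomer--dimer covering of the stack with periodic boundary in direction $\e_d$. The four matrices differ only in the horizontal constraint encoded by $\tilde a,\tilde b,\tilde p,\tilde c$: $A$ produces tilings of $\an{\m'} \times T(m_d)$ (yielding (a)); $B$ produces tilings of the full torus $T(\m)$, i.e.\ $Z_{\perio}(\m,\v)$ (yielding (b)); $P$ produces covers of $T(m_1) \times \an{(m_2,\ldots,m_{d-1})} \times T(m_d)$, i.e.\ $Z_{\perio,\{1,d\}}(\m,\v)$ (yielding (c)); and $C$ produces covers of $\an{\m'}\times T(m_d)$ with horizontal protrusions, i.e.\ $Z_{\perio,\{d\}}(\m,\v)$ (yielding (d)).

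For part (e) the stack is open in direction $\e_d$, so there are only $m_d-1$ interfaces and I would expand
\[
\x\trans M^{m_d-2} \x = \sum_{S_1,\ldots,S_{m_d-1}} x_{S_1} \prod_{q=1}^{m_d-2} M_{S_q S_{q+1}} \cdot x_{S_{m_d-1}},
\]
with the boundary vectors encoding levels $1$ and $m_d$, which receive no direction-$\e_d$ dimers from outside the stack. With $x_S = \tilde{b}_{S\emptyset}(\v')e^{\#Sv_d}$ the factor $x_{S_1}$ contributes the correct weight of level $1$ (no incoming direction-$\e_d$ halves, outgoing set $S_1$), the interior matrix factors contribute levels $2,\ldots,m_d-1$, and by the symmetry $\tilde{b}_{S\emptyset}=\tilde{b}_{\emptyset S}$ the factor $x_{S_{m_d-1}}$ contributes level $m_d$; the total is $Z_{\perio,\an{d-1}}(\m,\v)$. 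The analogous arguments using $\tilde{c}$ and $\tilde{p}$ with the matrices $C$ and $P$ give the $Z(\m,\v)$ and $Z_{\perio,\{1\}}(\m,\v)$ identities.

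For part (f), observe that $\tilde{m}_{ST}(\v')\ge 1$ for every disjoint pair $S\cap T=\emptyset$ (the all-monomer covering of $U$ is always admissible), so $M_{ST}>0$ exactly when $S\cap T=\emptyset$, realising $G(\m')$ as the zero-pattern graph of all four matrices. In particular $M_{\emptyset T}>0$ for every $T$, so $\emptyset$ is adjacent to every vertex and $G(\m')$ is connected; the self-loop $M_{\emptyset\emptyset}>0$ then forces the gcd of cycle lengths in $G(\m')$ to equal $1$, so each of $A$, $B$, $P$, $C$ is primitive. The main obstacle is the bookkeeping in (e): verifying that the boundary vectors correctly replace the missing neighbors at the two extreme levels and that the half-dimer weight factors telescope without over- or under-counting. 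Once this ledger is fixed, the remaining identities are immediate from the transfer-matrix expansion.
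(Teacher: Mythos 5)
Your proposal is correct and follows essentially the same transfer-matrix expansion as the paper: both identify each nonzero term of $\tr M^{m_d}$ (or $\x\trans M^{m_d-2}\x$) with a slicing of a monomer–dimer configuration into $m_d$ levels indexed by the interface subsets, and both establish primitivity by observing that $\emptyset$ is adjacent to every vertex of $G(\m')$ and carries a loop. The only cosmetic difference is in part (f), where the paper proves primitivity for $A$ and then invokes $A \leq B \leq P \leq C$, whereas you argue directly that all four share the positivity pattern of $G(\m')$ because the all-monomer tiling of $U$ is always admissible.
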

 \begin{Proof} We begin with proving (b), observing that (a), (c), (d) and
 (e) are similar. Assume first that $m_d = 1$, and let $\phi \in
 C_{\perio}(\m)$. Since $\phi$ can be extended periodically in the
 direction of $\e_d$ with period 1, it can be viewed as an element
 of $C_{\perio}(\m')$. Therefore $\# C_{\perio}(\m) = \#
 C_{\perio}(\m')$ and moreover, $Z_{\perio}(\m,\v) =
 Z_{\perio}(\m',\v')$ ($v_d$ does not matter since no dimer lies in
 the direction of $\e_d$). We have $\tr B(\m',\v) = \sum_{S
 \subseteq \an{\m'}} b_{SS}(\v)$. Only the term $S = \emptyset$
 contributes to the sum, and for this term we have $U = \an{\m'}$
 and $b_{\emptyset\emptyset} = Z_{\perio}(\m',\v') =
 Z_{\perio}(\m,\v)$. Hence $\tr B(\m',\v) = Z_{\perio}(\m,\v)$. Now
 assume that $m_d > 1$, and consider a closed walk
 $S_1,S_2,\ldots,S_{m_d},S_1$ of length $m_d$ in $G(\m')$. For each
 $\p' \in S_q$ place a dimer in the direction of $\e_d$ occupying
 the sites $(\p',q)$ and $(\p',q+1)$ (with $m_d + 1$ wrapping around
 to $1$). We want to extend these dimers to a monomer-dimer tiling
 of $T(\m') \times T(m_d) = T(\m)$, i.e., to a member of
 $C_{\perio}(\m)$, by monomers and by dimers not in the direction of
 $\e_d$, i.e., lying within the levels $1,\ldots,m_d$. The weighted
 number of choices of such monomers and dimers to fill the remainder
 of level $q$ is given by $\tilde{b}_{s_{q-1}S_q}(\v')$, and
 together with the weight of the dimers in the direction of $\e_d$
 intersecting level $q$ it becomes $b_{S_{q-1}S_q}(\v)$. Therefore
 the weighted number of extensions to a member of $C_{\perio}(\m)$,
 i.e., the corresponding term of $Z_{\perio}(\m,\v)$, is
 $b_{S_{1}S_{2}}(\v) b_{S_{2}S_{3}}(\v) \cdots b_{S_{m_d -
 1}S_{m_d}}(\v) b_{S_{m_d}S_{1}}(\v)$. Conversely, each term of
 $Z_{\perio}(\m,\v)$ is obtained in this way. Hence
 $Z_{\perio}(\m,\v)$ is the sum of all the products of the above
 form, namely $\tr B(\m',\v)^{m_d}$.

 To prove (f), we note that $A(\m',\v)$ is irreducible, since
 whenever $S \cap T = \emptyset$, $U$ can be tiled by monomers and
 therefore each subset of $\an{\m'}$ is adjacent to $\emptyset$ in
 $G(\m')$. Furthermore, $A(\m',\v)$ is primitive since the graph has
 a cycle of length 1 from $\emptyset$ to $\emptyset$. Since
 $A(\m',\v) \leq B(\m',\v) \leq P(\m',\v) \leq C(\m',\v)$, it
 follows that $B(\m',\v)$, $P(\m'.\v)$ and $C(\m',\v)$ are also
 primitive. \end{Proof}

 For the next lemma, we define $C_0(\m)$ as the set of colorings
 of $\an{\m}$ corresponding to its monomer-dimer tilings (so no dimer
 protrudes out of $\an{\m}$), and the corresponding weighted sum
 \[Z_0(\m,\v) = \sum_{\phi \in C_0(\m)} e^{\bc(\phi)\trans \u}
 \qquad \u\trans = (\v\trans,\v\trans,0).\]
 \begin{lemma}\label{albetgain}
  Let $2 \leq  d\in \N$ and $\m' \in \N^{d-1}, \v\in \R^d$.  Then
  \begin{align}
   &\lim_{m_d \to \infty} \frac{\log Z_0((\m',m_d),\v)}{m_d} =
   \log \alpha(\m',\v)\label{albetgaina}
   \\
   &\lim_{m_d \to \infty}
   \frac{\log Z_{\perio,\an{d-1}}((\m',m_d),\v)}{m_d}=
   \log \beta(\m',\v)\label{albetgainb}
   \\
   &\lim_{m_d \to \infty}
   \frac{\log Z_{\perio,\{1\}}((\m',m_d),\v)}{m_d}=
   \log \pi(\m',\v)\label{albetgainbprime}
   \\
   &\lim_{m_d \to \infty} \frac{\log Z((\m',m_d),\v)}{m_d}=
   \log \gamma(\m',\v)\label{albetgainc}
   \end{align}
  \end{lemma}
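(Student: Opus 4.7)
The plan is to express each of the four partition functions on the left-hand side as a bilinear form in a power of the associated transfer matrix ($A$, $B$, $P$, or $C$), and then extract its logarithmic growth rate via the Perron--Frobenius theorem for primitive nonnegative matrices.

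For (\ref{albetgainb}), (\ref{albetgainbprime}), and (\ref{albetgainc}) the required matrix identities are exactly those recorded in part (e) of Proposition~\ref{symprop}:
\[
\x(\v)^\trans B(\m',\v)^{m_d-2}\x(\v) = Z_{\perio,\an{d-1}}(\m,\v), \quad
\z(\v)^\trans P(\m',\v)^{m_d-2}\z(\v) = Z_{\perio,\{1\}}(\m,\v),
\]
\[
\y(\v)^\trans C(\m',\v)^{m_d-2}\y(\v) = Z(\m,\v).
\]
The analogous identity for (\ref{albetgaina}) is not explicitly in the proposition, so I would adapt the level-by-level transfer-matrix argument from the proof of part (b) of Proposition~\ref{symprop} to the case of a box that is closed (rather than periodic) in the direction $\e_d$. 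A monomer-dimer tiling of $\an{\m'}\times\an{m_d}$ corresponds to a walk $\emptyset = T_0, T_1, \ldots, T_{m_d-1}, T_{m_d} = \emptyset$ in $G(\m')$, where $T_q$ is the set of sites on level $q$ covered by a dimer going up to level $q+1$; the boundary conditions $T_0 = T_{m_d} = \emptyset$ enforce the absence of vertical protrusions at top and bottom. Carrying through the weight bookkeeping yields
\[
Z_0((\m',m_d),\v) = \bigl(A(\m',\v)^{m_d}\bigr)_{\emptyset\emptyset}.
\]

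By part (f) of Proposition~\ref{symprop} each matrix $M$ drawn from $A(\m',\v)$, $B(\m',\v)$, $P(\m',\v)$, $C(\m',\v)$ is primitive, and by construction each is symmetric. The Perron--Frobenius theorem then yields that $\rho(M)$ is a simple, strictly dominant eigenvalue with a strictly positive eigenvector $\u_M$, and hence $M^k/\rho(M)^k$ converges entrywise to a strictly positive rank-one matrix. The bilinear form vectors $\x(\v),\y(\v),\z(\v)$ all have strictly positive entries, since each component $\tilde{b}_{S\emptyset}(\v')$, $\tilde{c}_{S\emptyset}(\v')$, $\tilde{p}_{S\emptyset}(\v')$ is bounded below by $1$ via the all-monomer tiling of $\an{\m'}\setminus S$; for (\ref{albetgaina}) the relevant vector is $\e_\emptyset$, whose unique nonzero coordinate sits at the index $\emptyset$, where $\u_A$ is positive. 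In each case therefore the left-hand side is asymptotic to a strictly positive constant times $\rho(M)^{m_d}$, so taking logarithms and dividing by $m_d$ (absorbing the harmless offset from $m_d$ to $m_d-2$) yields the four claimed identities.

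The only genuinely new work is the derivation of $Z_0((\m',m_d),\v) = (A(\m',\v)^{m_d})_{\emptyset\emptyset}$ for (\ref{albetgaina}); this is a routine adaptation of the bookkeeping already carried out for part (b) of Proposition~\ref{symprop}, with the closed-in-$\e_d$ boundary conditions replacing the periodic ones. The remaining spectral step is completely standard once primitivity and symmetry are in hand, so I do not foresee a genuine obstacle.
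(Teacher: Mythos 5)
Your proposal is correct and follows the paper's overall strategy: represent each partition function via a power of the corresponding transfer matrix and extract the spectral radius. The variations are minor but worth noting. For~(\ref{albetgainb}), (\ref{albetgainbprime}), (\ref{albetgainc}) the paper uses exactly the bilinear identities from Proposition~\ref{symprop}(e), then applies the general fact $\rho(M) = \lim_{k\to\infty}(\w^\trans M^k\w)^{1/k}$ for nonnegative $M$ and positive $\w$ (Prop.~10.1 of \cite{Fr2}), which does not require primitivity; you instead invoke primitivity (supplied by part~(f)) to get $M^k/\rho(M)^k$ converging to a positive rank-one matrix, arriving at the same conclusion. For~(\ref{albetgaina}) the paper does \emph{not} derive the clean identity $Z_0((\m',m_d),\v) = (A(\m',\v)^{m_d})_{\emptyset\emptyset}$ that you propose; instead it sandwiches $Z_0$ between traces, using $Z_0((\m',m_d),\v)\le \tr A(\m',\v)^{m_d}$ from Proposition~\ref{symprop}(a), an explicit weight-preserving injection of tilings of $\an{\m'}\times T(m_d)$ into tilings of $\an{\m'}\times\an{m_d+1}$ giving $Z_0((\m',m_d+1),\v)\ge \tr A(\m',\v)^{m_d}$, and the characterization $\rho(M)=\limsup_k(\tr M^k)^{1/k}$. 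Your identity $Z_0=(A^{m_d})_{\emptyset\emptyset}$ (with $S_0=S_{m_d}=\emptyset$ enforcing no vertical protrusion at the two ends) is a correct and arguably more direct route, but then you genuinely need primitivity since a single diagonal entry of $M^k$ need not grow at rate $\rho(M)^k$ without it. Both versions are sound; the paper's avoids primitivity at the cost of the extension trick, while yours is more uniform across the four cases once primitivity is in hand.
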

  \begin{Proof} From Part (a) of Proposition \ref{symprop}
  $Z_0((\m',m_d),\v) \leq \tr A(\m',\v)^{m_d}$, and therefore
  \begin{equation}\label{limsuplelog}
   \limsup_{m_d \to \infty} \frac{\log Z_0((\m',m_d),\v)}{m_d}
   \leq  \limsup_{m_d \to \infty} \frac{\log \tr A(\m',\v)^{m_d}}{m_d} =
   \log \alpha(\m',\v).
  \end{equation}
  The equality in (\ref{limsuplelog}) follows from a
  characterization of $\rho(M)$ for a square matrix $M \geq 0$,
  namely $\rho(M) = \limsup_{k \to \infty} (\tr
  M^k)^{\frac{1}{k}}$ (see for example Proposition 10.3 of
  \cite{Fr2}). Since $-\log Z_0(\m',m_d)$ is subadditive in $m_d$,
  the first $\limsup$ in (\ref{limsuplelog}) can be replaced by a
  $\lim$. In order to prove the reverse inequality and thus
  (\ref{albetgaina}), observe that each monomer-dimer tiling of
  $\an{\m'} \times T(m_d)$ extends to a monomer-dimer tiling in
  $C_0(\m',m_d + 1)$ having the same weight (replace each dimer
  occupying $(\m',1)$ and $(\m',m_d)$ by a monomer occupying
  $(\m',1)$ and a dimer occupying $(\m',m_d)$ and $(\m',m_d + 1)$,
  and tile the rest with monomers). Hence $Z_0((\m',m_d + 1),\v)
  \geq \tr A(\m',\v)^{m_d}$ by Part (a) of Proposition
  \ref{symprop}. Therefore, since $-\log Z_0((\m',m_d),\v)$ is
  subadditive in $m_d$ and thus the limits below exist, we obtain
  \begin{multline*}
   \lim_{m_d \to \infty} \frac{\log Z_0((\m',m_d),\v)}{m_d}
   = \lim_{m_d \to \infty} \frac{\log Z_0((\m',m_d + 1),\v)}{m_d}
   \\
   \geq \limsup_{m_d \to \infty} \frac{\log \tr A(\m',\v)^{m_d}}{m_d}
   = \log \alpha(\m',\v).
  \end{multline*}

  To prove (\ref{albetgainb}), (\ref{albetgainbprime}),
  (\ref{albetgainc}), we use the fact mentioned in the proof of
  Proposition~\ref{onedimpressure} that if $M \geq 0$ and $\w$ is
  a column vector with positive entries, then $\rho(M) = \lim_{k
  \to \infty} (\w^{\trans} M^k \w)^{\frac{1}{k}}$. Applying this
  to $M = B(\m',\v), P(\m',\v), C(\m',\v)$ and using Part (e) of
  Proposition \ref{symprop} with $\w = \x(\v),\z(\v),\y(\v)$
  defined there proves (\ref{albetgainb}), \ref{albetgainbprime}),
  (\ref{albetgainc}). \end{Proof}

  Now we introduce the following notation. For $\m \in \N^d$ and
  $k \in \an{d}$,

  \noindent
  $\m^{\sim k} := (m_1,\ldots,m_{k-1},
  m_{k+1},\ldots,m_d)\in \N^{d-1}$. As special cases we have the
  previous notation $\m' = \m^{\sim d}$ and $\m^- = \m^{\sim 1}$.
  For $\v=(v_1,\ldots,v_d)\trans\in\R^d$ we use the notation
  $\v^k:=(v_1,\ldots,v_{k-1},v_{k+1},\ldots,v_d,v_k)\trans$.
  Note that $\v^d=\v$.
  Part (b) of Proposition \ref{symprop} implies
  \begin{equation}\label{permid}
  Z_{\perio}(\m,\v) = \tr B(\m',\v)^{m_d} = \tr B(\m^{\sim k},\v^k)^{m_k}.
  \end{equation}
  \begin{prop}\label{perest}  Let $\m \in \N^d, \v\in\R^d$,
  and assume that $m_d$ is even.  Then each
  $k \in \an{d-1}$ satisfies
  \begin{equation}
  \frac{\log \beta(\m^{\sim d},\v)}{\vol(\m^{\sim d})} \leq
  \frac{\log 2}{m_k} + \frac{\log \beta(\m^{\sim k},\v^k)}{{\vol(\m^{\sim k})}}
  \label{perest1}.
  \end{equation}
  \end{prop}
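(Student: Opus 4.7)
The plan is to bound $Z_\perio(\m,\v)$ from below using the trace formula in the $d$-direction and from above using the trace formula in the $k$-direction, exploiting the symmetry of $B$ established in Proposition~\ref{symprop}. Specifically, identity~(\ref{permid}) gives the two expressions
\[Z_\perio(\m,\v) = \tr B(\m^{\sim d},\v)^{m_d} = \tr B(\m^{\sim k},\v^k)^{m_k},\]
and the strategy is to sandwich these two quantities between powers of the respective spectral radii.

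For the lower bound, I would use that $B(\m^{\sim d},\v)$ is a real symmetric matrix, so all its eigenvalues $\lambda_1,\ldots,\lambda_N$ are real. Because $m_d$ is assumed to be even, every $\lambda_i^{m_d}$ is nonnegative, and since $\beta(\m^{\sim d},\v)$ is one of the $|\lambda_i|$ (Perron--Frobenius applied to the nonnegative matrix $B$), we get $\tr B(\m^{\sim d},\v)^{m_d} \geq \beta(\m^{\sim d},\v)^{m_d}$. This is exactly where the hypothesis on $m_d$ gets used.

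For the upper bound, I would observe that the matrix $B(\m^{\sim k},\v^k)$ has rows and columns indexed by subsets of $\an{\m^{\sim k}}$, so its size is $N_k := 2^{\vol(\m^{\sim k})}$. Since its eigenvalues $\mu_i$ satisfy $|\mu_i| \leq \beta(\m^{\sim k},\v^k)$, the trivial estimate $\tr B(\m^{\sim k},\v^k)^{m_k} = \sum_i \mu_i^{m_k} \leq \sum_i |\mu_i|^{m_k} \leq N_k \, \beta(\m^{\sim k},\v^k)^{m_k}$ gives the required upper bound.

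Combining the two bounds yields $\beta(\m^{\sim d},\v)^{m_d} \leq 2^{\vol(\m^{\sim k})} \beta(\m^{\sim k},\v^k)^{m_k}$. Taking logarithms and dividing by $\vol(\m) = m_d \vol(\m^{\sim d}) = m_k \vol(\m^{\sim k})$ gives
\[\frac{\log \beta(\m^{\sim d},\v)}{\vol(\m^{\sim d})} \leq \frac{\vol(\m^{\sim k})}{\vol(\m)}\log 2 + \frac{\log \beta(\m^{\sim k},\v^k)}{\vol(\m^{\sim k})},\]
and the first fraction simplifies to $1/m_k$, yielding (\ref{perest1}). The only subtle point is recognizing that the parity hypothesis on $m_d$ is needed precisely to turn the trace into a lower bound for the $m_d$th power of the spectral radius; everything else is a routine application of Perron--Frobenius and the two expressions for $Z_\perio(\m,\v)$ from Proposition~\ref{symprop}.
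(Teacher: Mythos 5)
Your proposal is correct and follows essentially the same route as the paper: sandwich $Z_\perio(\m,\v)$ via the identity $\tr B(\m^{\sim d},\v)^{m_d} = \tr B(\m^{\sim k},\v^k)^{m_k}$, use the evenness of $m_d$ and the realness of the eigenvalues of the symmetric matrix $B(\m^{\sim d},\v)$ to get $\tr B(\m^{\sim d},\v)^{m_d} \geq \beta(\m^{\sim d},\v)^{m_d}$, bound the other trace by $2^{\vol(\m^{\sim k})}\beta(\m^{\sim k},\v^k)^{m_k}$, and divide logarithms by $\vol(\m)$. No gaps.
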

  \begin{Proof} We have
  \begin{equation*}
   \beta(\m^{\sim d},\v)^{m_d} \leq \tr B(\m^{\sim d},\v)^{m_d}
   = \tr B(\m^{\sim k},\v^k)^{m_k}
   \leq 2^{\vol(\m^{\sim k})} \beta(\m^{\sim k},\v^k)^{m_k}.
  \end{equation*}
  The first inequality above follows since $\beta(\m^{\sim d},\v)$ is
  one of the eigenvalues of $B(\m^{\sim d},\v)$, which are all real,
  and $m_d$ is even; the next equality from (\ref{permid});
  and the last inequality from the fact that
  $B(\m^{\sim k},\v^k)$ has $2^{\vol(\m^{\sim k})}$ eigenvalues, all
  real, whose moduli are at most $\beta(\m^{\sim k},\v^k)$.
  Taking logarithms and dividing by $\vol(\m)$, we deduce
  (\ref{perest1}).
  \end{Proof}
  We define
  \begin{align}
   \overline{P}_{d-1}(m_1,\v) &:= \lim_{\m^{-} \to \infty}
   \frac{\log Z_{\perio,\{1\}}((m_1,\m^{-}),\v)}
   {\vol(\m^{-})}, \qquad m_1 \in \N \label{hbardef1}
   \\
   \overline{P}_{d-1}(0,\v) &:= \log 2 \label{hbardef2}.
  \end{align}
  Notice that for $m_1 \in \N$, $\overline{P}_{d-1}(m_1,\v)$ is
  the same as $\overline{P}_{\Gam}(m_1,\u)$ defined in
  (\ref{defbarh}), where $\Gam$ is given by (\ref{encoding}). For
  this reason the limit $\overline{P}_{d-1}(m_1,\v)$ exists. The
  following theorem is an analog of Theorem \ref{perulb} and
  (\ref{entdef}).
  \begin{theo}\label{main}  Let $2 \leq  d \in \N$, $p, r \in \N$,
  $q \in \Z_+$, $\v \in \R^d$.  Then
   \begin{equation}
    \frac{\overline{P}_{d-1}(2r,\v)}{2r} \geq  P_d(\v) \geq
    \frac{\overline{P}_{d-1}(p+2q,\v)-\overline{P}_{d-1}(2q,\v)}{p}.
    \label{imlb}
   \end{equation}
   Let $\m'=(m_1,\ldots,m_{d-1}) \in \N^{d-1}$ and assume that $m_1,\ldots,m_{d-1}$
   are even.  Then
   \begin{equation}
    P_d(\v) \leq \frac{\log \beta(\m',\v)}{\vol(\m')}. \label{imub1}
   \end{equation}
  \end{theo}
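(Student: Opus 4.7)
The plan is to run the argument of Theorem~\ref{perulb} (and its variant giving~(\ref{entdef})) verbatim, substituting for the symmetric adjacency matrix $D_\Gam(\m^-,\u)$ the symmetric transfer matrices from Proposition~\ref{symprop}. This is the ``hidden symmetry'' mechanism set up earlier in the section to compensate for the fact that, in the monomer-dimer encoding~(\ref{encoding}), the digraphs $\Gamma_k$ themselves are not symmetric.

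I would first tackle the bounds in~(\ref{imlb}). Writing $\m^-=(m_2,\ldots,m_d)$ and $\v^1=(v_2,\ldots,v_d,v_1)$, Lemma~\ref{invhu} (isotropy) applied to Proposition~\ref{symprop}(d) gives $\tr C(\m^-,\v^1)^{m_1}=Z_{\perio,\{1\}}((m_1,\m^-),\v)$, with $C(\m^-,\v^1)$ a symmetric matrix of spectral radius $\gamma(\m^-,\v^1)$. The permuted form of Lemma~\ref{albetgain}(\ref{albetgainc}) combined with the subadditive bound~(\ref{ub1}) then yields the exact analogue of~(\ref{enteq1}), namely $P_d(\v)=\lim_{\m^-\to\infty}\log\gamma(\m^-,\v^1)/\vol(\m^-)$. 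Symmetry of $C(\m^-,\v^1)$ makes its eigenvalues real, so $\tr C^{2r}\ge\gamma^{2r}$ and $\tr C^{p+2q}\le\gamma^p\tr C^{2q}$; taking logs, dividing by $\vol(\m^-)$, and sending $\m^-\to\infty$, these two trace inequalities translate via~(\ref{hbardef1}) into the upper and lower bounds of~(\ref{imlb}), just as in Theorem~\ref{perulb}.

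For~(\ref{imub1}) I would follow the argument that gave~(\ref{entdef}), but now using the symmetric matrix $B(\m',\v)$ in place of $D_\Gam(\m^-,\u)$. By Proposition~\ref{symprop}(e), $Z_{\perio,\an{d-1}}(\m,\v)=\x(\v)^\trans B(\m',\v)^{m_d-2}\x(\v)\le\|\x(\v)\|^2\,\beta(\m',\v)^{m_d-2}$ (Cauchy-Schwarz and $\|B(\m',\v)\|=\beta(\m',\v)$). The evenness of $m_1,\ldots,m_{d-1}$ is precisely what is needed for the reflection argument of \cite[Thm.~3.4]{FP}, which lets one bound $Z(\m,\v)$ by $Z_{\perio,\an{d-1}}(\m,\v)$ up to a multiplicative factor that is absorbed after dividing by $\vol(\m)$. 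Dividing by $\vol(\m)=m_d\vol(\m')$, sending $m_d\to\infty$, and using~(\ref{ub1}) to identify the limit with $P_d(\v)$ finishes~(\ref{imub1}).

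The main obstacle I anticipate is the careful adaptation of the reflection/``hidden symmetry'' argument behind~(\ref{entdef}) to the monomer-dimer model, where the half-weighted protruding dimers built into the definition of $Z(\m,\v)$ interact nontrivially with reflections across the (even-length) directions $\e_1,\ldots,\e_{d-1}$; every other ingredient is a direct transplant of Theorem~\ref{perulb}'s proof using the symmetric transfer matrices $C(\m^-,\v^1)$ and $B(\m',\v)$ from Proposition~\ref{symprop}, and so should require no new ideas beyond bookkeeping.
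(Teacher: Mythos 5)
Your treatment of the two bounds in (\ref{imlb}) is essentially the paper's own proof: the symmetric matrix $C(\m^-,\v^1)$, the trace identity from Proposition~\ref{symprop}(d) combined with Lemma~\ref{invhu}, the two trace inequalities $\tr C^{2r}\ge \gamma^{2r}$ and $\tr C^{p+2q}\le \gamma^{p}\tr C^{2q}$, and the identification $P_d(\v)=\lim_{\m^-\to\infty}\log\gamma(\m^-,\v^1)/\vol(\m^-)$. The only detail you omit is the case $q=0$, where the denominator must be replaced by $\tr I=2^{\vol(\m^-)}$ and one invokes the convention (\ref{hbardef2}).

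Your route to (\ref{imub1}), however, contains a genuine gap that cannot be repaired by bookkeeping. You need an inequality $Z(\m,\v)\le c\, Z_{\perio,\an{d-1}}(\m,\v)$ with $\log c=o(m_d)$ for \emph{fixed} $\m'$ as $m_d\to\infty$. By Lemma~\ref{albetgain}, $\frac{1}{m_d}\log Z((\m',m_d),\v)\to\log\gamma(\m',\v)$ while $\frac{1}{m_d}\log Z_{\perio,\an{d-1}}((\m',m_d),\v)\to\log\beta(\m',\v)$, so your comparison would force $\gamma(\m',\v)\le\beta(\m',\v)$. But $B(\m',\v)\le C(\m',\v)$ entrywise with $B\ne C$ (already the $(\emptyset,\emptyset)$ entries differ: covers of $\an{\m'}$ containing a protruding dimer contribute to $\tilde{c}_{\emptyset\emptyset}$ but not to $\tilde{b}_{\emptyset\emptyset}$), and both matrices are primitive by Proposition~\ref{symprop}(f), so Perron--Frobenius gives the strict inequality $\beta(\m',\v)<\gamma(\m',\v)$. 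Equivalently: with the base $\m'$ held fixed, the best one can extract from (\ref{ub1}) by sending $m_d\to\infty$ is $P_d(\v)\le\log\gamma(\m',\v)/\vol(\m')$, which is strictly weaker than (\ref{imub1}); and no reflection of box configurations into configurations periodic in the directions $\e_1,\ldots,\e_{d-1}$ can help at the same base size, since reflection doubles the base and would at best produce $\beta(2\m',\v)$. You have also misplaced the role of the parity hypothesis: in the paper, evenness of $m_1,\ldots,m_{d-1}$ enters through Proposition~\ref{perest}, where symmetry of $B$ makes its eigenvalues real so that $\beta(\m^{\sim d},\v)^{m_d}\le\tr B(\m^{\sim d},\v)^{m_d}$ for even exponents, and the trace identity (\ref{permid}) (the same torus sum $Z_{\perio}(\m,\v)$ computed by transfer matrices in different directions) converts this into a comparison of spectral radii over \emph{different} bases. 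Iterating this $d-1$ times bounds $\log\beta(\s,\cdot)/\vol(\s)$, for an arbitrary base $\s$ and suitably permuted weights (Lemma~\ref{invhu}), by $\log\beta(\m',\v)/\vol(\m')$ up to an error $\sum_{j}\log 2/s_j$; letting $\s\to\infty$ and invoking (\ref{stcharhmd}) --- whose proof uses $Z_0(\m+2\1,\v)\ge Z(\m,\v)$ and the sandwich $\alpha\le\beta\le\gamma$, i.e.\ the base tends to infinity and is never fixed --- yields (\ref{imub1}). This cross-direction comparison is the missing idea in your proposal.
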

 \begin{Proof}  Since $\#C_0(\m + 2\1)\geq \#C(\m)$ as explained in
 \cite{FP}, it follows that $Z_0(\m + 2\1,\v) \geq Z(\m,\v)$.
 Hence, as in \cite[formula~(4.6) and (6.19)]{FP} and by
 Lamma~\ref{albetgain},
  \begin{equation}\label{stcharhmd}
   P_d(\v) = \lim_{\m' \to \infty} \frac{\log \alpha(\m',\v)}{\vol(\m')}
       = \lim_{\m' \to \infty} \frac{\log \beta(\m',\v)}{\vol(\m')}
       = \lim_{\m' \to \infty} \frac{\log \gamma(\m',\v)}{\vol(\m')}.
  \end{equation}
  First we prove (\ref{imub1}).
  Let $\m' = (m_1,\ldots,m_{d-1}) \in \N^{d-1}$, $m_1,\ldots,m_{d-1}$
  even, and let $\s = (s_1,\ldots,s_{d-1}) \in \N^{d-1}$ be arbitrary.  Set
  \begin{multline*}
   \m_1 = (s_1,\ldots,s_{d-1},m_1),\quad \m_2 = (s_2,\ldots,s_{d},m_1,m_2),\quad \ldots,
   \\
   \m_{d-1} = (s_{d},m_1,\ldots,m_{d-1}).
  \end{multline*}
  Note that (\ref{perest1}) with $k = 1$ states that
  \[\frac{\log \beta(\m',\v)}{\vol(\m')} \leq
  \frac{\log 2}{m_1} + \frac{\log \beta(\m^-,\v^1)}{\vol(\m^{-})}.\]
  Using it $d-1$ times along with $\s = \m_1'$, $\m_1^- = \m_2'$,
  $\m_2^- =  \m_3'$, etc., we obtain
  \begin{multline*}
   \frac{\log \beta(\s,(v_2,v_3,\ldots,v_d,v_1)\trans)}{\vol(\s)} \leq
   \frac{\log 2}{s_1} + \frac{\log \beta(\m_1^-,(v_3,\ldots,v_d,v_1,v_2)\trans)}{\vol(\m_1^-)}
   \leq \\
   \frac{\log 2}{s_1} + \frac{\log 2}{s_2} +
   \frac{\log\beta(\m_2^-), (v_4,\ldots,v_d,v_1,v_2,v_3)\trans}{\vol(\m_2^-)}
   \leq \cdots
   \\
   \leq \sum_{j=1}^{d-1}\frac{\log 2}{s_j} + \frac{\log \beta(\m',\v)}{\vol(\m')}.
  \end{multline*}
  Letting $\s \to\infty$ and using (\ref{stcharhmd}) and Lemma \ref{invhu} for the
  left-hand side, we deduce (\ref{imub1}).

  We now demonstrate the lower bound in (\ref{imlb}). Let $\m^-
  \in \N^{d-1}$, $s \in \N$, $q\in \Z_+$. Assume first that $q \in
  \N$. Since $\gamma(\m^-,\v^1) = \rho(C(\m^-,\v^1))$ and
  $C(\m^-,\v^1)$ is symmetric, it follows as in the arguments for
  (\ref{rhogeqtraceratio}) and by the analog of (\ref{permid}) for
  $C(\m^-,\v^1)$ that
  \begin{equation}\label{lbhd}
   \gamma(\m^-,\v^1)^s \geq \frac{\tr C(\m^-,\v^1)^{s+2q}}{\tr C(\m^-,\v^1)^{2q}} =
   \frac{Z_{\perio,\{1\}}(s+2q,\m^-,\v)}{Z_{\perio,\{1\}}(2q,\m^-),\v)}.
  \end{equation}
  Taking logarithms, dividing by $\vol(\m^-)$, letting $\m^- \to
  \infty$, and using (\ref{stcharhmd}), Lemma \ref{invhu} and the definition of
  $\overline{P}_{d-1}(m_1,\v)$, we deduce the lower bound in (\ref{imlb}) for the
  case $q \in \N$. If $q = 0$, we have to replace the denominators
  in (\ref{lbhd}) by $\tr I = 2^{\vol(\m^-)}$, and the lower
  bound in (\ref{imlb}) is verified by (\ref{hbardef2}).

  We now prove the upper bound of (\ref{imlb}). Let
  $\v^1=(v_2,\ldots,v_{d-1},v_1)\trans$. For each $\m' \in \N^{d-1}$
  we have
  \[\gamma(\m',\v^1)^{2r} \leq \tr C(\m',\v^1)^{2r} = Z_{\perio,\{d\}}
  ((\m',2r),\v^1) = Z_{\perio,\{1\}}((2r,\m'),\v),\]
  where the inequality above is true because the eigenvalues of the
  symmetric matrix $C(\m',\v^1)$ are real and $\gamma(\m',\v^1)$ is
  one of them, the first equality follows from Part (d) of
  Proposition \ref{symprop}, and the last equality from
  (\ref{invwu}). Therefore
  \[\frac{\log \gamma(\m',\v^1)}{\vol(\m')} \leq
  \frac{\log Z_{\perio,\{1\}}((2r,\m'),\v)}{2r\vol(\m')},\]
  and letting $\m' \to \infty$, we deduce the upper bound of
  (\ref{imlb}) by (\ref{stcharhmd}), Lemma \ref{invhu} and the
  definition of $\overline{P}_{d-1}(m_1,\v)$.  \end{Proof}
 In view of (\ref{stcharhmd}) we assume that
  $\frac{\log
  \beta(\m',\v)}{\vol(\m')}$ is a good approximation to
  $P_d(\v)$, and its partial derivative
  $\frac{1}{\vol(\m')\beta(\m',\v)} \frac{\partial \beta(\m',\v)}{\partial v_i}$
  is a good approximation to $q_i := \frac{\partial P_d(\v)}{\partial v_i}$, the density of dimers in
  the direction of $\e_i$.

 \section{Numerical computations for the monomer-dimer model in
 $\Z^2$}\label{sec:graphs}

 In this section we explain in detail our computations for two
 dimensional pressure $P_2(\v)=P_2(v_1,v_2)$ along the lines outlined in
 Sections~\ref{sec:weightedmonomerdimertiligs}-\ref{sec:symencod}.  Our computations
 based on our ability to compute the spectral radius of the
 transfer matrix corresponding to the monomer dimer tiling of
 the torus $(\Z/m)\times \Z$.
 This is a two dimensional integer
 lattice corresponding to a circle of circumference $m$ times
 the real line. In the notation of Section~\ref{sec:symencod}
 this lattice is given by $T(m)\times \Z$.  This
 transfer matrix is denoted by $B(m,\v)$.  Let
 $x=e^{v_1},y=e^{v_2}$.  The the weight of the dimer in
 direction $X$, i.e. the horizontal dimer that lies entirely on the circle
 $T(m)$, is $x^2$.  The weight of the dimer in the direction
 $Y$, i.e. the vertical dimer that lies on two adjacent circles, is
 $y^2$.  The matrix $B(m,\v)$ is of order $2^m$, corresponding
 to all subsets of $\an{m}$.  Denote by $2^{\an{m}}$ the set of
 all subsets of $\an{m}$.  For $S\in 2^{\an{m}}$ denote by
 $\#S$ the cardinality of the set $S$.
 Then $B(m,\v)=[y^{\#S+\#T}f(x,S,T)]_{S,T\in 2^{\an{m}}}$.
 Here $f(x,S,T)=0$ if $S\cap T\ne \emptyset$.  For $S\cap
 T=\emptyset$ the function $f(x,S,T)$ is a polynomial in $x$,
 which is the sum of the following monomials.  Consider the set
 $F:=\an{m}\backslash{S\cup T}$ viewed as a subset of the torus
 $T(m)$.  Let $\cF$ be a tiling of $F$ with monomers and
 dimers.  A dimer $[i,i+1]$, occupying spaces $i,i+1$, can be in
 $\cF$, if and only if $i$ and $i+1$ are in $F$, where $m$ and
 and $m+1\equiv 1$ are adjacent.  To each tiling $\cF$ corresponds a
 monomial $x^{2l}$, where $l$ is the number of dimers in the tiling
 $\cF$ of $F$.  Then $f(x,S,T)$ is the sum of all monomials
 corresponding to all tilings of $F$.  Note that if $S\cap
 T=\emptyset$ and $S\cup T=\an{m}$ then $f(x,S,T)=1$.
 Furthermore $f(x,S,T)=f(x,T,S)$.  Hence the transfer matrix
 $B(m,\v)$ is a nonnegative symmetric matrix.  The quantity
 $\bar P_1(m,\v)$, defined by (\ref{hbardef1}), is given as the
 logarithm of the spectral radius of $B(m_1,\v)$.
 In numerical computations, we view $\frac{\bar P_1(m,\v)}{m}$
 as an approximation to the pressure $P_2(\v)$.  More
 precisely, one has the upper and lower bounds on the pressure
 which are given by (\ref{imlb}).

 As in \cite{FP}, the matrix $B(m,\v)$ has an automorphism
 group of order $2m$, obtained by rotating the discrete torus
 $T(m)$ and reflecting it.  Thus, to compute the spectral
 radius of $B(m,\v)$, it is enough to compute the spectral
 radius of the nonnegative symmetric matrix $\tilde B(m,\v)$
 whose order is slightly higher than $\frac{2^{m-1}}{m}$.
 See for details \cite[Section 7]{FP}.  \cite[Table 1, page
 517]{FP} gives the dimensions of $\tilde B(m,\v)$ for
 $m=4,\ldots,17$.  We were able to carry out some computations
 on a desk top computer up to $m=17$.

 We first apply our techniques to examine the Baxter
 computations in \cite{Bax1}.  Baxter computes essentially the
 values of the pressure $\pres_2(v):=P_2(v,v)$ and the
 corresponding density of the dimers $p(v):=\frac{\d}{\d v}\pres_2(v)$.
 Recall that the corresponding density entropy $h_2(p(v))$ is
 given by $\pres_2(v)-vp(v)$ (\ref{stadforhdp}).
 Note the following correspondence between the variables in
 \cite{Bax1} and our variables given in
 Section~\ref{sec:weightedmonomerdimertiligs}:
 $$s=e^{v}, \quad \frac{\kappa}{s}=e^{-v+\pres_2(v)},\quad \rho=\frac{p}{2}.$$
 The case $s=v=\infty$ corresponds to the dimer tilings of
 $\Z^2$.  In this case $p=p(\infty)=1$ and $h_2(1)$ has a known
 closed formula due to Fisher \cite{Fis} and Kasteleyn
 \cite{Kas}
 $$h_2(1)=\frac{1}{\pi}\sum_{r=0}^{\infty}\frac{(-1)^r}{(2r+1)^2}=0.29156090\ldots
 .$$
 As in \cite{Bax1} we consider the following 18 values of $s$
 \begin{eqnarray*}
 s^{-1}=0.02,0.05,0.10,0.20,0.30,0.40,0.50,0.60,0.80,\\
 1.00,1.50,2.00,2.50,3.00,3.50,4.00, 4.50,5.00.
 \end{eqnarray*}

 We computed the upper and the lower bounds for $\pres_2(\log s)$ for the above values of $s$,
 using inequalities (\ref{imlb}) for $\v=(\log s,\log s)$ and $m=2,\ldots,17$.
 In these computations we observed that the sequence $\frac{\bar P_1(2r,(\log s,\log s))}{2r}$ is decreasing
 for $r=1,\ldots,8$.  So our upper bound was given by  $\frac{\bar P_1(16,(\log s,\log s))}{16}$ for all 18 values of $s$.
 The lower bound was given by $\frac{\bar P_1(16,(\log s,\log s))-\bar P_1(14,(\log s,\log s))}{2}$ for $s^{-1}=0.02,\ldots,0.3$
 and by  $\bar P_1(17,(\log s,\log s))-\bar P_1(16,(\log s,\log s))$ for other values of $s$.

 The values of Baxter for the pressure were all but two values between the upper and the lower bounds.  In the two
 exceptional values $s^{-1}=1.5,2.0$ Baxter's result were off by $1$ in the last 10th digit.
 As in Baxter computations, the difference between the upper and lower bounds grows bigger as the value of $s$ increases.
 That is, it is harder to compute the precise value of the pressure and its derivative in configurations where the density
 of dimers is high.  This points to the phase transition in the case where $\Z^2$ is tiled by dimers
 only \cite[p'133]{AuP}.  The pressure value for $s^{-1}=0.02$ computed by Baxter has $8$ values.  Our upper and lower bounds give 4 digits of
 precision of the pressure.  For the value $s=1.0$ our computations confirm the first 9 digits of 10 digit Baxter computation.
 (This value of the pressure is equal to the monomer-dimer entropy $h_2$ discussed in \cite{FP}.)
 For the values $s^{-1}=2.0,\ldots,5.0$ our computations gives at least 12 digits of the pressure.

 We also computed the approximate value of the dimer density $p(\log s)=\pres_2'(\log s)$ using
 the following two methods.  The first approximation was obtained by computing the exact derivative of $\frac{\bar P_1(m,(\log s,\log s))}{m}$
 for $m=2,\ldots,14$.  The second approximation was obtained by computing the ratio
 $\frac{\bar P_1(m,(\log (s+t),\log (s+t)))-\bar P_1(m,(\log s,\log s))}{m}$ for $t=10^{-5}$ and $m=2,\ldots,14$.
 It turned out that the values of the numerical
 derivatives for $m=14$ agrees with most values of Baxter computations up to 5 digits, while the values of the exact derivatives agrees only
 up 2 digits with Baxter computations.  Note that to compute the value of $h_2(p(\log s))$ we need the values of $\pres_2(\log s)$
 and $p(\log s)$ (\ref{stadforhdp}).

 We next computed the approximate values of the pressure $\pres_2((v_1,v_2))$ and its partial numerical derivatives
 for $18^2=324$ values.
 The $18$ values of $v_1$ and $v_2$ were chosen in the interval $(-1.61,4.)$.  (These values correspond to the $18$ values of $\log s$
 considered by Baxter.)
 For the lower bound and upper bounds we chose the values of
 \begin{equation}\label{lowupbdpres}
 \frac{\bar P_1(14,(v_1,v_2))-
 \bar P_1(12,(v_1,v_2))}{2} \quad  \textrm{and }\frac{\bar P_1(14,(v_1,v_2))}{14} \textrm{ respectively}.
 \end{equation}
 Follows below the graph of $\frac{\bar P_1(14,(v_1,v_2))}{14}$ and the approximate values of $\bar h_2((p_1,p_2))$, where
 $p_1,p_2$ are the densities of the dimers in the direction $x_1,x_2$ respectively.  The approximate values of $\bar h_2$
 obtained by using the formula
 \begin{eqnarray}\label{H2approx}
 \bar h_2((p_1,p_2))\approx\frac{\bar P_1(14,(v_1,v_2))}{14} -p_1v_1  - p_2v_2 ,\\
 p_1=  \frac{\bar P_1(14,(v_1+t,v_2))-\bar P_1(14,(v_1,v_2))}{14 t}, \\
 p_2= \frac{\bar P_1(14,(v_1,v_2+t))-\bar P_1(14,(v_1,v_2))}{14 t},
 \nonumber
 \end{eqnarray}
 In our computation $t=10^{-4}$.  For more detailed graphs with $42^2=1764$ points see

 \noindent
 http://www2.math.uic.edu/$\sim$friedlan/Pressure17Jun09.pdf

  \begin{figure}[h]
    \includegraphics*[width=8cm,height=8cm]{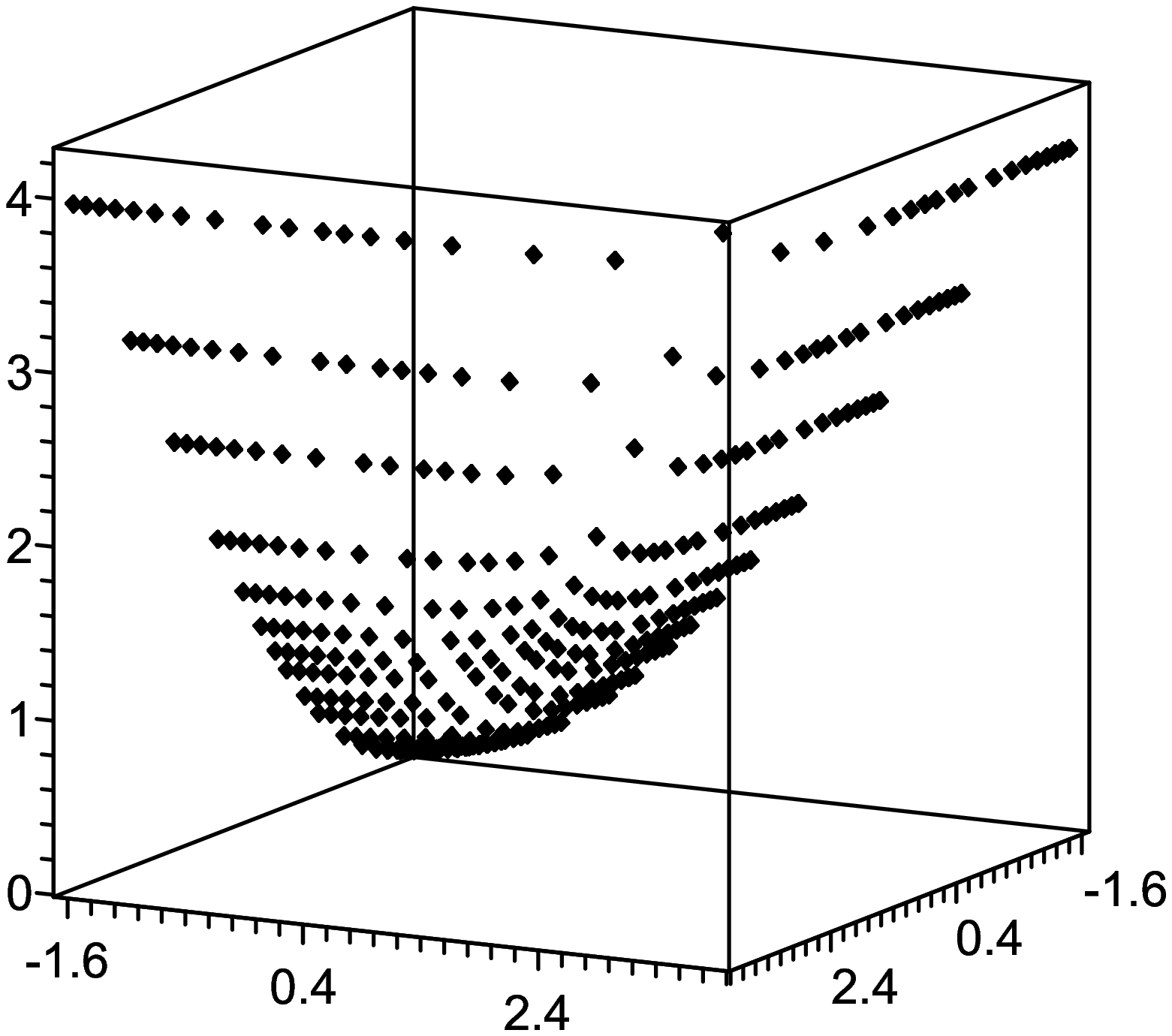}
    \includegraphics*[width=8cm,height=8cm]{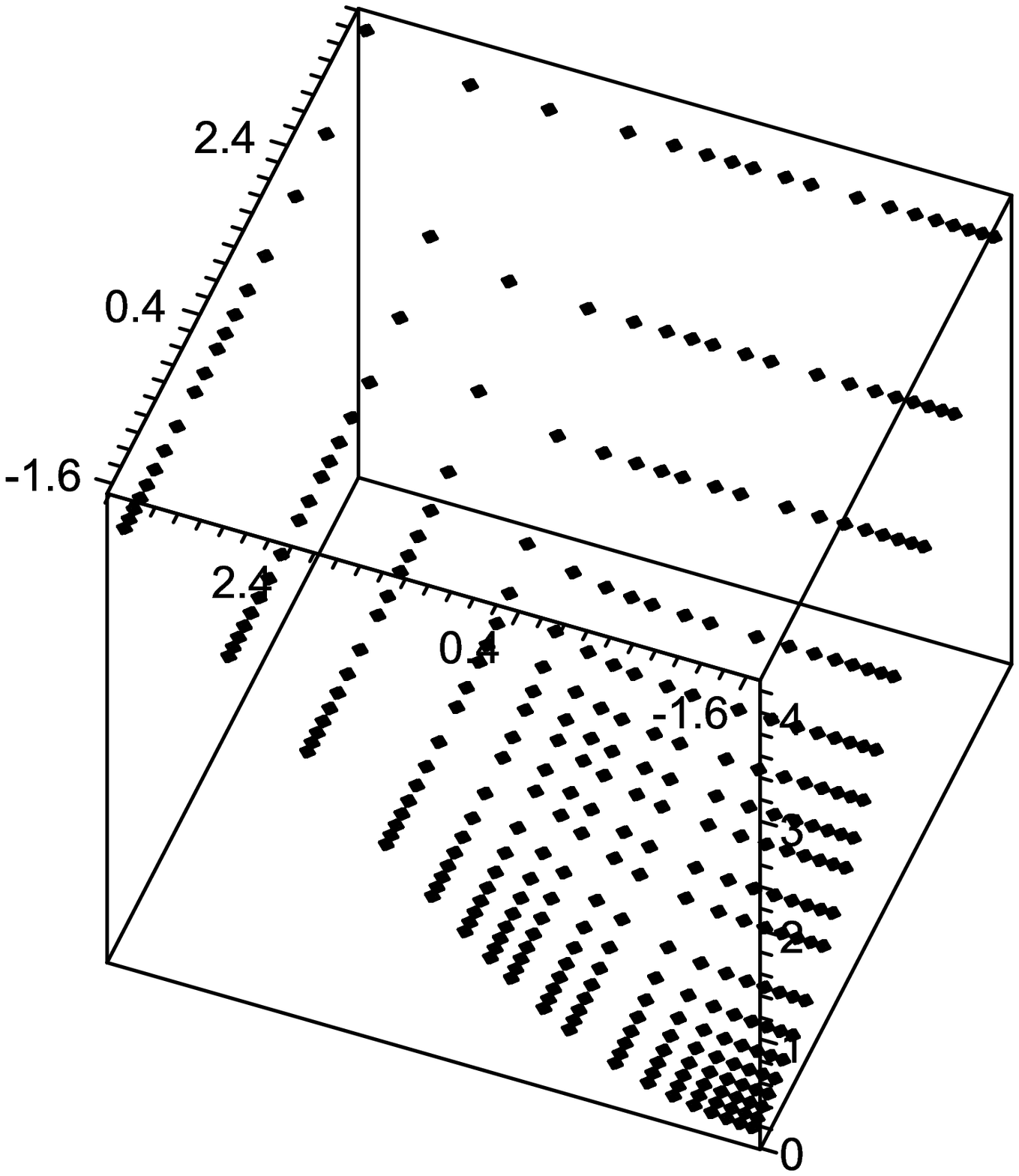}
    \caption{The graph of $\frac{\bar P_1(14,(v_1,v_2))}{14}$ for angles $\theta=28^{o},\varphi=78^o$ and
    $\theta=-159^o,\varphi=42^0$}
    \label{fig:Lpresu24Jun09}
     \end{figure}

  \begin{figure}[h]
    \includegraphics*[width=8cm,height=8cm]{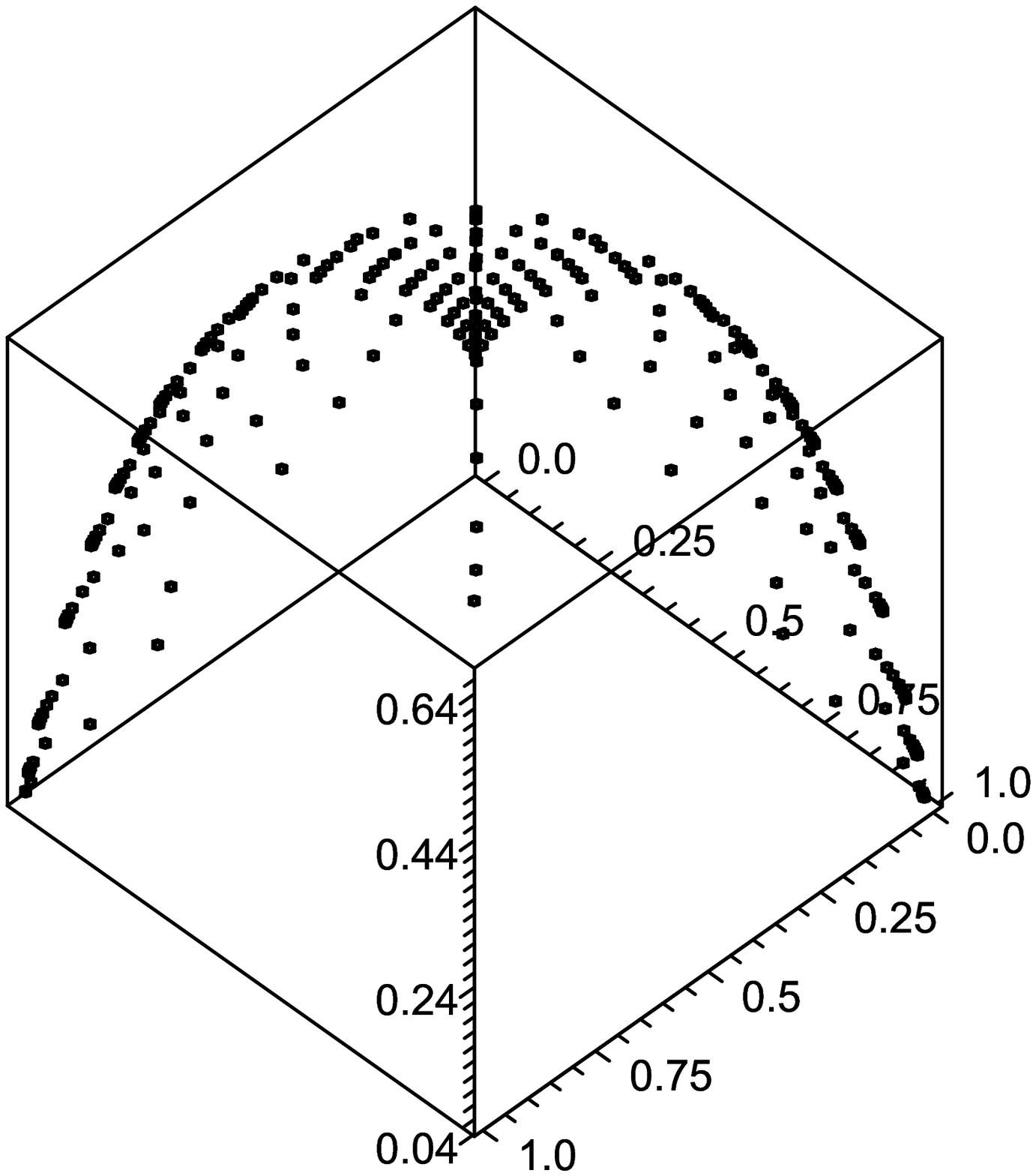}
    \includegraphics*[width=8cm,height=8cm]{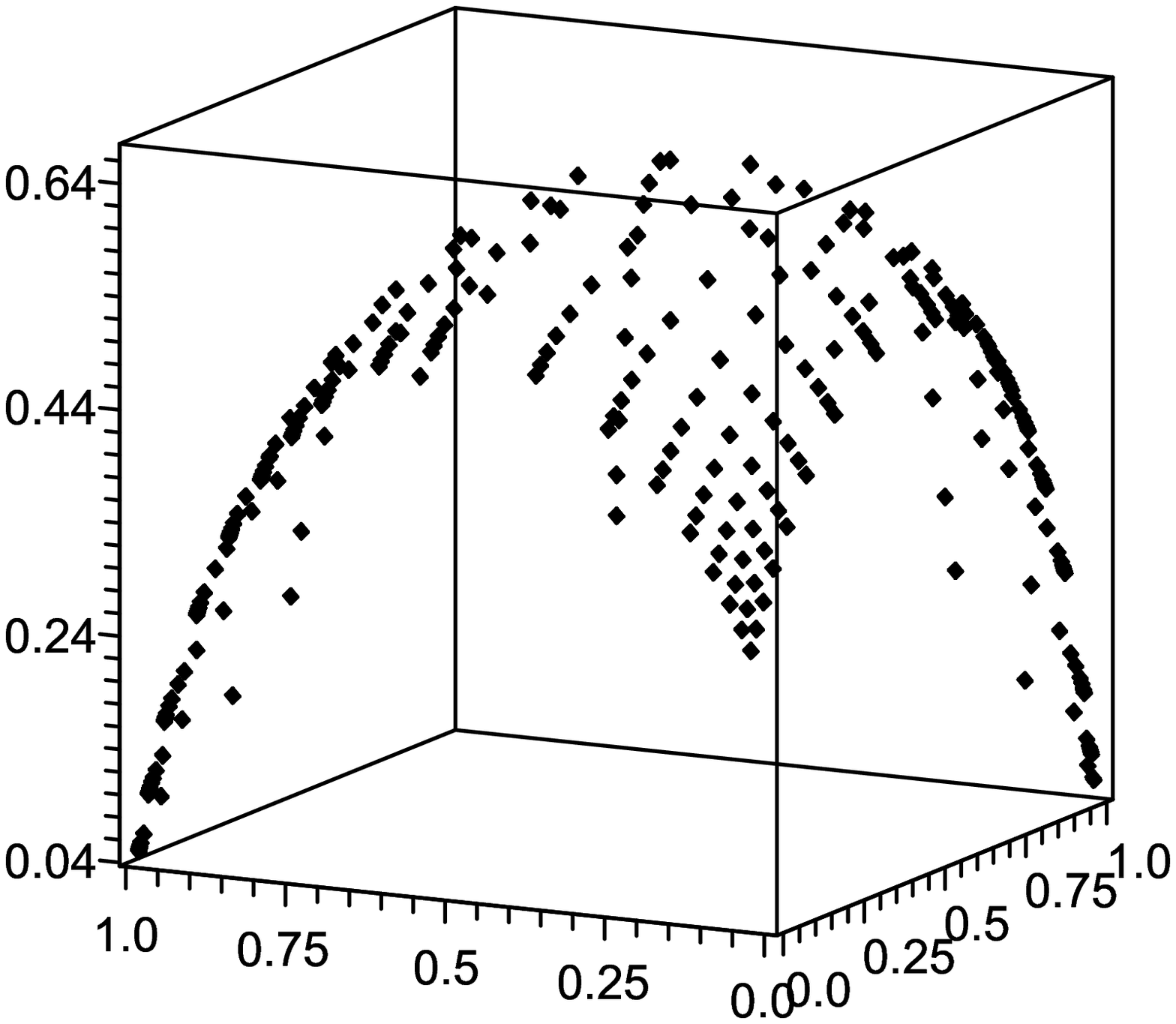}
    \caption{The graph of an approximation of $\bar h_2((p_1,p_2))$ for angles $\theta=45^{o},\varphi=45^o$ and $\theta=-153^o,\varphi=78^o$}
    \label{fig:Llambu24Jun09}
     \end{figure}
 The graph of the pressure $P_2((x_1,x_2))$  is convex and the graph of the density entropy $\bar h_2(x_1,x_2)$ is concave.
 Both graphs look  is symmetric with respect to the line $x_1=x_2$.  In reality this is not the case, since $\bar P_1(m,v_1,v_2)$
 is the pressure of an infinite torus with a basis $m$.  So in direction $x_1$ we have at most $\lfloor \frac{m}{2}\rfloor$ dimers,
 while in the direction $x_2$ we can have an infinite number of dimers.  For $m\ge 10$ the difference
 $\frac{\bar P_1(m,v_1,v_2)-\bar P_1(m,v_2,v_1)}{m}$ is less than $10^{-3}$, which explains the symmetry of our graphs.
 Note that in Figure 2 the densities $p_1,p_2$ satisfy the condition $p_1,p_2\in [0,1], p_1+p_2\in [0,1]$.
 The entropy $\bar h_2$ is in the interval $[0, 0.67]$.

 We also got similar graphs for the lower bound given in (\ref{lowupbdpres})
 and the corresponding analog of the approximation of  $\bar h_2((p_1,p_2))$ given by (\ref{H2approx}).
 These graphs were very similar to the graphs
 of  $\frac{\bar P_1(14,(v_1,v_2))}{14}$ and the approximation of  $\bar h_2((p_1,p_2))$ given by (\ref{H2approx}).

 \bibliographystyle{plain}
 
 \end{document}